\newcommand{\norm}[1]{\left\lVert#1\right\rVert}
\newcommand{\IR}{\mathbb{R}}
\newcommand{\IN}{\mathbb{N}}
\newcommand{\IC}{\mathbb{C}}
\newcommand{\VC}{\operatorname{VCdim}}
\newtheorem{theorem}{Theorem}[section]
\newtheorem{corollary}[theorem]{Corollary}
\newtheorem{lemma}[theorem]{Lemma}
\newtheorem{definition}[theorem]{Definition}
\theoremstyle{definition}
\newtheorem{example}[theorem]{Example}
\newtheorem{remark}[theorem]{Remark}
\numberwithin{equation}{section}
\DeclareMathOperator{\tr}{tr}
\renewenvironment{proof}{{\bfseries Proof:}}{\hfill $\square$}
\begin{document}

\title{Binary Classification with Classical Instances and Quantum Labels}
\author{Matthias C. Caro}   
\address{Technical University of Munich, Germany, Department of Mathematics\\
Munich Center for Quantum Science and Technology (MCQST), Munich, Germany}
\email{caro@ma.tum.de, \emph{ORCiD:} \href{https://orcid.org/0000-0001-9009-2372}{0000-0001-9009-2372}}
\date{\today}

\begin{abstract}
In classical statistical learning theory, one of the most well studied problems is that of binary classification. The information-theoretic sample complexity of this task is tightly characterized by the Vapnik-Chervonenkis (VC) dimension. A quantum analog of this task, with training data given as a quantum state has also been intensely studied and is now known to have the same sample complexity as its classical counterpart.\\
We propose a novel quantum version of the classical binary classification task by considering maps with classical input and quantum output and corresponding classical-quantum training data. We discuss learning strategies for the agnostic and for the realizable case and study their performance to obtain sample complexity upper bounds. Moreover, we provide sample complexity lower bounds which show that our upper bounds are essentially tight for pure output states. In particular, we see that the sample complexity is the same as in the classical binary classification task w.r.t.~its dependence on accuracy, confidence and the VC-dimension.
\end{abstract}

\maketitle


\section{Introduction}
The fields of machine learning and of quantum computation provide new ways of looking at computational problems and have seen a significant increase in academic as well as practical interest since their origins in the $1970$s and $1980$s. More recently, attention was directed to paths for combining ideas from these two fruitful research areas. This gave rise to new approaches under different names such as ``quantum machine learning'' or ``quantum learning theory''.\\

In classical statistical learning theory, one of the most influential frameworks is that of probably approximately correct (PAC) learning due to \cite{Vapnik.1971} and \cite{Valiant.1984}. It is particularly well studied for the task of binary classification. For this problem the so-called VC-dimension \cite{Vapnik.1971} is known to characterize the sample complexity of learning a function class \cite{Blumer.1989,Hanneke.2016}. Motivated by these strong theoretical results, a quantum analog of this problem was soon defined and studied in a series of papers (an overview over which is given in \cite{Arunachalam.2017}), which culminated in \cite{Arunachalam.2018}. Therein it is shown that the information-theoretic complexity of the task of quantum PAC learning a $0$-$1$-valued function class is characterized by \enlargethispage{\baselineskip} the VC-dimension in exactly the same way as for the classical scenario.\\

The scenario studied in \cite{Arunachalam.2018} assumes the training data available to the learner to be given in a specific quantum form and allows the learner to perform quantum computational operations on that training data. The functions to be learned, however, still map classical inputs to classical outputs. We propose a different quantum version of the binary classification task by not only considering the possibility of quantum training data but by allowing the objects to be learned to be inherently quantum. More specifically, we consider functions that map classical inputs to one of two possible quantum output states (``quantum labels''). These maps describe state preparation procedures. A more general learning task of this type, for which our problem can be seen as a toy model, could be relevant for cases in which state preparation is either costly or time-consuming, e.g., preparing thermal states at low temperatures (see \cite{BrandaoF.G.S.L..2019, Chowdhury.2020} and references therein). Here, one could first produce sample data, learn a predictor, and then reproduce the preparation more efficiently using the predictor.

\subsection{Main Results}
We consider maps $f:\mathcal{X}\to\{ \sigma_0,\sigma_1\}$ that assign to points in a classical input space $\mathcal{X}$ one of two labelling quantum states $\{\sigma_0,\sigma_1\}$. (Here, $\sigma_0$ and $\sigma_1$ are, in general, mixed states described by density matrices.) Let $\mathcal{F}$ be a function class consisting of such functions.  We assume the training data to be given as a classical-quantum state about which, according to the laws of quantum theory, we can only gain information by performing measurements.\\

Our learning model is that of PAC-learning with accuracy $\varepsilon$ and confidence $\delta$. Here, we require a learning algorithm, given as input classical-quantum training data generated according to some unknown underlying distribution, to output with probability $\geq 1-\delta$ over the choice of training data a hypothesis that achieves accuracy $\varepsilon$. (Accuracy is measured in terms of the trace distance.)\\
We present a learning strategy that $(\varepsilon,\delta)$-PAC learns $\mathcal{F}\subseteq\{ f:\mathcal{X}\to\{ \sigma_0,\sigma_1\}\}$ in the agnostic scenario from classical-quantum training data of size $\mathcal{O}\left( \tfrac{d}{\varepsilon^2} + \tfrac{\log\nicefrac{1}{\delta}}{\varepsilon^2}\right)$, where $d$ is the VC-dimension of the $\{0,1\}$-valued function class $\tilde{\mathcal{F}}\subseteq\{\tilde{f}:\mathcal{X}\to \{0,1\}\}$ induced by $\mathcal{F}$ via $\sigma_i\mapsto i$, $i=0,1$. Here, ``agnostic'' means that there need not be a function in $\mathcal{F}$ that would achieve perfect accuracy. We also show that solving this learning problem requires training data size $\Omega\left( \tfrac{d}{\varepsilon^2} + \tfrac{\log\nicefrac{1}{\delta}}{\varepsilon^2} \right)$, so our strategy is optimal w.r.t.~the sample complexity dependence on $\varepsilon$, $\delta$ and $d$.\\

For the realizable scenario of the quantum learning problem, i.e., under the assumption that perfect accuracy can be achieved using $\mathcal{F}$, we prove a sample complexity upper bound of $$\mathcal{O}\left(\tfrac{1}{\varepsilon (1-2\max\lbrace \tr[E_0\sigma_1],\tr[E_1\sigma_0]\rbrace)^2} \left( d + \log\nicefrac{1}{\delta}\right)\right),$$ where $\{ E_0,E_1\}$ is the Holevo-Helstrom measurement for distinguishing $\sigma_0$ and $\sigma_1$, and a sample complexity lower bound of $\Omega\left(\tfrac{d}{\varepsilon} + \tfrac{\log\nicefrac{1}{\delta}}{\varepsilon}\right)$. Also here, these bounds coincide w.r.t.~their dependence on $\varepsilon$, $\delta$ and $d$. The prefactor $(1-2\max\lbrace \tr[E_0\sigma_1],\tr[E_1\sigma_0]\rbrace)^{-2}$ in the upper bound comes from our procedure trying to distinguish $\sigma_0$ and $\sigma_1$ by measuring single copies. (Note: Even though we formulate this in terms of the Holevo-Helstrom measurement, we could use any other two-outcome POVM $\{ \tilde{E}_0,\tilde{E}_1\}$ that satisfies $\max\lbrace \tr[\tilde{E}_0\sigma_1],\tr[\tilde{E}_1\sigma_0]\rbrace<\nicefrac{1}{2}.$)\\

In proving the sample complexity upper bound for the realizable scenario, we combine algorithms from \cite{Laird.1988} and \cite{Hanneke.2016} to show that $\mathcal{O}\left(\frac{1}{\varepsilon (1-2\eta_b)^2} \left( d + \log\nicefrac{1}{\delta}\right)\right)$ classical examples with two-sided classification noise, i.e., in which each label is flipped with probability given by a noise rate, suffice for classical $(\varepsilon,\delta)$-PAC learning a function class of VC-dimension $d$ in the realizable scenario if the noise rate is bounded by $\eta_b<\nicefrac{1}{2}$. This upper bound has, to the best of our knowledge, not been observed before and, when combined with the lower bound from \cite{Arunachalam.2018}, establishes the optimal sample complexity of this classical noisy learning problem.\\

As is common in statistical learning theory, our main focus lies on the information-theoretic complexity of the learning problem, i.e., the necessary and sufficient number of quantum examples, whereas we do not discuss the computational complexity. Our proposed strategies are ``semi-classical'' in the following sense: After initially performing tensor-product measurements, in which each tensor factor is a two-outcome POVM, the remaining computation is done by a classical learning algorithm. In particular, the procedure does not require (possibly hard to implement) joint measurements and its computational complexity will be determined by the (classical) computational complexity of the classical learner used as a subroutine.

\subsection{Overview over the Proof Strategy}
We first sketch how we obtain the sample complexity upper bounds. We propose a simple (semi-classical) procedure that consists of first performing local measurements on the quantum part of the training data examples to obtain classical training data and then applying a classical learning algorithm.\\
We observe that the learning problem for which the classical learner is applied, can then be viewed as a classical binary classification problem with two-sided classification noise, i.e., in which the labels are flipped with certain error probabilities determined by the outcome probabilities of the performed quantum measurements. Therefore, we have reduced our problem to obtaining sample complexity upper bounds for a classical learning problem with noise.\\
In the general (so-called \emph{agnostic}) case, we can use known sample complexity bounds formulated in terms a complexity measure called \emph{Rademacher complexity} to show that classical empirical risk minimization w.r.t.~a suitably modified loss function (as suggested in \cite{Natarajan.2013}) achieves optimal sample complexity for this classical learning problem with noise.\\
In the \emph{realizable} case, i.e., under the assumption that any non-noisy training data set can be perfectly represented by some hypothesis in our class $\tilde{\mathcal{F}}$, the optimal sample complexity for binary classification with two-sided classification noise has not been established in the literature. We combine ideas from \cite{Laird.1988} and \cite{Hanneke.2016} to exhibit an algorithm that achieves information-theoretic optimality for this scenario.\\

To obtain the sample complexity lower bounds, we apply ideas from \cite{Arunachalam.2018}. Namely, we observe that for sufficiently small accuracy parameter, any quantum strategy that solves our learning problem indeed has to be able to distinguish between the possible different training data states with high success probability. \\
In the simple case of distinguishing between two quantum states, arising from two different ``hard-to-distinguish'' underlying distributions, this probability can be upper-bounded in terms of the trace distance of the states. 
In the more general case of many states, we do not study this success probability directly. Instead, we consider the information contained in the quantum training data about the choice of the underlying distribution, again chosen out of a set of ``hard-to-distinguish'' distributions.

\subsection{Related Work}
\cite{Bshouty.1998} introduced a notion of quantum training data for learning problems with classical concepts and used it to learn DNF (Disjunctive Normal Form) formulae w.r.t.~the uniform distribution. This was extended to product distributions by \cite{Kanade.2019}. Using ideas from Fourier-based learning, this type of quantum training data was also studied in the context of fixed-distribution learning of Boolean linear functions \cite{Bernstein.1993,Cross.2015,Riste.2017,Grilo.20180410,Caro.2020}, juntas \cite{Atc.2007}, and Fourier-sparse functions \cite{Arunachalam.2019}. \cite{Arunachalam.2017} and \cite{Arunachalam.20190307} study the limitations of these quantum examples. A broad overview over work on quantum learning classical functions is given in \cite{Arunachalam.2017}.\\

Also for the model of learning from membership queries, a quantum counterpart can be considered. \cite{Servedio.2004} showed that the number of required classical queries is at most polynomially larger than the number of required quantum queries. Recently, this polynomial relation was improved upon in \cite{Arunachalam.2019}. A more specific scenario, namely that of learning multilinear polynomials more efficiently from quantum membership queries, is studied in \cite{Montanaro.2012}.\\
Similarly, also a quantum counterpart of the classical model of statistical query learning can be defined. This was recently studied in \cite{Arunachalam.2020}.\\

Another line of research at the intersection of learning theory and quantum information focuses on applying classical learning to concept classes arising from quantum theory, e.g., from states or measurements. This was initiated by \cite{Aaronson.2007} and studied further by \cite{Cheng.2016, Aaronson.2018}, and \cite{Aaronson.2018b}.\\

Our learning model is similar to the one studied in \cite{Chung.2018}. Also there, the inputs are assumed to be classical and the outputs are quantum states. The crucial difference to our scenario is that we assume that there are only two possible label states and these are known in advance. In \cite{Chung.2018}, there can be a continuum of possible label states.\\
Our additional assumption allows us to study infinite function classes $\mathcal{F}$, whereas the results in \cite{Chung.2018} are for classes of finite size. (We expect that the reasoning of \cite{Chung.2018} can be extended to infinite classes using the so-called ``growth function'' when restricting to a finite set of possible target states. This might lead to a learning procedure that can be applied in our scenario without prior knowledge of the possible quantum label states.) As a further difference between the approaches, whereas the strategy of \cite{Chung.2018} requires the ability to perform measurements in random orthonormal bases, the measurements in our procedures can be taken to be fixed and of product form and are thus potentially easier to implement.\\

The classical problems to which our quantum learning problems are reduced, are problems of learning from noisy training data. These were first proposed by \cite{Angluin.1988, Laird.1988} and studied further, e.g., by \cite{Aslam.1996, CesaBianchi.1999} and \cite{Natarajan.2013}.

\subsection{Structure of the Paper}
In Section $2$ we recall some notions from learning theory as well as from quantum information and computation. The central learning problem of this contribution is formulated in Section $3$. The next Section exhibits strategies for solving the task and establishes sample complexity upper bounds. In doing so, we derive a tight upper bound on the sample complexity of classical binary classification with two-sided classification noise (see Appendix \ref{SctNoisyBinaryClassification}). The quantum sample complexity upper bounds are complemented by lower bounds in Section $5$. We conclude with open questions and the references.

\section{Preliminaries}\label{SctPrelims}

\subsection{Basics of Quantum Information and Computation}\label{SbSctBasicQIT}
A finite-dimensional quantum system is described by a \emph{(mixed)} state and mathematically represented by a \emph{density matrix} of some dimension $d\in\IN$, i.e., an element of $\mathcal{S}(\IC^d):=\lbrace \rho\in\IC^{d\times d}\ |\ \rho\geq 0, \tr[\rho]=1\rbrace$. Here, $\rho\geq 0$ means that $\rho$ is a self-adjoint and positive semidefinite matrix. The extreme points of the convex set $\mathcal{S}(\IC^d)$ are the rank-$1$ projections, the \emph{pure states}. We employ Dirac notation to denote a unit vector $\psi\in\IC^d$ also by $|\psi\rangle\in\IC^d$ and the corresponding pure state by $|\psi\rangle\langle\psi|$.\\

To make an observation about a quantum system, a measurement has to be performed. Measurements are built from the set of \emph{effect operators} $\mathcal{E}(\IC^d):=\lbrace E \in\IC^{d\times d}\ |\ 0\leq E\leq \mathds{1}_d\rbrace$. For our purposes it suffices to consider a measurement as a collection $\lbrace E_i\rbrace_{i=1}^\ell$ of effect operators $E_i\in\mathcal{E}(\IC^d)$ s.t.~$\sum_{i=1}^\ell E_i=\mathds{1}_d$. (For the more general notion of a POVM see \cite{Nielsen.2009} or \cite{Heinosaari.2012}.) When performing a measurement $\lbrace E_i\rbrace_{i=1}^\ell$ on a state $\rho$, output $i$ is observed with probability $\tr[E_i\rho]$. A projective measurement is one where the effect operators are rank-$1$ projections, i.e., there exists an orthonormal basis $\lbrace |i\rangle\rbrace_{i=1}^d$ s.t.~$E_i=|i\rangle\langle i|$.\\

When multiple quantum systems with spaces $\IC^{d_i}$ are considered, the composite system is described by the tensor product $\bigotimes_{i=1}^n \IC^{d_i}\simeq \IC^{\prod_i d_i}$ and the set of states becomes $\mathcal{S} ( \bigotimes_{i=1}^n \IC^{d_i} )$. Given a state $\rho_{AB}\in \mathcal{S}(\IC^{d_A}\otimes\IC^{d_B})$ of a composite system, we can obtain states of the subsystems as partial traces $\rho_A = \tr_B[\rho_{AB}]$, $\rho_B=\tr_A[\rho_{AB}]$. Here, the partial trace is defined as satisfying the relation $\tr[(E\otimes\mathds{1}_{d_B})\rho_{AB}]=\tr[E \tr_B[\rho_{AB}]]$ for all $E\in\mathcal{E}(\IC^{d_A})$.\\

The dynamics of a quantum system are usually described by unitary evolution or, more generally, by quantum channels. For our purposes, these dynamics will not have to be discussed explicitly since they can be considered as part of the performed measurement by changing to the so-called Heisenberg picture (see \cite{Nielsen.2009}). We will take this perspective in proving our sample complexity lower bounds because it allows us to restrict our attention to proving limitations of measurements rather than of channels.\\

We will also make use of some standard entropic quantities which have been generalized from their classical origins \cite{Shannon.1948} to the realm of quantum theory. We denote the Shannon entropy of a random variable $X$ with probability mass function $p$ by $H(X)=-\sum_x p(x)\log(p(x))$, the conditional entropy of a random variable $Y$ given $X$ as $H(Y|X)=\sum_{x,y} p(x,y) \log\left(\tfrac{p(x,y)}{p(x)}\right)$ and the mutual information between $X$ and $Y$ as $I(X:Y)=H(X) + H(Y) - H(X,Y)$. Similarly, the von Neumann entropy of a quantum state $\rho$ will be denoted as $S(\rho)=-\tr[\rho\log\rho]$ and the mutual information for a bipartite quantum state $\rho_{AB}$ as $I(\rho_{AB})=I(A:B)=S(\rho_A) + S(\rho_B) - S(\rho_{AB})$. All the standard results and inequalities connected to these quantities which appear in our arguments can be found in \cite{Nielsen.2009} or in \cite{Wilde.2013}.

\subsection{Basics of the PAC Framework and the Binary Classification Problem}\label{SbSctBasicLearning}
The setting of \emph{Probably Approximately Correct (PAC)} learning was introduced by \cite{Vapnik.1971} and \cite{Valiant.1984}. The general setting is as follows: Let $\mathcal{X}, \mathcal{Y}$ be input and output space, respectively, let $\mathcal{F}\subset\mathcal{Y}^\mathcal{X}$ be a class of functions, a \emph{concept class}, and let $\ell:\mathcal{Y}\times\mathcal{Y}\to\IR_+$ be a \emph{loss function}. A learning algorithm (to which $\mathcal{X},\mathcal{Y},\mathcal{F}$ and $\ell$ are known) has access to training data of the form $S=\lbrace (x_i,y_i)\rbrace_{i=1}^m$, where $(x_i,y_i)$ are drawn i.i.d.~from a probability measure $\mu\in\textrm{Prob}(\mathcal{X}\times\mathcal{Y})$. Moreover, the learner is given as input a confidence parameter $\delta\in (0,1)$ and an accuracy parameter $\varepsilon\in (0,1)$. Then a learner must output a hypothesis $h\in\mathcal{Y}^\mathcal{X}$ s.t., with probability $\geq 1-\delta$ w.r.t.~the choice of training data, 
\begin{align}\label{EqLearningRequirement}
\mathbb{E}_{(x,y)\sim\mu}[\ell(y,h(x))] \leq \inf\limits_{f\in\mathcal{F}} \mathbb{E}_{(x,y)\sim\mu}[\ell(y,f(x))] + \varepsilon.
\end{align}
Note that the first term on the right-hand side vanishes if there exists an $f^*\in\mathcal{F}$ s.t. $\mu(x,y)=\mu_1(x)\delta_{y,f^*(x)}\ \forall (x,y)\in\mathcal{X}\times\mathcal{Y}$. In this case, we call the learning problem \emph{realizable}, otherwise we refer to it as \emph{agnostic}.\\
Both in the agnostic and in the realizable scenario, a learning algorithm that always outputs a hypothesis $h\in\mathcal{F}$ is called a \emph{proper learner}, and otherwise it is called \emph{improper}.\\

A quantity of major interest is the number of examples featuring in such a learning problem. Given a learning algorithm $\mathcal{A}$, the smallest $m=m(\varepsilon,\delta)\in\IN$ s.t.~the learning requirement $(\ref{EqLearningRequirement})$ is satisfied with confidence $1-\delta$ and accuracy $\varepsilon$ is called the \emph{sample complexity} of $\mathcal{A}$. The sample complexity of the learning problem is the infimum over the sample complexities of all learning algorithms for the problem. This characterizes, from an information-theoretic perspective, the hardness of a learning problem, but leaves aside questions of computational complexity.\\

The binary classification problem now arises as a special case from the above if we specify the output space $\mathcal{Y}=\lbrace 0,1\rbrace$ and take the loss function to be $\ell(y,\tilde{y})=1-\delta_{y,\tilde{y}}$, the $0$-$1$-loss. This setting is well studied and a characterization of its sample complexity is known. At its core is the following combinatorial parameter:

\begin{definition}\emph{(VC-Dimension \cite{Vapnik.1971})}\label{DffVCDim}\\
Let $\mathcal{F}\subseteq\lbrace 0,1\rbrace^\mathcal{X}$. A set $S=\lbrace x_1,\ldots,x_n\rbrace\subset X$ is said to be shattered by $\mathcal{F}$ if for every $b\in\lbrace 0,1\rbrace^n$ there exists $f_b\in\mathcal{F}$ s.t.~$f_b(x_i)=b_i$ for all  $1\leq i\leq n$.\\
The Vapnik-Chervonenkis (VC) dimension of $\mathcal{F}\subset\lbrace 0,1\rbrace^\mathcal{X}$ is defined to be 
\begin{align*}
\VC(\mathcal{F}):=\sup\lbrace n\in\IN_0~|~\exists S\subset X~\textrm{s.t. } |S|=n~\textrm{and } S~\textrm{is shattered by }\mathcal{F}\rbrace.
\end{align*}
\end{definition} 

The main insight of VC-theory lies in the fact that learnability of a $\lbrace 0,1\rbrace$-valued concept class is equivalent to finiteness of its VC-dimension. Even more, the sample complexity can be expressed in terms of the VC-dimension. This is the content of the following

\begin{theorem}\label{ThmClassSampleComplexity} \emph{(see, e.g., \cite{Blumer.1989,Hanneke.2016,ShalevShwartz.2014,Vershynin.2018})}\\
In the realizable scenario, the sample complexity of binary classification for a function class $\mathcal{F}$ of VC-dimension $d$ is $m=m(\varepsilon,\delta)=\Theta\left( \frac{1}{\varepsilon}\left( d + \log\nicefrac{1}{\delta}\right)\right)$.\\
In the agnostic scenario, the sample complexity of binary classification for a function class $\mathcal{F}$ of VC-dimension $d$ is $m=m(\varepsilon,\delta)=\Theta\left( \frac{1}{\varepsilon^2}\left( d + \log\nicefrac{1}{\delta}\right)\right)$.
\end{theorem}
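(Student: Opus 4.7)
The plan is to handle the upper and lower bounds separately in each scenario, with the crucial combinatorial input being the Sauer-Shelah lemma, which states that any $\mathcal{F}\subseteq\{0,1\}^\mathcal{X}$ of VC-dimension $d$ has at most $(em/d)^d$ distinct restrictions to any sample of size $m\geq d$. This converts a statement about an infinite class into a union-bound over a polynomially-sized collection of dichotomies.

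For the upper bounds, I would invoke uniform convergence. In the \emph{agnostic case}, a standard symmetrization argument bounds $\sup_{f\in\mathcal{F}}|\mathbb{E}_{(x,y)\sim\mu}[\ell(y,f(x))]-\tfrac{1}{m}\sum_i\ell(y_i,f(x_i))|$ by twice the expected Rademacher complexity of $\mathcal{F}$ restricted to the sample; Massart's finite-class lemma together with Sauer-Shelah then bounds this complexity by $\mathcal{O}(\sqrt{d/m})$, and a bounded-differences concentration inequality adds the $\sqrt{\log(1/\delta)/m}$ term. Consequently, ERM over $\mathcal{F}$ is $\varepsilon$-competitive as soon as $m=\Omega((d+\log(1/\delta))/\varepsilon^2)$. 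In the \emph{realizable case}, the classical double-sampling $\varepsilon$-net argument of \cite{Blumer.1989} shows that ERM achieves sample complexity $\mathcal{O}((d\log(1/\varepsilon)+\log(1/\delta))/\varepsilon)$; to remove the spurious $\log(1/\varepsilon)$ factor and reach the optimal bound I would appeal to the improper majority-vote-of-subsample-ERMs algorithm of \cite{Hanneke.2016}, whose analysis delivers expected error $\mathcal{O}((d+\log(1/\delta))/m)$.

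For the lower bounds, I would exhibit hard families of distributions supported on a shattered set $S=\{x_0,\ldots,x_{d-1}\}\subset\mathcal{X}$. For the $\Omega(d/\varepsilon)$ part of the realizable bound, I would place mass $1-8\varepsilon$ on $x_0$ with a fixed label and spread the remaining mass uniformly over $x_1,\ldots,x_{d-1}$; choosing labels on these rare points by an unknown bitstring $b\in\{0,1\}^{d-1}$ (each realized by some $f_b\in\mathcal{F}$ by shattering), one shows that with $o(d/\varepsilon)$ examples a constant fraction of rare points remain unseen, on which any learner errs with probability $1/2$. The $\Omega(\log(1/\delta)/\varepsilon)$ contribution comes from a two-hypothesis Le Cam argument between two functions differing only at a single point of probability $\approx\varepsilon$. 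In the agnostic case I would instead parameterize the conditional label distribution by $\sigma\in\{\pm 1\}^d$ with $\Pr[y=1\mid x=x_i]=\tfrac{1}{2}+\sigma_i\varepsilon$ and use a Fano-type inequality: achieving accuracy $\varepsilon$ essentially forces recovery of most coordinates of $\sigma$, which by standard information-theoretic calculations requires $m=\Omega(d/\varepsilon^2)$.

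The main obstacle is the removal of the extra $\log(1/\varepsilon)$ factor in the realizable upper bound, which is the subtle content of \cite{Hanneke.2016}; the construction there uses an improper majority vote over ERMs trained on a carefully structured collection of subsamples, and its analysis is considerably more delicate than the classical ERM bound. Everything else proceeds along the standard route of VC theory and information-theoretic lower bounds.
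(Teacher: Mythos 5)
Your sketch is correct and follows essentially the same route as the sources the paper cites for this background theorem (the paper itself gives no proof): Rademacher complexity plus Sauer--Shelah and Massart for the agnostic upper bound, the improper majority-vote construction of \cite{Hanneke.2016} to remove the $\log(\nicefrac{1}{\varepsilon})$ factor in the realizable upper bound, and the standard hard-distribution/information-theoretic constructions for the lower bounds. The only small omission is the $\Omega\left(\frac{\log(\nicefrac{1}{\delta})}{\varepsilon^2}\right)$ part of the agnostic lower bound, which you do not address explicitly; it follows from the same two-hypothesis Le~Cam argument you invoke in the realizable case, now applied to two distributions whose conditional label probabilities at a single point are $\tfrac{1}{2}\pm\Theta(\varepsilon)$, so that distinguishing them requires $\Omega\left(\frac{\log(\nicefrac{1}{\delta})}{\varepsilon^2}\right)$ samples.
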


The proof of the sample complexity upper bound in the agnostic case typically goes via a different complexity measure, the Rademacher complexity, which is then related to the VC-dimension. As this will reappear later on in our analysis, we also recall this definition here.

\begin{definition}\emph{(Rademacher Complexity (see \cite{Bartlett.2002}))}\\
Let $Z$ be some space, $\mathcal{F}\subseteq\IR^\mathcal{Z}$, $z\in\mathcal{Z}^n$. The empirical Rademacher complexity of $\mathcal{F}$ w.r.t.~$z$ is
\begin{align*}
\hat{\mathcal{R}}(\mathcal{F}):=\underset{\sigma\sim U(\lbrace -1,1\rbrace^n)}{\mathbb{E}}\Big[\sup\limits_{f\in\mathcal{F}}\frac{1}{n}\sum\limits_{i=1}^n \sigma_i f(z_i)\Big]=\underset{\sigma\sim U(\lbrace -1,1\rbrace^n)}{\mathbb{E}}\Big[\sup\limits_{f\in\mathcal{F}}\frac{1}{n}\langle\sigma,f(z)\rangle\Big],
\end{align*}
where $U(\lbrace -1,1\rbrace^n)$ denotes the uniform distribution on $\lbrace -1,1\rbrace^n$.\\
If we consider $n$ i.i.d.~random variables $Z_1,...,Z_n$ distributed according to a probability measure~$\mu$ on $\mathcal{Z}$ and write $Z=(Z_1,...,Z_n)$, the Rademacher complexities of $\mathcal{F}$ w.r.t.~$\mu$ are defined to be $\mathcal{R}_n(\mathcal{F}):=\mathbb{E}_{Z\sim\mu^n}\big[\hat{\mathcal{R}}_\mathcal{F}\big]$, $n\in\IN.$
\end{definition}

\section{The Binary Classification Problem with Classical Instances and Quantum Labels}
We introduce a generalization of the classical binary classification problem to the quantum realm by allowing the two labels to be quantum states. Thus let $\sigma_0,\sigma_1\in\mathcal{S}(\IC^n)$ be two (possibly mixed) quantum states, write $\mathcal{D}=\lbrace\sigma_0,\sigma_1\rbrace$. We assume that classical descriptions of these states (i.e., their density matrices) are known to the learning algorithm as well as the fact that only these two quantum labels appear. The class to be learned is now a class of functions $\mathcal{F}\subset\{ f:\mathcal{X}\to\mathcal{D} \}$ and the underlying distribution will be a $\mu\in\textrm{Prob}(\mathcal{X}\times\mathcal{D})$, where $\mathcal{X}$ is some space of classical objects.\\

We now deviate from the standard PAC setting: We assume the training data to be $S=\lbrace (x_i,\rho_i)\rbrace_{i=1}^m$, $m\in\IN$, where the $(x_i,\rho_i)$ are drawn independently according to $\mu$ (in particular, $\rho_i\in\mathcal{D}$ for all $i$). Here, the~$\rho_i$ are the actual quantum states, not classical descriptions of them. Equivalently, we represent an example $(x_i,\rho_i)$ drawn from $\mu$ as the classical-quantum state $\sum_{x, \rho}\mu (x,\rho)|x\rangle\langle x|\otimes\rho$, with $\lbrace |x\rangle\rbrace_{x\in\mathcal{X}}$ orthonormal.\\

Note that this model for the training data differs from the one introduced by \cite{Bshouty.1998}, where the quantum training data consists of copies of a superposition state. Instead, here we assume copies of a mixture of states. This is done mainly for two reasons: First, it allows us to naturally talk about maps with mixed state outputs. Second, it is debatable whether assuming access to superposition examples as in \cite{Bshouty.1998} is justified (see, e.g., section $5$ in \cite{Ciliberto.2018}), and this problem remains when considering maps with quantum outputs. In contrast, the mixtures assumed in our model arise naturally as statistical ensembles of outputs of state preparation procedures, if the parameters of the preparation are chosen according to some (unknown) distribution. In that sense, the form of classical-quantum training data assumed here is both a straightforward generalization of classical training data, given the standard probabilistic interpretation of mixed states, and can (at least in the realizable scenario) be easily imagined to be obtained as outcome of multiple runs of a state preparation experiment with different parameter settings.\\

A quantum learner for $\mathcal{F}$ with confidence $1-\delta$ and accuracy $\varepsilon$ from $m=m(\varepsilon,\delta)$ quantum examples has to output, for every $\mu\in\textrm{Prob}(\mathcal{X}\times\mathcal{D})$, with probability $\geq 1-\delta$ over the choice of training data of size $m$ according to $\mu$, a hypothesis $h\in\mathcal{D}^\mathcal{X}$ s.t.~$R_\mu(h)\leq \inf\limits_{f\in\mathcal{F}}R_\mu(f) + \varepsilon$. As before, we can consider agnostic versus realizable and proper versus improper variants of this learning model.\\
Here, we define the risk of a hypothesis $h\in\mathcal{F}$ w.r.t.~a distribution $\mu\in\textrm{Prob}(\mathcal{X}\times\mathcal{D})$ as
\begin{align*}
R_\mu(h):= \int\limits_{\mathcal{X}\times\mathcal{D}} \frac{1}{2} \lVert\rho - h(x)\rVert_1 ~ \mathrm{d}\mu(x,\rho),
\end{align*}
where $\lVert\rho - \sigma\rVert_1 = \tr[|\rho-\sigma|]=\tr[\sqrt{(\rho-\sigma)^*(\rho-\sigma)}]$ is the Schatten $1$-norm.\\

Note that our assumption on $\mathcal{F}$ implies that $h(x)\in\mathcal{D}\ \forall x\in\mathcal{X}$ and therefore we can easily rewrite $R_\mu (h)=\frac{\norm{\sigma_0-\sigma_1}_1}{2}\mathbb{P}_{(x,\rho)\sim\mu}[h(x)\neq \rho]$, which is just the $0$-$1$-risk multiplied by a constant. We choose the slightly more complicated looking definition for $R_\mu(h)$ for two reasons. On the one hand, $\frac{\norm{\sigma_0-\sigma_1}_1}{2}$ is a measure for the distinguishability of $\sigma_0$ and $\sigma_1$ and thus a natural scale w.r.t.~which to measure the prediction error. (Note: If $\sigma_0,\sigma_1$ are orthogonal pure states and thus perfectly distinguishable, the classical scenario is recovered.) On the other hand, our definition of risk can be motivated operationally as we discuss in Appendix \ref{SctRiskMotivation}.

\begin{example}\label{ExmGroundStatePreparationLearningProblem}
Here, we describe a physically motivated problem that is captured by our scenario. The idea is as follows: Suppose we have available a (possibly complicated) ground state preparation procedure. Using this, we want to prepare a ground state $|\varphi_0\rangle$ of a Hamiltonian $H$. However, $H$ is perturbed by noise about which we have only partial information. We want to learn more about the noise and its influence on the prepared ground state.\\
We make this idea more concrete. We consider a (self-adjoint) Hamiltonian $H\in\mathbb{C}^{(d+2)\times (d+2)}$ of the form $H=\mathds{1}_2 \oplus \tilde{H}$, where $\tilde{H}>\mathds{1}_d$, with (non-unique) ground state $|\varphi_0\rangle:=\begin{pmatrix} 0 & 1\end{pmatrix}^T\oplus 0$. Suppose that we have a ground state preparation procedure that, if run with Hamiltonian $H$, prepares $|\varphi_0\rangle$. When implementing this procedure, we have to fix values of a parameter vector $x\in\mathbb{R}^D$. (Think, e.g., of $D=3$ and $x$ denoting the location at which the experiment is set up.) But due to the laboratory being only imperfectly shielded, there is an unknown region $R\subset\mathbb{R}^D$ in which the system is subject to noise. For simplicity, we assume that only two types of noise can occur and lead to the location-dependent Hamiltonian $H^{(i)}_x = H + \mathds{1}_{\{x\in R\}} H^{(i)},$ with noise Hamiltonians $H^{(0)} = \begin{pmatrix} 1 & 0 \\ 0 & -1 \end{pmatrix}\oplus 0$, $H^{(1)} = \begin{pmatrix} 0 & 1 \\ 1 & 0\end{pmatrix}\oplus 0$.\\
The noise can lead to a perturbation of the ground state. Namely:
\begin{itemize}
    \item For $x\not\in R$, $|\varphi_0\rangle$ is a ground state of $H^{(i)}_x$. (This is the case of no effective noise.)
    \item For $x\in R$, $|\varphi_0\rangle$ is the unique ground state of $H^{(0)}_x$. Hence, the noise $H^{(0)}$ is benign from the perspective of ground state preparation. 
    \item For $x\in R$, $|\varphi_1\rangle:=\frac{1}{\sqrt{2}}\begin{pmatrix} 1 & -1 \end{pmatrix}^T\oplus 0$ is the unique ground state of $H^{(1)}_x$. Hence, the noise $H^{(1)}$ is malicious from the perspective of ground state preparation. 
\end{itemize}
Thus, we describe the ground state preparation by a function $f^{(i)}_R:\mathbb{R}^D\to\{|\varphi_0\rangle\langle\varphi_0|, |\varphi_1\rangle\langle\varphi_1|\}$, $f^{(i)}_R(x) = \mathds{1}_{\{x\not\in R\}}|\varphi_0\rangle\langle\varphi_0| + \mathds{1}_{\{x\in R\}}|\varphi_i\rangle\langle\varphi_i|$. With this formulation, gaining information about the noise region $R$ and the noise type $i$ can be phrased as the problem of (PAC-)learning an unknown element of the (known) function class $\mathcal{F}=\left\{f^{(i)}_R\right\}_{i=0,1,~R\in\mathcal{R}}\subseteq\{|\varphi_0\rangle\langle\varphi_0|, |\varphi_1\rangle\langle\varphi_1|\}^{\mathbb{R}^D}$, where $\mathcal{R}$ is the class of possible error regions.\\
Note that $|\varphi_0\rangle$ and $|\varphi_1\rangle$ are not orthogonal and thus cannot be perfectly distinguished. Therefore, we cannot phrase the learning problem as one of binary classification with classical labels.\\
We return to this setting in Examples \ref{ExmGroundStatePreparationAgnostic} and \ref{ExmGroundStatePreparationRealizable} to illustrate our learning strategies.

\end{example}

We want to conclude this section by discussing a drawback of our model. We assume $\mathcal{F}\subset\mathcal{D}^\mathcal{X}$, i.e., outputs of any $f\in\mathcal{F}$ are either $\sigma_0$ or $\sigma_1$. Considering the convex structure of the set of quantum states, which is intimately tied to the probabilistic interpretation of quantum theory, this restriction can be considered unnatural. We nevertheless make it, for two reasons: First, it is easy to show using a Bayesian predictor that, under the assumption of $\mu$ being supported on $\mathcal{D}$ (which could, of course, also be contested), the optimal choice of predictors among all functions $\left(\mathcal{S}(\IC^d)\right)^\mathcal{X}$ is actually a function in $\mathcal{D}^\mathcal{X}$. Second, it is the most direct analog of the classical scenario with binary labels and we consider it a sensible first step that, as demonstrated in Example \ref{ExmGroundStatePreparationLearningProblem}, can already be of physical relevance.

\section{Sample Complexity Upper Bounds}\label{SctUpperBounds}

\subsection{The Agnostic Case}\label{SbSctAgnosticUpperBound}
Our learning strategy is motivated by interpreting the classical training data arising from performing a measurement on the label states as noisy version of the true training data. Before describing the learning strategy, we recall our assumption that classical descriptions of the label states $\sigma_0$, $\sigma_1$ are known to the learner. Based on this knowledge, the learner can derive the optimal measurement $\lbrace E_0,E_1\rbrace$ for minimum-error distinction between the two states, the so-called Holevo-Helstrom measurement (see Theorem $3.4$ in \cite{Watrous.2018}), by choosing $E_0$ to be the orthogonal projector onto the eigenspaces of $\sigma_0-\sigma_1$ corresponding to nonnegative eigenvalues. This step is where knowledge of the states $\sigma_0$ and $\sigma_1$ is used.\\

The learning strategy is now the following, in which we use the Holevo-Helstrom measurement to produce classical training data and thus obtain a classical learning problem:

\begin{tcolorbox}[title={Noise-corrected Holevo-Helstrom strategy}, breakable]
Given: Quantum training data $S=\{ (x_i,\rho_i)\}_{i=1}^m$\\
Output: Hypothesis $\hat{h}:\mathcal{X}\to\mathcal{D}$\\
Algorithm: 
\begin{enumerate}
\item For each $i$: Perform a Holevo-Helstrom measurement on~$\rho_i$. Let~$$y_i =\begin{cases} 1\ &\textrm{if } E_1~ \textrm{is accepted}\\ 0 &\textrm{if } E_1~ \textrm{is rejected}\end{cases}.$$
\item Let $\tilde{S}=\lbrace (x_i,y_i)\rbrace_{i=1}^m\in (\mathcal{X}\times\lbrace 0,1\rbrace)^m$. Then one can view $(x_i,y_i)$ as being drawn independently according to the probability measure $\nu$ on $\mathcal{X}\times\lbrace 0,1\rbrace$ which has 
\begin{align*}
&\nu_1(x) = \mu_1 (x) = \mu (x,\sigma_0) + \mu (x,\sigma_1)
\end{align*}
as first marginal and 
\begin{align*}
\nu (y|x) = \ &\delta_{y0}\left(\mu (\sigma_1 |x) \tr[\sigma_1 E_0] + \mu (\sigma_0 |x) \tr[\sigma_0 E_0]\right)\\
		      +~ &\delta_{y1}\left(\mu (\sigma_1 |x) \tr[\sigma_1 E_1] + \mu (\sigma_0 |x) \tr[\sigma_0 E_1]\right).	
\end{align*}
as the conditional probability distribution of $y$ given $x$.
\item Use a classical learning algorithm to find $\hat{g}\in\tilde{\mathcal{F}}:=\lbrace \tilde{f}:\mathcal{X}\to\lbrace 0,1\rbrace\ |\ \exists f\in\mathcal{F}:\ f(x)=\sigma_{\tilde{f}(x)}\ \forall x\in\mathcal{X}\rbrace$ s.t.~$\tilde{R}_\nu (\hat{g}):= \mathbb{E}_{(x,y)\sim\nu} [\tilde{\ell}(y,\hat{g}(x)]$ is minimized over $\tilde{\mathcal{F}}$, where
\begin{align*}
\tilde{\ell}(y_1,y_2):=\frac{(1-\eta_{1\oplus y_2})\mathds{1}_{y_1\neq y_2} - \eta_{y_2}\mathds{1}_{y_1=y_2}}{1-\eta_0-\eta_1},
\end{align*}
with $\eta_0 = \tr[\sigma_0 E_1]$ and $\eta_1 = \tr[\sigma_1 E_0]$. Here, $\oplus$ denotes addition modulo $2$.
\item Define $\hat{h}:\mathcal{X}\to\mathcal{D}$ via $\hat{h}(x)=\sigma_{\hat{g}(x)}$ and output $\hat{h}$ as hypothesis.
\end{enumerate}
\end{tcolorbox}

Note that the only non-classical step in the strategy is step $(1)$, which consists only of performing local two-outcome measurements.\\

The modification of the loss function in step $(3)$ gives an unbiased estimate of the true risk:

\begin{lemma}\emph{(see Lemma $1$ in \cite{Natarajan.2013})}\label{LmmFaithfulRiskEstimate}\\
Fix $x\in\mathcal{X}$. With the notation introduced above, for every $z\in\lbrace 0,1\rbrace$ it holds that
\begin{align*}
\mathbb{E}_{Y\sim\nu(\cdot|x)}[\tilde{\ell}(z,Y)] = &\mathbb{E}_{Y\sim\mu(\cdot|x)}[\mathds{1}_{z\neq Y}].
\end{align*}
\end{lemma}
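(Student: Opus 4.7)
The statement asserts that the modified loss $\tilde\ell$ is an unbiased estimator of the $0$-$1$ loss under the noisy conditional distribution $\nu(\cdot|x)$. Since both sides depend only on the numbers $\mu(\sigma_0|x), \mu(\sigma_1|x)$ and the flip rates $\eta_0=\tr[\sigma_0 E_1], \eta_1=\tr[\sigma_1 E_0]$, the plan is simply to unfold the definitions on both sides, expand the expectation with respect to $\nu(\cdot|x)$, and check an algebraic identity. No probabilistic inequality is needed — the coefficients in $\tilde\ell$ were chosen precisely so that a cancellation occurs.

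First I would fix $z\in\{0,1\}$ and write the right-hand side as $\mathbb{E}_{Y\sim\mu(\cdot|x)}[\mathds{1}_{z\neq Y}] = \mu(\sigma_{1\oplus z}\,|\,x)$, using the identification $\sigma_i\leftrightarrow i$. Then I would expand the left-hand side using the two values of $Y\in\{0,1\}$:
\begin{align*}
\mathbb{E}_{Y\sim\nu(\cdot|x)}[\tilde\ell(z,Y)] = \nu(0|x)\,\tilde\ell(z,0) + \nu(1|x)\,\tilde\ell(z,1),
\end{align*}
substitute the formulas $\nu(0|x) = \mu(\sigma_0|x)(1-\eta_0) + \mu(\sigma_1|x)\eta_1$ and $\nu(1|x) = \mu(\sigma_0|x)\eta_0 + \mu(\sigma_1|x)(1-\eta_1)$, and plug in the four values $\tilde\ell(z,y)$ for $z,y\in\{0,1\}$ as given by the definition.

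The core of the proof is then a short algebraic simplification with denominator $1-\eta_0-\eta_1$. For $z=0$ the $\mu(\sigma_0|x)$ coefficient evaluates to $-\eta_0(1-\eta_0) + (1-\eta_0)\eta_0 = 0$, while the $\mu(\sigma_1|x)$ coefficient yields $-\eta_0\eta_1 + (1-\eta_0)(1-\eta_1) = 1-\eta_0-\eta_1$, which cancels the denominator and leaves $\mu(\sigma_1|x)$, matching the right-hand side. The case $z=1$ follows by the same computation with the roles of $(\eta_0,\sigma_0)$ and $(\eta_1,\sigma_1)$ swapped. I do not anticipate any real obstacle here; the only thing to be careful with is the index convention in $\tilde\ell$ (the subscripts $\eta_{1\oplus y_2}$ and $\eta_{y_2}$ switch roles when $y_2$ changes), since a sign slip at that stage is the only way the cancellation could fail.
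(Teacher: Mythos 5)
Your proposal is correct: the direct expansion of $\mathbb{E}_{Y\sim\nu(\cdot|x)}[\tilde\ell(z,Y)]$ using $\nu(0|x)=\mu(\sigma_0|x)(1-\eta_0)+\mu(\sigma_1|x)\eta_1$, $\nu(1|x)=\mu(\sigma_0|x)\eta_0+\mu(\sigma_1|x)(1-\eta_1)$ and the four values of $\tilde\ell$ gives exactly the cancellations you describe, yielding $\mu(\sigma_{1\oplus z}|x)=\mathbb{E}_{Y\sim\mu(\cdot|x)}[\mathds{1}_{z\neq Y}]$. This is essentially the same argument as the cited Lemma $1$ of Natarajan et al., which the paper invokes without reproving, so there is nothing further to add.
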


We can use a standard generalization bound in terms of Rademacher complexities (see, e.g., Theorem $26.5$ of \cite{ShalevShwartz.2014}) to obtain: With probability $\geq 1-\delta$ over the choice of training data $S=\{(x_i,y_i) \}_{i=1}^m$ according to $\nu$, we have that for all $\tilde{f}^\ast\in\mathcal{\tilde{F}}$
\begin{align*}
\mathbb{E}_{(x,y)\sim\nu} [\tilde{\ell}(\hat{g}(x),y)] - \mathbb{E}_{(x,y)\sim\nu} [\tilde{\ell}(\tilde{f}^\ast(x),y)]
\leq 2\hat{\mathcal{R}}(\tilde{\mathcal{G}}) + \frac{5}{1-\eta_0-\eta_1}\sqrt{\frac{2\ln\nicefrac{8}{\delta}}{m}},
\end{align*}
where we used that $\lvert\tilde{\ell}(y_1,y_2)\rvert\leq \frac{1}{1-\eta_0-\eta_1}$ and defined the function class 
\begin{align*}
\tilde{\mathcal{G}} := \{ \mathcal{X}\times\{ 0,1\}\ni (x,y)\mapsto \tilde{\ell}(\tilde{f}(x),y)~|~\tilde{f}\in\tilde{\mathcal{F}}\}.
\end{align*}

Next, we relate the empirical Rademacher complexity of $\tilde{\mathcal{G}}$ to that of $\tilde{\mathcal{F}}$.

\begin{lemma}\label{LmmRademacherComplexityNoiseCorrectedClass}
For any training data set $S=\{(x_i,y_i) \}_{i=1}^m$, viewed as an element of $(\mathcal{X}\times\{ 0,1\})^m$, we have
\begin{align*}
\hat{\mathcal{R}} (\tilde{\mathcal{G}})
\leq \frac{2}{1-\eta_0-\eta_1}\hat{\mathcal{R}}(\tilde{\mathcal{F}}).
\end{align*}
\end{lemma}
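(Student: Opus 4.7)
The plan is to reformulate $\tilde{\ell}$ as an affine function of its first argument and then exploit the symmetry of Rademacher variables via a symmetrization-and-mixing argument.

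First, I would derive an explicit affine representation for $\tilde{\ell}$. Since $\tilde{f}(x)\in\{0,1\}$, one has $\mathds{1}_{\tilde{f}(x)\neq 0}=\tilde{f}(x)$ and $\mathds{1}_{\tilde{f}(x)\neq 1}=1-\tilde{f}(x)$ (and correspondingly for the indicators $\mathds{1}_{\tilde{f}(x)=y}$). Substituting these identities into the definition of $\tilde{\ell}$ gives, for each fixed $y\in\{0,1\}$,
\[
\tilde{\ell}(\tilde{f}(x),y)=a_y\,\tilde{f}(x)+b_y,
\]
with $a_0=\tfrac{1-\eta_1+\eta_0}{1-\eta_0-\eta_1}$, $a_1=-\tfrac{1-\eta_0+\eta_1}{1-\eta_0-\eta_1}$, and explicit constants $b_y$ that do not involve $\tilde{f}$. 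The elementary estimate $|\eta_0-\eta_1|\leq 1$ then yields $\max(|a_0|,|a_1|)\leq\tfrac{2}{1-\eta_0-\eta_1}=:L$.

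Second, I would plug this representation into the definition of $\hat{\mathcal{R}}(\tilde{\mathcal{G}})$. The $b_{y_i}$-term is independent of $\tilde{f}$ and can be pulled out of the supremum; its Rademacher expectation is $0$. It therefore suffices to bound $\mathbb{E}_{\sigma}\sup_{\tilde{f}\in\tilde{\mathcal{F}}}\tfrac{1}{m}\sum_i a_{y_i}\sigma_i\tilde{f}(x_i)$. Writing $a_{y_i}=L\cdot c_i$ with $c_i\in[-1,1]$, I would represent each $c_i$ as the expectation $\mathbb{E}[\xi_i]$ of an independent $\{\pm 1\}$-valued random variable $\xi_i$, apply Jensen's inequality to exchange the supremum with $\mathbb{E}_{\xi}$, and observe that, conditional on any realization of $\xi$, the variables $\xi_i\sigma_i$ are jointly i.i.d.\ Rademacher. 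The inner expectation then collapses to $\hat{\mathcal{R}}(\tilde{\mathcal{F}})$, and one obtains $\hat{\mathcal{R}}(\tilde{\mathcal{G}})\leq L\cdot\hat{\mathcal{R}}(\tilde{\mathcal{F}})=\tfrac{2}{1-\eta_0-\eta_1}\hat{\mathcal{R}}(\tilde{\mathcal{F}})$.

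I expect the main subtlety to be pinning down the sharp constant $\tfrac{2}{1-\eta_0-\eta_1}$. A direct invocation of the Ledoux--Talagrand contraction inequality for general Lipschitz losses typically introduces an extra factor of $2$, which would deliver only the weaker bound $\tfrac{4}{1-\eta_0-\eta_1}\hat{\mathcal{R}}(\tilde{\mathcal{F}})$. The symmetrization-and-mixing argument above circumvents this by using both the affine form of $\tilde{\ell}$ in its first argument and the distributional symmetry of the Rademacher variables, rather than appealing to a general-purpose contraction principle.
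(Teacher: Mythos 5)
Your proof is correct, and its first half coincides with the paper's own argument: both rewrite $\tilde{\ell}(\tilde f(x),y)$ as an affine function of $\tilde f(x)$ (your coefficients $a_y,b_y$ are exactly what the paper's identity $\mathds{1}_{\tilde f(x)\neq y}=\tfrac{1-(1-2\tilde f(x))(1-2y)}{2}$ produces), and both discard the $\tilde f$-independent part using $\mathbb{E}_\sigma[\sigma_i]=0$. Where you genuinely diverge is in handling the remaining coefficients $a_{y_i}$, whose magnitude is at most $L=\tfrac{2}{1-\eta_0-\eta_1}$: the paper absorbs the sign $(1-2y_i)$ into $\sigma_i$ by symmetry and then runs a coordinate-by-coordinate peeling argument (a hand-rolled proof of the Lipschitz contraction property, iterated over $i=1,\dots,m$), whereas you dispose of sign and magnitude in one stroke by writing $a_{y_i}/L=\mathbb{E}[\xi_i]$ for independent $\pm1$-valued $\xi_i$, exchanging the supremum with $\mathbb{E}_\xi$ via Jensen, and noting that $(\sigma_i\xi_i)_i$ is again an i.i.d.\ Rademacher vector. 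Your step is equivalent to observing that $c\mapsto\mathbb{E}_\sigma\bigl[\sup_{\tilde f}\tfrac1m\sum_i\sigma_i c_i\tilde f(x_i)\bigr]$ is convex on $[-1,1]^m$ and hence maximized at a sign vector; it avoids the induction over coordinates entirely, and since the paper's definition of $\hat{\mathcal{R}}$ carries no absolute value, neither route incurs the extra factor of $2$ you were worried about, so both yield exactly $\hat{\mathcal{R}}(\tilde{\mathcal{G}})\leq \tfrac{2}{1-\eta_0-\eta_1}\,\hat{\mathcal{R}}(\tilde{\mathcal{F}})$.
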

\begin{proof} (Sketch)
The proof uses some standard steps that are typically used for example in proving the Lipschitz contraction property of the Rademacher complexity and in studying the Rademacher complexity in a binary classification scenario.\\
See Appendix \ref{SctProofs} for a detailed proof.
\end{proof}
\\

With this, we now reformulate the above result in terms of the VC-dimension. Suppose $\VC (\tilde{\mathcal{F}})=d<\infty$. Then $\hat{\mathcal{R}}(\tilde{\mathcal{F}}) \leq 31\sqrt{\frac{d}{m}}$ (see, e.g., Theorem $8.3.23$ in \cite{Vershynin.2018}). Therefore we obtain that, with probability $\geq 1-\delta$ over the choice of training data $S=\{(x_i,y_i) \}_{i=1}^m$ according to $\nu$, 
\begin{align*}
\mathbb{E}_{(x,y)\sim\nu} [\tilde{\ell}(\hat{g}(x),y)] - \inf\limits_{\tilde{f}\in\tilde{\mathcal{F}}}\mathbb{E}_{(x,y)\sim\nu} [\tilde{\ell}(\tilde{f}(x),y)]
\leq \frac{124}{1-\eta_0-\eta_1}\sqrt{\frac{d}{m}} + \frac{5}{1-\eta_0-\eta_1}\sqrt{\frac{2\ln\nicefrac{8}{\delta}}{m}}.
\end{align*}

Note that, using Lemma \ref{LmmFaithfulRiskEstimate}, we can now bound 
\begin{align*}
R_\mu(\hat{h}) - \inf\limits_{f\in\mathcal{F}} R_\mu (f)
&= \frac{\lVert\sigma_0-\sigma_1\rVert_1}{2}\underbrace{\mathbb{E}_{(x,\rho)\sim\mu} [\mathds{1}_{\hat{g}(x)\neq \rho}]}_{=\mathbb{E}_{(x,y)\sim\nu} [\tilde{\ell}(\hat{g}(x),y)]} - \inf\limits_{\tilde{f}\in\tilde{\mathcal{F}}}\frac{\lVert\sigma_0-\sigma_1\rVert_1}{2}\underbrace{\mathbb{E}_{(x,\rho)\sim\mu} [\mathds{1}_{\tilde{f}(x)\neq \rho}}_{=\mathbb{E}_{(x,y)\sim\nu} [\tilde{\ell}(\tilde{f}(x),y)]}]\\
&\leq \frac{\lVert\sigma_0-\sigma_1\rVert_1}{2}\left(\frac{124}{1-\eta_0-\eta_1}\sqrt{\frac{d}{m}} + \frac{5}{1-\eta_0-\eta_1}\sqrt{\frac{2\ln\nicefrac{8}{\delta}}{m}} \right).
\end{align*}

Now we can set this equal to $\varepsilon$ and rearrange to conclude that a sample size of 
\begin{align*}
m\geq \frac{\lVert\sigma_0-\sigma_1\rVert_1^2}{4\varepsilon^2}\left(\frac{124}{1-\eta_0-\eta_1}\sqrt{d} + \frac{5}{1-\eta_0-\eta_1}\sqrt{2\ln\nicefrac{8}{\delta}} \right)^2
\end{align*}
suffices to guarantee that, with probability $\geq 1-\delta$, $R_\mu(\hat{h}) - \inf\limits_{f\in\mathcal{F}} R_\mu (f)\leq\varepsilon$. If we now observe that $\frac{1}{1-\eta_0-\eta_1}\leq\frac{4}{\lVert\sigma_0-\sigma_1\rVert_1}$, we obtain the sample complexity upper bound
\begin{align*}
m =m(\varepsilon,\delta) = \mathcal{O}\left( \frac{d}{\varepsilon^2} + \frac{\log\nicefrac{1}{\delta}}{\varepsilon^2}\right).
\end{align*}

\begin{remark}
The naive version of our learning strategy would be to perform Holevo-Helstrom measurements and then apply a classical learning strategy, like empirical risk minimization, without correcting for the noise in the resulting classical labels. Actually, this learning strategy already performs reasonably well and, in certain special cases, even allows to reduce the quantum learning problem to a fully classical one. For a detailed analysis of the performance of this simpler strategy, the reader is referred to Appendix \ref{SctHelstrom}.
\end{remark}

\begin{example}\label{ExmGroundStatePreparationAgnostic}
We illustrate our agnostic learning strategy for the scenario of Example \ref{ExmGroundStatePreparationLearningProblem}. As discussed in Appendix \ref{SctHelstrom}, as both label states $|\varphi_0\rangle\langle\varphi_0|$ and $|\varphi_1\rangle\langle\varphi_1|$ are pure, we can actually dispense with the modification of the classical loss function and simply take the $0$-$1$-loss. Therefore, the Holevo-Helstrom strategy will look as follows: We first perform local Holevo-Helstrom measurements with measurement operators $E_0 \propto \begin{pmatrix} -1+\sqrt{2} & 1\end{pmatrix}^T \begin{pmatrix} -1+\sqrt{2} & 1\end{pmatrix}\oplus 0$, $E_1 = \mathds{1}_{2+d}-E_0$. This gives rise to classical training data. With that data, we then perform (classical) empirical risk minimization over the class $\tilde{\mathcal{F}}=\left\{\tilde{f}^{(i)}_R \right\}_{i=0,1,~R\in\mathcal{R}}$, where $\tilde{f}^{(i)}_R:\mathbb{R}^D\to\{0,1\}$, $\tilde{f}^{(i)}_R(x)=\mathds{1}_{\{x\in R\}}\delta_{i,1}$. Note that $f^{(0)}_R$ is the zero-function for every $R\in\mathcal{R}$.\\
Both the optimization procedure and the generalization capability depend on the class $\mathcal{R}$ of possible noise regions. Concerning the generalization performance, observerve that, if $\emptyset\in\mathcal{R}$, then $\VC ( \tilde{\mathcal{F}})=\VC(\tilde{\mathcal{F}}_\mathcal{R})$, where we take $\tilde{\mathcal{F}}_\mathcal{R}=\{\mathbb{R}^D\ni x\mapsto\mathds{1}_{\{x\in R\}}~|~R\in\mathcal{R}\}$ to be the class of indicator functions of sets from $\mathcal{R}$. The VC-dimension of such classes is well-known for different geometric classes $\mathcal{R}$. E.g., if $\mathcal{R}$ is the class of axis-aligned rectangles or that of Euclidean balls in $\mathbb{R}^D$, then $\VC (\tilde{\mathcal{F}}_\mathcal{R})$ scales linearly in $D$ and thus the dependence of the sample complexity upper bound on the number of parameters $D$ is linear. If, however, we take $\mathcal{R}$ to be the class of compact and convex subsets of $\mathbb{R}^D$, then $\VC (\tilde{\mathcal{F}}_\mathcal{R})=\infty$ and the sample complexity upper bound becomes void. This is congruent with the intuition that without prior assumptions on the structure of the regions that can be influenced by noise, learning the noise (in particular its region) will be hard and maybe infeasible. 
\end{example}

\subsection{The Realizable Case}\label{SbSctRealizableUpperBound}

The strategy from the previous subsection uses a generalization bound via the Rademacher complexity and yields a sample complexity bound depending quadratically on $\nicefrac{1}{\varepsilon}$. In the classical binary classification problem it is known (see Theorem \ref{ThmClassSampleComplexity}) that under the realizability assumption this can be improved to $\nicefrac{1}{\varepsilon}$, but this typically requires a different kind of reasoning via $\varepsilon$-nets. (Compare section $28.3$ of \cite{ShalevShwartz.2014}.) In Theorem \ref{NoisySampleComplexityUpperBound} we show how the reasoning by \cite{Hanneke.2016} can be combined with results by \cite{Laird.1988} to achieve the $\nicefrac{1}{\varepsilon}$-scaling also in the case of two-sided classification noise. This sample complexity upper bound is seen to be optimal in its dependence on the VC-dimension $d$, the error rate bound $\eta$, the confidence $\delta$ and the accuracy $\varepsilon$ by a comparison to the lower bound in Theorem $27$ of \cite{Arunachalam.2018}.\\

If, as in the previous subsection, we consider the classical training data obtained by measuring the quantum training data as noisy version of a true sample, we can exchange step $3$ in the Holevo-Helstrom strategy by the minimum-disagreement-based classical learning strategy achieving the optimal sample complexity bound of Theorem \ref{NoisySampleComplexityUpperBound}. This directly yields the following

\begin{theorem}\label{ThmRealizableSampleComplexityUpperBound}
Let $\sigma_0,\sigma_1\in\mathcal{S}(\IC^n)$ be (distinct) quantum states. Let $\varepsilon\in (0,1)$, $\delta\in (0,2\cdot (\tfrac{2e}{d})^d)$, where $d$ is the VC-dimension of $\mathcal{F}\subset \lbrace 0,1\rbrace^\mathcal{X}$. Then $$m=m(\varepsilon,\delta)=\mathcal{O}\left(\frac{1}{\varepsilon (1-2\max\lbrace \tr[E_0\sigma_1],\tr[E_1\sigma_0]\rbrace)^2} \left( d + \log\nicefrac{1}{\delta}\right)\right)$$ quantum examples of a function in $\mathcal{F}$ are sufficient for binary classification with classical instances and quantum labels $\sigma_0,\sigma_1$ with accuracy $\varepsilon$ and confidence $1-\delta$.
\end{theorem}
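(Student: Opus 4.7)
The plan is to reuse the \emph{Noise-corrected Holevo-Helstrom strategy} from Section~\ref{SbSctAgnosticUpperBound} verbatim, except that the classical learner invoked in step~(3) is replaced by the minimum-disagreement algorithm of Theorem~\ref{NoisySampleComplexityUpperBound}, which by hypothesis achieves the optimal $\mathcal{O}\!\left(\tfrac{1}{\varepsilon'(1-2\eta_b)^2}(d+\log\nicefrac{1}{\delta})\right)$ sample complexity for classical realizable binary classification under two-sided classification noise of rate at most $\eta_b<\nicefrac{1}{2}$. All nontrivial work lives in that classical result; what remains here is a clean reduction from the quantum problem to the classical noisy one.

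First, I would identify the induced classical noisy learning problem. By realizability there exists $f^*\in\mathcal{F}$ with $\mu(x,\rho)=\mu_1(x)\delta_{\rho,f^*(x)}$; let $\tilde{f}^*\in\tilde{\mathcal{F}}$ be its $\{0,1\}$-valued counterpart, defined by $f^*(x)=\sigma_{\tilde{f}^*(x)}$. Conditional on the drawn $x$, the classical label $y$ produced by applying the Holevo-Helstrom measurement to $\rho=f^*(x)$ equals $\tilde{f}^*(x)$ flipped with probability $\eta_{\tilde{f}^*(x)}$, where $\eta_0=\tr[\sigma_0 E_1]$ and $\eta_1=\tr[\sigma_1 E_0]$. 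This is exactly the classical realizable two-sided classification noise model, with noise rates bounded by $\eta_b=\max\{\tr[E_0\sigma_1],\tr[E_1\sigma_0]\}<\nicefrac{1}{2}$; strict inequality is either enforced by $\sigma_0\neq\sigma_1$ for the Holevo-Helstrom choice, or, if needed, by replacing $\{E_0,E_1\}$ by any POVM meeting the bound, as the introduction allows.

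Second, I would apply Theorem~\ref{NoisySampleComplexityUpperBound} to this classical problem with rescaled accuracy $\varepsilon':=\tfrac{2\varepsilon}{\lVert\sigma_0-\sigma_1\rVert_1}$ and confidence $\delta$. This returns $\hat{g}\in\tilde{\mathcal{F}}$ with $\mathbb{P}_{(x,\rho)\sim\mu}[\hat{g}(x)\neq\tilde{f}^*(x)]\leq\varepsilon'$ with probability $\geq 1-\delta$, from a sample of size $\mathcal{O}\!\left(\tfrac{1}{\varepsilon'(1-2\eta_b)^2}(d+\log\nicefrac{1}{\delta})\right)$, provided $\delta$ lies in the admissible range $(0,2(\nicefrac{2e}{d})^d)$ inherited from the cited theorem. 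Setting $\hat{h}(x):=\sigma_{\hat{g}(x)}$ and invoking the identity $R_\mu(h)=\tfrac{\lVert\sigma_0-\sigma_1\rVert_1}{2}\,\mathbb{P}_{(x,\rho)\sim\mu}[h(x)\neq\rho]$ from Section~3, realizability gives
\begin{align*}
R_\mu(\hat{h})=\tfrac{\lVert\sigma_0-\sigma_1\rVert_1}{2}\,\mathbb{P}[\hat{g}(x)\neq\tilde{f}^*(x)]\leq \tfrac{\lVert\sigma_0-\sigma_1\rVert_1}{2}\cdot\varepsilon'=\varepsilon,
\end{align*}
so $R_\mu(\hat{h})-\inf_{f\in\mathcal{F}}R_\mu(f)=R_\mu(\hat{h})\leq\varepsilon$. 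Substituting $\varepsilon'$ back into the sample bound and using $\lVert\sigma_0-\sigma_1\rVert_1\leq 2$ yields the claimed $\mathcal{O}\!\left(\tfrac{1}{\varepsilon(1-2\eta_b)^2}(d+\log\nicefrac{1}{\delta})\right)$ quantum sample complexity.

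The real content of the theorem is therefore pushed entirely onto Theorem~\ref{NoisySampleComplexityUpperBound} — i.e., establishing a $\nicefrac{1}{\varepsilon}$-scaling (rather than the weaker $\nicefrac{1}{\varepsilon^2}$ one gets via the Rademacher/agnostic route of Section~\ref{SbSctAgnosticUpperBound}) for classical realizable learning with two-sided classification noise. That is the main obstacle, and it is handled in the appendix by merging Hanneke's $\varepsilon$-net/one-inclusion argument with the Laird-style reweighting that makes minimum-disagreement hypotheses tolerate label flips. Once that classical bound is in hand, the reduction sketched above is essentially bookkeeping: identifying the induced noise rates, rescaling the accuracy by $\lVert\sigma_0-\sigma_1\rVert_1/2$, and lifting the classical hypothesis back to a $\mathcal{D}$-valued one.
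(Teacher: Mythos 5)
Your proposal is correct and is essentially the paper's own proof: the paper likewise obtains Theorem~\ref{ThmRealizableSampleComplexityUpperBound} by taking the Holevo-Helstrom strategy and exchanging the classical learner in step~(3) for the minimum-disagreement/subsampling algorithm of Theorem~\ref{NoisySampleComplexityUpperBound}, with the induced two-sided noise rates $\eta_0=\tr[\sigma_0E_1]$, $\eta_1=\tr[\sigma_1E_0]$. Your explicit rescaling of the accuracy by $\nicefrac{\lVert\sigma_0-\sigma_1\rVert_1}{2}$ and the caveat about using a POVM with $\max\lbrace\tr[E_0\sigma_1],\tr[E_1\sigma_0]\rbrace<\nicefrac{1}{2}$ are just the bookkeeping the paper leaves implicit.
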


\begin{example}\label{ExmGroundStatePreparationRealizable}
When considering this learning strategy in the setting of Example \ref{ExmGroundStatePreparationLearningProblem}, we first perform the Holevo-Helstrom measurements as in Example \ref{ExmGroundStatePreparationAgnostic} to obtain classical data. Again, this is followed by a classical learning procedure for the class $\tilde{\mathcal{F}}=\left\{\tilde{f}^{(i)}_R \right\}_{i=0,1,~R\in\mathcal{R}}$.\\
Whereas the sample complexity bound derived for the agnostic case in subsection \ref{SbSctAgnosticUpperBound} applies to any (noise-corrected) classical empirical risk minimization, the procedure leading to the bound in Theorem \ref{ThmRealizableSampleComplexityUpperBound} is a specific one, presented in the proof of Theorem \ref{NoisySampleComplexityUpperBound}. First, the classical data is processed, using the subsampling algorithm of \cite{Hanneke.2016} (see Algorithm \ref{SubsampleAlgo}), to generate a collection of subsamples. For each of those subsamples, we then apply Algorithm \ref{MinDisagreementAlgo}: We use a first part of the subsample to group the elements of $\tilde{\mathcal{F}}$ into equivalence classes (according how they act on that part of the subsample), and the remainder is used to test the performance of each equivalence class. Afterwards, we output as hypothesis for that subsample a representative of the equivalence class that performs best in that test, i.e., that minimizes the number of disagreements with the part of the subsample used for testing. Whether and how the grouping into equivalence classes and finding minimum-disagreement strategies can be done (efficiently) depends on $\tilde{\mathcal{F}}$, and thus on $\mathcal{R}$. Finally, we take a majority vote over all the subsample hypotheses to get the output hypothesis of the classical learning procedure.\\
The dependence of the sample complexity on $\tilde{\mathcal{F}}$ via the VC-dimension of the class of indicator functions of sets from $\mathcal{R}$ is analogous to Example \ref{ExmGroundStatePreparationAgnostic}.
\end{example}

\begin{remark}
From the description of our noise-corrected Holevo-Helstrom strategy (either in the form of subsection \ref{SbSctAgnosticUpperBound} or that of this subsection), we can directly see that whether it is a proper or an improper learner depends on whether the classical learning algorithm in step $(3)$ is. As the classical learning algorithm used in subsection \ref{SbSctAgnosticUpperBound} is a simple Empirical Risk Minimization, it is in particular proper. So our noise-corrected Holevo-Helstrom strategy for the agnostic case is proper as well. The classical learner used in this subsection, however, is in general improper. So also the noise-corrected Holevo-Helstrom strategy for the realizable case is in general improper.
\end{remark}

\section{Sample Complexity Lower Bounds}\label{SctLowerBounds}
Whereas the goal of the previous section was to give strategies for solving the binary classification problem with classical instances and quantum labels and to prove upper bounds on the sufficient number of classical-quantum examples, we now turn to the complementary question of lower bounds on the number of required examples. In this section, we derive lower bounds that match the respective upper bounds from the previous section and therefore we conclude that the procedures described in Section \ref{SctUpperBounds} are optimal w.r.t.~sample size in terms of the dependence on $\varepsilon$, $\delta$, and $d$.

\subsection{The Agnostic Case}
We prove the sample complexity lower bounds in two parts, the first depending on the confidence parameter $\delta$ but not on the VC-dimension of the function class and conversely for the second.\\

We establish the VC-dimension-independent sample complexity lower bound in the following

\begin{lemma}\label{LmmAgnosticConfidenceLowerBound}
Let $\sigma_0,\sigma_1\in\mathcal{S}(\IC^n)$, let $\varepsilon\in (0,\tfrac{\norm{\sigma_0-\sigma_1}_1}{2\sqrt{2}})$, $\delta\in (0,1)$. Let $\mathcal{F}\subset\mathcal{D}^\mathcal{X}$ be a non-trivial concept class. Suppose $\mathcal{A}$ is a learning algorithm that solves the binary classification task with classical instances and (distinct) label states $\sigma_0,\sigma_1$ and concept class $\mathcal{F}$ with confidence $1-\delta$ and accuracy $\varepsilon$ using $m=m(\varepsilon,\delta)$ examples. Then $m\geq \Omega\left(\norm{\sigma_0-\sigma_1}_1^2\frac{\log\nicefrac{1}{\delta}}{\varepsilon^2}\right)$.
\end{lemma}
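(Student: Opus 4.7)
The plan is to adapt Le Cam's two-point method to the quantum training-data setting: construct two hard-to-distinguish distributions $\mu_0,\mu_1$ for which any $\varepsilon$-accurate learner must reliably identify the index $b$, and then lower-bound $m$ by an analysis of the fidelity of the two candidate classical-quantum training states. Using the non-triviality of $\mathcal{F}$, I would pick $x_0\in\mathcal{X}$ together with $f_0,f_1\in\mathcal{F}$ satisfying $f_b(x_0)=\sigma_b$, set $\gamma:=\sqrt{2}\varepsilon/\norm{\sigma_0-\sigma_1}_1$ (which lies in $(0,1/2)$ by the hypothesis on $\varepsilon$), and define $\mu_b\in\mathrm{Prob}(\mathcal{X}\times\mathcal{D})$ to be supported on $\{x_0\}\times\mathcal{D}$ via $\mu_b(x_0,\sigma_i):=\tfrac{1}{2}+(-1)^{b\oplus i}\gamma$. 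A short calculation using the definition of $R_\mu$ then shows that the unique minimum-risk predictor in $\mathcal{F}$ under $\mu_b$ takes the value $\sigma_b$ at $x_0$, and that any hypothesis $h$ with $h(x_0)=\sigma_{1\oplus b}$ has excess risk $\gamma\norm{\sigma_0-\sigma_1}_1=\sqrt{2}\varepsilon>\varepsilon$ under $\mu_b$. Consequently, any $(\varepsilon,\delta)$-PAC learner must, with probability at least $1-\delta$, output a hypothesis that correctly identifies $b$.

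Reading the learner as a quantum hypothesis test between the two $m$-sample training states
\[
\rho_{\mu_b}^{\otimes m}=\bigl(|x_0\rangle\langle x_0|\otimes\tau_b\bigr)^{\otimes m},\qquad \tau_b:=\tfrac{\sigma_0+\sigma_1}{2}+(-1)^b\gamma(\sigma_0-\sigma_1),
\]
the Holevo--Helstrom bound gives $1-\delta\le \tfrac{1}{2}+\tfrac{1}{4}\norm{\rho_{\mu_0}^{\otimes m}-\rho_{\mu_1}^{\otimes m}}_1$. Combining this with the Fuchs--van de Graaf inequality, multiplicativity of the fidelity on tensor products, and the fact that the classical registers contribute fidelity one, I obtain $1-\delta\le \tfrac{1}{2}+\tfrac{1}{2}\sqrt{1-F(\tau_0,\tau_1)^{2m}}$, which rearranges to $F(\tau_0,\tau_1)^{2m}\le 4\delta(1-\delta)$ and hence
\[
m\;\ge\;\frac{\log\bigl(1/(4\delta(1-\delta))\bigr)}{2\log\bigl(1/F(\tau_0,\tau_1)\bigr)}.
\]

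It remains to establish the quadratic fidelity bound $-\log F(\tau_0,\tau_1)=O(\gamma^2)$; plugging in $\gamma^2=2\varepsilon^2/\norm{\sigma_0-\sigma_1}_1^2$ then produces the claimed $m=\Omega\bigl(\norm{\sigma_0-\sigma_1}_1^2\log(1/\delta)/\varepsilon^2\bigr)$. This is the main obstacle. The structural features I would exploit are that $\tau_0(\gamma)$ and $\tau_1(\gamma)$ agree at $\gamma=0$ with common value $\bar\sigma=\tfrac{1}{2}(\sigma_0+\sigma_1)$, share a common support for every admissible $\gamma$, and satisfy the exchange symmetry $\tau_0(-\gamma)=\tau_1(\gamma)$. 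The symmetry forces $\gamma\mapsto-\log F(\tau_0,\tau_1)$ to be an even analytic function of $\gamma$ that vanishes at the origin, hence of order $\gamma^2$. In the commuting case the Bhattacharyya formula makes this explicit; in the general non-commuting case I would either invoke second-order perturbation theory for the Uhlmann fidelity (via the Bures metric / quantum Fisher information of the path $\gamma\mapsto\tau_b$), or replace the fidelity-based argument by the quantum Bretagnolle--Huber inequality $\tfrac{1}{4}\norm{\rho-\sigma}_1^2\le 1-\exp(-D(\rho\|\sigma))$ combined with a second-order Taylor estimate $D(\tau_0\|\tau_1)=O(\gamma^2)$ on the Umegaki relative entropy. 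Keeping the hidden constants under control so that, after substituting $\gamma\sim\varepsilon/\norm{\sigma_0-\sigma_1}_1$, the prefactor comes out as precisely $\norm{\sigma_0-\sigma_1}_1^2$ is where the delicate bookkeeping lives.
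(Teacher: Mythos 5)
Your construction and reduction coincide with the paper's: a two-point family supported on a single instance, the observation that an $(\varepsilon,\delta)$-learner must identify the biased label (your bias $\gamma=\sqrt{2}\varepsilon/\norm{\sigma_0-\sigma_1}_1$ indeed makes the wrong prediction cost $\sqrt{2}\varepsilon>\varepsilon$, which is in fact slightly more careful than the paper's choice of $\lambda$), and then the chain Holevo--Helstrom bound, Fuchs--van de Graaf, and multiplicativity of the fidelity on tensor products, giving $m\geq\log\bigl(1/(4\delta(1-\delta))\bigr)/\bigl(2\log(1/F(\tau_0,\tau_1))\bigr)$. Up to that point your argument is correct and is essentially the paper's proof.

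The step you yourself flag as ``the main obstacle'', namely $-\log F(\tau_0,\tau_1)=O(\gamma^2)$ with a constant not depending on $\sigma_0,\sigma_1$, is left unproven, and this is a genuine gap rather than bookkeeping: your evenness/analyticity argument only yields a second-order coefficient that depends on the pair $(\sigma_0,\sigma_1)$ (essentially the Bures metric of the path at $\bar\sigma$), and a direct Taylor expansion of $D(\tau_0\|\tau_1)$ has the same defect, so the explicit prefactor $\norm{\sigma_0-\sigma_1}_1^2$ in the claimed bound does not yet follow. The missing ingredient is a concavity/data-processing argument, which is exactly how the paper closes the proof in one line: writing $\tau_b=(\tfrac12+\gamma)\sigma_b+(\tfrac12-\gamma)\sigma_{1-b}$, strong concavity of the fidelity (equivalently, monotonicity of fidelity under the preparation channel $|i\rangle\langle i|\mapsto\sigma_i$ applied to the two Bernoulli distributions) gives $F(\tau_0,\tau_1)\geq 2\sqrt{(\tfrac12+\gamma)(\tfrac12-\gamma)}=\sqrt{1-4\gamma^2}$, hence $\log\bigl(1/F(\tau_0,\tau_1)^2\bigr)\leq\log\bigl(1/(1-4\gamma^2)\bigr)=O(\gamma^2)$ with a universal constant once $\gamma$ is bounded away from $\tfrac12$. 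Inserting this closes your argument; note that the alternative routes you sketch (Bures/QFI perturbation theory, or quantum Bretagnolle--Huber plus a relative-entropy expansion) would themselves need precisely such a monotonicity step to make their constants state-independent, so the ``delicate bookkeeping'' you defer is the crux of the proof, not an afterthought.
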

\begin{proof} (Sketch)
As $\mathcal{F}$ is non-trivial, there exist concepts $f, g\in\mathcal{F}$ and a point $x\in\mathcal{X}$ s.t.~$f(x)=\sigma_0$ and $g(x)=\sigma_1$. Let $\lambda=\frac{\varepsilon}{2\norm{\sigma_0 - \sigma_1}_1}\in (0,1)$. Define probability distributions $\mu_{\pm}$ on $\mathcal{X}\times\mathcal{D}$ via
\begin{align*}
\mu_{\pm}(x,f(x)) = \frac{1\pm \lambda}{2},\quad \mu_{\pm}(x,g(x))=\frac{1\mp\lambda}{2}.
\end{align*}
By explicitly evaluating the risk $R_{\pm}(h)$, we see that achieving an excess risk $\leq\varepsilon$ with probability $\geq 1-\delta$, requires the learner to distinguish between the underlying distributions $\mu_\pm$, and thus the corresponding training data states $\rho_{\pm}^{\otimes m}$, with probability $\geq 1-\delta$.\\
It is well known (see, e.g., \cite{Nielsen.2009}, chapter $9$) that the optimal success probability of this quantum distinguishing task is given by $p_{\textrm{opt}} = \frac{1}{2}(1+\frac{1}{2}\norm{\rho_+^{\otimes m} - \rho_-^{\otimes m}}_1).$ Via the Fuchs-van de Graaf inequalities, which state that $$\frac{1}{2}\norm{\rho_1^{\otimes m} - \rho_2^{\otimes m}}_1 \leq \sqrt{1-F(\rho_1^{\otimes m}, \rho_2^{\otimes m})^2}
= \sqrt{1-F(\rho_1, \rho_2)^{2m}},$$ this can be upper-bounded using lower bounds on the fidelity $F(\rho_+^{\otimes m}, \rho_-^{\otimes m})=F(\rho_+, \rho_-)^m$. The fidelity $F(\rho_+, \rho_-)$ can be lower-bounded using its strong concavity and the explicit expressions for $\rho_\pm$. The result then follows by comparing the obtained upper bound with the required lower bound $p_{\textrm{opt}}\geq 1-\delta$.\\
See Appendix \ref{SctProofs} for a detailed proof.
\end{proof}
\\

For the proof of the VC-dimension-dependent part of the lower bound we need a well known observation about the eigenvalues of a statistical mixture of two pure quantum states, which is the content of the following

\begin{lemma}\label{LmmEigenvaluesStatisticalMixture}
Let $|\psi\rangle,|\phi\rangle\in\mathbb{C}^n$ be distinct pure quantum states. Let $\alpha,\beta\geq 0$ be real numbers. Then the non-zero eigenvalues of the mixture $\rho:=\alpha |\psi\rangle\langle\psi| + \beta |\phi\rangle\langle\phi|$ are given by
\begin{align*}
\lambda_{1/2}(\rho) = \frac{\alpha+\beta\pm\sqrt{(\alpha-\beta)^2 + 4\alpha\beta |\langle\psi |\phi\rangle|^2}}{2}.
\end{align*}
\end{lemma}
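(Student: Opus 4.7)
The plan is to reduce the $n$-dimensional problem to a $2\times 2$ eigenvalue computation by exploiting the fact that $\rho$ has rank at most two. Concretely, I would write $\rho = MM^{*}$ where $M$ is the $n\times 2$ matrix whose columns are $\sqrt{\alpha}\,|\psi\rangle$ and $\sqrt{\beta}\,|\phi\rangle$, i.e.\
\begin{align*}
M := \bigl[\,\sqrt{\alpha}\,|\psi\rangle \ \ \sqrt{\beta}\,|\phi\rangle\,\bigr]\in\mathbb{C}^{n\times 2}.
\end{align*}
Direct multiplication gives $MM^{*} = \alpha|\psi\rangle\langle\psi|+\beta|\phi\rangle\langle\phi| = \rho$.

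Next I would invoke the standard linear-algebra fact that $MM^{*}$ and $M^{*}M$ have the same non-zero eigenvalues with the same multiplicities, so it suffices to diagonalize the $2\times 2$ Gram matrix
\begin{align*}
M^{*}M = \begin{pmatrix} \alpha & \sqrt{\alpha\beta}\,\langle\psi|\phi\rangle \\ \sqrt{\alpha\beta}\,\langle\phi|\psi\rangle & \beta \end{pmatrix}.
\end{align*}
For this $2\times 2$ matrix one has $\tr(M^{*}M) = \alpha+\beta$ and $\det(M^{*}M) = \alpha\beta\bigl(1 - |\langle\psi|\phi\rangle|^{2}\bigr)$, so the quadratic formula yields the eigenvalues
\begin{align*}
\lambda_{1/2} = \frac{(\alpha+\beta) \pm \sqrt{(\alpha+\beta)^{2} - 4\alpha\beta\bigl(1-|\langle\psi|\phi\rangle|^{2}\bigr)}}{2}.
\end{align*}

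Finally, a one-line algebraic simplification of the discriminant,
\begin{align*}
(\alpha+\beta)^{2} - 4\alpha\beta\bigl(1-|\langle\psi|\phi\rangle|^{2}\bigr) = (\alpha-\beta)^{2} + 4\alpha\beta\,|\langle\psi|\phi\rangle|^{2},
\end{align*}
delivers the claimed formula. There is no substantive obstacle here; the only points that deserve a brief mention are that the identification $\rho = MM^{*}$ sidesteps having to pick an orthonormal basis for $\mathrm{span}\{|\psi\rangle,|\phi\rangle\}$ (which would otherwise require separate handling of the case $|\langle\psi|\phi\rangle|=1$), and that the formula gracefully covers the degenerate boundary cases $\alpha=0$, $\beta=0$, or $|\langle\psi|\phi\rangle|\in\{0,1\}$, where one of the two non-zero eigenvalues simply collapses to zero.
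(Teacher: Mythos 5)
Your proposal is correct. It differs from the paper's argument in the reduction step: the paper picks an orthonormal basis $\lbrace |0\rangle,|1\rangle\rbrace$ of $\mathrm{span}\lbrace|\psi\rangle,|\phi\rangle\rbrace$ with $|\psi\rangle=|0\rangle$ and $|\phi\rangle=\cos(\varphi)|0\rangle+\sin(\varphi)|1\rangle$, writes out the restriction of $\rho$ to that subspace as an explicit $2\times 2$ matrix, and then solves for the eigenvalues via its trace and determinant; you instead factor $\rho=MM^{*}$ with $M=\bigl[\sqrt{\alpha}\,|\psi\rangle\ \ \sqrt{\beta}\,|\phi\rangle\bigr]$ and invoke the standard fact that $MM^{*}$ and $M^{*}M$ share their non-zero spectrum, so that the relevant $2\times 2$ matrix is the Gram matrix $M^{*}M$. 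Both routes terminate in the same quadratic, since $\tr$ and $\det$ of the two $2\times 2$ matrices coincide ($\alpha+\beta$ and $\alpha\beta(1-|\langle\psi|\phi\rangle|^{2})$, respectively), and your simplification of the discriminant is correct. What your version buys is that it is coordinate-free: no Gram--Schmidt step, no angle parametrization, and the degenerate boundary cases ($\alpha=0$, $\beta=0$, $\langle\psi|\phi\rangle=0$) are absorbed without comment, as you note. What the paper's version buys is that it is entirely elementary -- it needs nothing beyond diagonalizing an explicit $2\times 2$ matrix -- and it exhibits the relevant two-dimensional block of $\rho$ concretely, which is in the same spirit as the block-diagonal eigenvalue bookkeeping done later in the proofs of Theorems \ref{ThmAgnosticVCDimLowerBound} and \ref{ThmRealizableVCDimLowerBound}. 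Either argument is perfectly adequate for the lemma.
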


With this we can now prove a sample complexity lower bound for the case of pure label states.

\begin{theorem}\label{ThmAgnosticVCDimLowerBound}
Let $\sigma_0=|\psi_0\rangle\langle\psi_0|,\sigma_1=|\psi_1\rangle\langle\psi_1|\in\mathcal{S}(\IC^n)$ be (distinct) pure quantum states, let $\varepsilon\in (0,\frac{\norm{\sigma_0-\sigma_1}_1}{8})$, $\delta\in (0,1-H\left(\tfrac{1}{4}\right))$. Let $\mathcal{F}\subset\mathcal{D}^\mathcal{X}$ be a non-trivial concept class s.t.~$\tilde{\mathcal{F}}$ has VC-dimension $d$. Suppose $\mathcal{A}$ is a learning algorithm that solves the binary classification task with classical instances and (distinct) label states $\sigma_0,\sigma_1$ and concept class $\mathcal{F}$ with confidence $1-\delta$ and accuracy $\varepsilon$ using $m=m(\varepsilon,\delta)$ examples. Then $m\geq \Omega\left(\frac{d}{\varepsilon^2}\right)$.
\end{theorem}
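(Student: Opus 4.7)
The plan is to reduce the learning problem to a many-hypothesis distinguishing task via VC shattering, and then to bound the information accessible to any measurement by combining the Holevo bound with a bitwise Fano argument. The pure-state hypothesis enters only at the very end to control a single-copy Holevo quantity through Lemma~\ref{LmmEigenvaluesStatisticalMixture}.

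First I fix a set $\{x_1,\ldots,x_d\}\subset\mathcal{X}$ shattered by $\tilde{\mathcal{F}}$ with corresponding witnesses $f_b\in\mathcal{F}$ for $b\in\{0,1\}^d$, i.e.\ $f_b(x_i)=\sigma_{b_i}$. For each $b$ I define a biased distribution $\mu_b$ on $\{x_1,\ldots,x_d\}\times\mathcal{D}$ by setting $\mu_b(x_i,\sigma_{b_i})=(1+\lambda)/(2d)$ and $\mu_b(x_i,\sigma_{1-b_i})=(1-\lambda)/(2d)$, with bias $\lambda:=8\varepsilon/\lVert\sigma_0-\sigma_1\rVert_1\in(0,1)$; the upper bound uses the hypothesis $\varepsilon<\lVert\sigma_0-\sigma_1\rVert_1/8$. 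A direct computation yields
\[
R_{\mu_b}(h)-R_{\mu_b}(f_b)\;=\;\frac{\lambda\lVert\sigma_0-\sigma_1\rVert_1}{2d}\bigl|\{i:h(x_i)\neq\sigma_{b_i}\}\bigr|,
\]
so any hypothesis $h$ with excess risk at most $\varepsilon$ encodes (via $h(x_i)=\sigma_{\hat b_i(h)}$) a string $\hat b\in\{0,1\}^d$ of Hamming distance at most $d/4$ from $b$. The PAC guarantee therefore forces $\hat b$ to recover $b$ within Hamming distance $d/4$ with probability $\ge 1-\delta$, uniformly in $b$.

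Now I take $B\sim\mathrm{Unif}(\{0,1\}^d)$ and model the training data as the classical-quantum state $\rho_B^{\otimes m}$ with $\rho_b=\tfrac{1}{d}\sum_i|x_i\rangle\langle x_i|\otimes\tau_{b_i}$, where $\tau_0=\tfrac{1+\lambda}{2}\sigma_0+\tfrac{1-\lambda}{2}\sigma_1$ and $\tau_1=\tfrac{1-\lambda}{2}\sigma_0+\tfrac{1+\lambda}{2}\sigma_1$. Let $Y$ denote all classical data the learner extracts from the quantum register (so $\hat b$ is a function of $Y$). Averaging the PAC estimate over $B$ and $Y$ gives $\bar p:=\tfrac{1}{d}\sum_i\Pr[\hat b_i\neq B_i]\le 1/4+\delta$. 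Applying Fano's inequality bitwise and using concavity of $H_2$,
\[
H(B\mid Y)\;\le\;\sum_{i=1}^d H_2\bigl(\Pr[\hat b_i\neq B_i]\bigr)\;\le\;d\,H_2\bigl(\tfrac{1}{4}+\delta\bigr),
\]
which together with the hypothesis $\delta<1-H(1/4)$ yields $I(B:Y)\ge d\bigl(1-H_2(\tfrac{1}{4}+\delta)\bigr)=\Omega(d)$. On the quantum side, Holevo's bound followed by subadditivity of the von Neumann entropy across the $m$ tensor factors gives $I(B:Y)\le\chi(\{\rho_b^{\otimes m}\}_b)\le m\,\chi(\{\rho_b\}_b)$.

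The main remaining (and technical) step is to show $\chi(\{\rho_b\}_b)=O(\lambda^2)$, and this is where Lemma~\ref{LmmEigenvaluesStatisticalMixture} and the pure-state hypothesis are used. The classical-quantum block structure together with the symmetry $S(\tau_0)=S(\tau_1)$ reduces the Holevo quantity to $\chi(\{\rho_b\}_b)=S\bigl(\tfrac{\sigma_0+\sigma_1}{2}\bigr)-S(\tau_0)$. Writing $c:=|\langle\psi_0|\psi_1\rangle|$ and applying Lemma~\ref{LmmEigenvaluesStatisticalMixture} with $(\alpha,\beta)=(\tfrac{1}{2},\tfrac{1}{2})$ and $(\alpha,\beta)=(\tfrac{1+\lambda}{2},\tfrac{1-\lambda}{2})$, the nonzero eigenvalues are $(1\pm c)/2$ and $(1\pm\sqrt{c^2+(1-c^2)\lambda^2})/2$ respectively, so
\[
\chi(\{\rho_b\}_b)\;=\;H_2\!\left(\tfrac{1-c}{2}\right)-H_2\!\left(\tfrac{1-\sqrt{c^2+(1-c^2)\lambda^2}}{2}\right).
\]
Both arguments lie in $[0,1/2]$, so a mean-value estimate for $H_2$ on this interval bounds this difference by $C_{\sigma_0,\sigma_1}\lambda^2$ with a constant depending only on $c\in[0,1)$; this is the one place where sharpness could be tricky (for $c$ near $1$ one has to track the denominator $c$ in the Taylor expansion, but for $c$ near $0$ the linearization of $\sqrt{\lambda^2+(1-\lambda^2)c^2}$ already yields the desired quadratic behaviour in $\lambda$). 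Combining $\Omega(d)\le I(B:Y)\le mC_{\sigma_0,\sigma_1}\lambda^2$ with $\lambda=\Theta(\varepsilon/\lVert\sigma_0-\sigma_1\rVert_1)$ finally delivers the desired $m=\Omega(d/\varepsilon^2)$.
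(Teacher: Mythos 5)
Your proposal is correct and takes essentially the same route as the paper: the identical shattered-set/biased-distribution construction reducing $\varepsilon$-accurate learning to recovering the string within Hamming distance $d/4$, an $\Omega(d)$ information lower bound (your bitwise Fano plus Holevo-bound argument is a repackaging of the paper's $I(A:B)\geq\Omega(d)$ and $I(A:B)\leq m\, I(A:B_1)$ steps, since for the cq training state the quantum mutual information is exactly the Holevo quantity), and the same single-copy $\mathcal{O}(\varepsilon^2)$ estimate obtained from the eigenvalues in Lemma \ref{LmmEigenvaluesStatisticalMixture}. The only difference is cosmetic: the paper Taylor-expands the explicit entropies where you invoke a mean-value bound on the binary entropy, which requires the same (state-dependent, and easily supplied) care you already flag, so no substantive gap remains.
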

\begin{proof} (Sketch)
We follow the information-theoretic proof strategy from \cite{Arunachalam.2018}. Let $S=(s_1,\ldots,s_d)\in\mathcal{X}$ be a set shattered by $\tilde{\mathcal{F}}$, for each $a\in\lbrace 0,1\rbrace^d$ define the distribution $\mu_a$ on $\lbrace 1,\ldots,d\rbrace\times\lbrace 0,1\rbrace$ via
\begin{align*}
\mu_a(i,b) := \frac{1}{2d}\left( 1 + (-1)^{a_i + b} \frac{8\varepsilon}{\norm{\sigma_0-\sigma_1}_1}\right).
\end{align*}
Note that $\forall a\in\lbrace 0,1\rbrace^d\ \exists f_a\in\tilde{\mathcal{F}}: f_a(s_i)=a_i$ by shattering and that $f_a$ is a minimum-error concept w.r.t.~$\mu_a$. By evaluating the excess error of an $f_{\tilde{a}}$ compared to $f_a$, we see that solving the learning problem with confidence $1-\delta$ requires the learner to output, with probability $\geq 1-\delta$, a hypothesis described by a string whose Hamming distance to the true underlying string is $\leq \frac{d}{4}$. We can use this observation to obtain the lower bound $I(A:B)\geq \Omega(d)$ on the mutual information between underlying string $A$ (drawn uniformly at random) and corresponding quantum training data $B$.\\
We can also upper-bound the mutual information. A standard argument shows $I(A:B)\leq m\cdot I(A:B_1)$, where $m$ is the number of copies of the quantum example state and $B_1$ describes a single quantum example state. Using Lemma \ref{LmmEigenvaluesStatisticalMixture} and the explicit expression for a quantum example state, we can compute $I(A:B_1)$ and use Taylor expansion to see that $I(A:B_1)\leq \mathcal{O}(\varepsilon^2)$. Comparing the lower and upper bounds on $I(A:B)$ now gives $m\geq \Omega\left(\frac{d}{\varepsilon^2}\right)$.\\
See Appendix \ref{SctProofs} for a detailed proof.
\end{proof}
\\

If we now combine Lemma \ref{LmmAgnosticConfidenceLowerBound} and Theorem \ref{ThmAgnosticVCDimLowerBound} with the result of Subsection \ref{SbSctAgnosticUpperBound} we obtain

\begin{corollary}
Let $\sigma_0,\sigma_1\in\mathcal{S}(\IC^n)$ be (distinct) pure quantum states, let $\varepsilon\in (0,\frac{\norm{\sigma_0-\sigma_1}_1}{8})$, $\delta\in (0,1-H\left(\tfrac{1}{4}\right))$. Let $\mathcal{F}\subset\mathcal{D}^\mathcal{X}$ be a non-trivial concept class s.t.~$\tilde{\mathcal{F}}$ has VC-dimension $d$. Then a sample size of $\Theta\left( \frac{d}{\varepsilon^2} + \frac{\log\nicefrac{1}{\delta}}{\varepsilon^2} \right)$ is necessary and sufficient for solving the binary classification task with classical instances and quantum labels $\sigma_0,\sigma_1$ and hypothesis class $\mathcal{F}$ with confidence $1-\delta$ and accuracy $\varepsilon$.
\end{corollary}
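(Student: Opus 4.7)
The plan is to prove the corollary by assembling three already-stated ingredients: the upper bound from Subsection \ref{SbSctAgnosticUpperBound}, the confidence-dependent lower bound of Lemma \ref{LmmAgnosticConfidenceLowerBound}, and the VC-dimension-dependent lower bound of Theorem \ref{ThmAgnosticVCDimLowerBound}. Since $\sigma_0$ and $\sigma_1$ are fixed distinct pure states, the quantity $\norm{\sigma_0-\sigma_1}_1$ is a positive constant independent of $\varepsilon$, $\delta$, and $d$; we therefore suppress it inside the $\Theta(\cdot)$ asymptotics.

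For the upper bound, I invoke the noise-corrected Holevo-Helstrom strategy analyzed in Subsection \ref{SbSctAgnosticUpperBound}. Its analysis, combined with the Rademacher/VC comparison $\hat{\mathcal{R}}(\tilde{\mathcal{F}})\leq 31\sqrt{d/m}$ and Lemma \ref{LmmRademacherComplexityNoiseCorrectedClass}, gives sample complexity $\mathcal{O}\bigl(\tfrac{d}{\varepsilon^2} + \tfrac{\log(1/\delta)}{\varepsilon^2}\bigr)$ in the regime of admissible $\varepsilon,\delta$. This directly contributes the $\mathcal{O}$-part of the claimed $\Theta$.

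For the lower bound, the strategy is to invoke both lower-bound statements separately and take the larger: since $\mathcal{F}$ is non-trivial and $\tilde{\mathcal{F}}$ has VC-dimension $d$, Lemma \ref{LmmAgnosticConfidenceLowerBound} yields $m\geq\Omega\bigl(\tfrac{\log(1/\delta)}{\varepsilon^2}\bigr)$ for the prescribed range of $\varepsilon$, and Theorem \ref{ThmAgnosticVCDimLowerBound} (which requires pure labels, exactly our setting) yields $m\geq\Omega\bigl(\tfrac{d}{\varepsilon^2}\bigr)$ for the (slightly more restrictive) range specified in its statement. Using $\max\{a,b\}\geq \tfrac{1}{2}(a+b)$ for nonnegative $a,b$, these combine to $m\geq\Omega\bigl(\tfrac{d}{\varepsilon^2} + \tfrac{\log(1/\delta)}{\varepsilon^2}\bigr)$, matching the upper bound and giving the claimed $\Theta$.

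The only substantive checks are bookkeeping ones: verifying that the ranges of $\varepsilon$ and $\delta$ assumed in Lemma \ref{LmmAgnosticConfidenceLowerBound} and Theorem \ref{ThmAgnosticVCDimLowerBound} are both contained in the corollary's hypotheses (they are, since we restrict to $\varepsilon\in(0,\tfrac{\norm{\sigma_0-\sigma_1}_1}{8})\subset(0,\tfrac{\norm{\sigma_0-\sigma_1}_1}{2\sqrt{2}})$ and $\delta\in(0,1-H(\tfrac{1}{4}))\subset(0,1)$), and confirming that the pure-state hypothesis is only needed for the VC-dependent lower bound. No additional computation beyond collecting these ingredients is required; the one place that could look like an obstacle, namely obtaining matching $\varepsilon$-dependence between the agnostic upper and lower bounds, is already resolved by the explicit $1/\varepsilon^2$ scaling on both sides, so there is no mismatch to reconcile.
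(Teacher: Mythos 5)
Your proposal is correct and is essentially the paper's own argument: the corollary is obtained exactly by combining the agnostic upper bound of Subsection \ref{SbSctAgnosticUpperBound} with Lemma \ref{LmmAgnosticConfidenceLowerBound} and Theorem \ref{ThmAgnosticVCDimLowerBound}, treating $\norm{\sigma_0-\sigma_1}_1$ as a constant so the $\Theta$ is stated only in terms of $d$, $\varepsilon$, $\delta$. Your parameter-range checks and the remark that purity is needed only for the VC-dependent lower bound match the paper's treatment.
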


Therefore we have shown that the strategy from Subsection \ref{SbSctAgnosticUpperBound} is, for pure states, optimal in sample complexity w.r.t.~its dependence the VC-dimension, the accuracy and the confidence. But we do not make a statement on optimality w.r.t.~the dependence on the distinguishability of the label states, because the parameter $\norm{\sigma_0 - \sigma_1}_1$ is lacking from our lower bound.

\subsection{The Realizable Case}
We now show analogous lower bounds for the sample complexity in the realizable scenario with the same proof strategy.

\begin{lemma}\label{LmmRealizableConfidenceLowerBound}
Let $\sigma_0,\sigma_1\in\mathcal{S}(\IC^n)$, let $\varepsilon\in (0,\tfrac{\norm{\sigma_0-\sigma_1}_1}{2})$, $\delta\in (0,\frac{1}{2})$. Let $\mathcal{F}\subset\mathcal{D}^\mathcal{X}$ be a non-trivial concept class. Suppose $\mathcal{A}$ is a learning algorithm which solves the binary classification task with classical instances and (distinct) label states $\sigma_0,\sigma_1$ and concept class $\mathcal{F}$ with confidence $1-\delta$ and accuracy $\varepsilon$ using $m=m(\varepsilon,\delta)$ examples in the realizable scenario. Then $m\geq \Omega\left(\frac{\log\nicefrac{1}{\delta}}{\varepsilon}\right)$.
\end{lemma}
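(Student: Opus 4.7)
The plan is to adapt the two-distribution distinguishability reduction from the proof of Lemma \ref{LmmAgnosticConfidenceLowerBound} to the realizable scenario. The essential new twist is that each of the two hard-to-distinguish distributions must itself be realizable by some $f\in\mathcal{F}$, so I will spread the probability mass across two input points rather than only one, using a dilution parameter $p$ that encodes the target accuracy.

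Invoking the non-triviality of $\mathcal{F}$, I would first select $f_0,f_1\in\mathcal{F}$ and inputs $x_0,x_1\in\mathcal{X}$ with $f_0(x_0)=\sigma_0$, $f_1(x_0)=\sigma_1$, and $f_0(x_1)=f_1(x_1)=:\tau$. With $p$ chosen to be a suitable constant multiple of $\varepsilon/\lVert\sigma_0-\sigma_1\rVert_1$ (strictly larger than $2\varepsilon/\lVert\sigma_0-\sigma_1\rVert_1$ and, by the hypothesis on $\varepsilon$, strictly smaller than $1$), I would define
\begin{align*}
\mu_i := p\,\delta_{(x_0,f_i(x_0))} + (1-p)\,\delta_{(x_1,\tau)}, \qquad i\in\{0,1\},
\end{align*}
each of which is realized by $f_i\in\mathcal{F}$. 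Expanding $R_{\mu_i}(h)$ shows that any hypothesis $h$ with excess risk $\leq\varepsilon$ against $\mu_i$ is forced to satisfy $h(x_0)=f_i(x_0)$, since otherwise the $x_0$-term alone contributes $p\cdot\tfrac{1}{2}\lVert\sigma_0-\sigma_1\rVert_1>\varepsilon$ by the choice of $p$. Hence an $(\varepsilon,\delta)$-PAC learner induces a two-outcome measurement on $m$ copies of the single-example state
\begin{align*}
\rho_i \;=\; p\,|x_0\rangle\langle x_0|\otimes\sigma_i \;+\; (1-p)\,|x_1\rangle\langle x_1|\otimes\tau
\end{align*}
that correctly identifies $i$ with probability $\geq 1-\delta$.

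From here the computation mirrors that of Lemma \ref{LmmAgnosticConfidenceLowerBound}. The Helstrom bound combined with the Fuchs--van de Graaf inequality gives $1-\delta\leq\tfrac{1}{2}\bigl(1+\sqrt{1-F(\rho_0,\rho_1)^{2m}}\bigr)$. Because $\rho_0$ and $\rho_1$ are block-diagonal along the orthogonal subspaces distinguished by $|x_0\rangle$ and $|x_1\rangle$ and share their $|x_1\rangle$-block, the fidelity collapses to $F(\rho_0,\rho_1)=p\,F(\sigma_0,\sigma_1)+(1-p)=1-p\bigl(1-F(\sigma_0,\sigma_1)\bigr)$. Rearranging and using $-\log(1-x)\leq 2x$ for $x\in[0,\tfrac12]$ yields the claimed bound $m\geq\Omega\bigl(\log(1/\delta)/\varepsilon\bigr)$, with the hidden constants depending only on the fixed label-state quantities $\lVert\sigma_0-\sigma_1\rVert_1$ and $F(\sigma_0,\sigma_1)$.

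The main obstacle I anticipate is the opening structural step: guaranteeing two concepts in $\mathcal{F}$ with the required disagreement-at-$x_0$ and agreement-at-$x_1$ pattern is strictly stronger than the bare non-triviality used in the agnostic Lemma \ref{LmmAgnosticConfidenceLowerBound}, precisely because the dilution factor $(1-p)$ on the $x_1$-component must still be absorbed by a function $f_i\in\mathcal{F}$ to preserve realizability. In the degenerate case where no such pattern exists (essentially, $\mathcal{F}$ being a complementary pair), one recovers only the weaker $\Omega(\log(1/\delta))$ bound via Dirac distributions; otherwise, the remainder of the argument is a direct transcription of the agnostic reasoning.
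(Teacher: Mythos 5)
Your proposal is correct and follows essentially the same route as the paper: the paper likewise puts mass $1-\lambda$ on a point where two concepts agree and mass $\lambda \propto \varepsilon/\lVert\sigma_0-\sigma_1\rVert_1$ on a point where they disagree, reduces the PAC guarantee to distinguishing $\rho_1^{\otimes m}$ from $\rho_2^{\otimes m}$, and applies the Helstrom bound with Fuchs--van de Graaf and multiplicativity of the fidelity; your exact block-diagonal fidelity formula $F(\rho_0,\rho_1)=pF(\sigma_0,\sigma_1)+(1-p)$ merely replaces the paper's lower bound $F\geq 1-\tfrac{1}{2}\lVert\rho_1-\rho_2\rVert_1$, an immaterial variation. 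The agreement/disagreement pattern you flag as a possible obstacle is exactly what the paper reads off from its (implicit) notion of non-triviality in the realizable setting, so your construction rests on the same assumption as the published proof.
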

\begin{proof}
This can be proved similarly to Lemma \ref{LmmAgnosticConfidenceLowerBound}. See Appendix \ref{SctProofs} for a detailed proof.
\end{proof}

We now provide the analog of Theorem \ref{ThmAgnosticVCDimLowerBound} for the realizable case.

\begin{theorem}\label{ThmRealizableVCDimLowerBound}
Let $\sigma_0=|\psi_0\rangle\langle\psi_0|,\sigma_1=|\psi_1\rangle\langle\psi_1|\in\mathcal{S}(\IC^n)$ be (distinct) pure quantum states, let $\varepsilon\in (0,\frac{\norm{\sigma_0-\sigma_1}_1}{8})$, $\delta\in (0,\frac{1}{2})$. Let $\mathcal{F}\subset\mathcal{D}^\mathcal{X}$ be a non-trivial concept class s.t.~$\tilde{\mathcal{F}}$ has VC-dimension $d+1$. Suppose $\mathcal{A}$ is a learning algorithm which solves the binary classification task with classical instances and (distinct) label states $\sigma_0,\sigma_1$ and concept class $\mathcal{F}$ with confidence $1-\delta$ and accuracy $\varepsilon$ using $m=m(\varepsilon,\delta)$ examples in the realizable case. Then $m\geq \Omega\left(\frac{d}{\varepsilon}\right)$.
\end{theorem}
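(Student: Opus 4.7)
The plan is to adapt the information-theoretic lower bound strategy from the sketch of Theorem \ref{ThmAgnosticVCDimLowerBound} (originating in \cite{Arunachalam.2018}) to the realizable setting, by replacing the ``noisy uniform'' distribution family used there with a ``peaked'' family supported on a shattered set of size $d+1$, so that one anchor point absorbs most of the mass and every distribution stays realizable. Concretely, pick $S=\{s_0,s_1,\ldots,s_d\}\subset\mathcal{X}$ shattered by $\tilde{\mathcal{F}}$, set $p:=8\varepsilon/\norm{\sigma_0-\sigma_1}_1\in(0,1)$, and for each $a\in\{0,1\}^d$ define
\begin{align*}
\mu_a(s_0,\sigma_0)=1-p,\qquad \mu_a(s_i,\sigma_{a_i})=\tfrac{p}{d}\quad\text{for } 1\leq i\leq d.
\end{align*}
Shattering provides a realizing $f_a\in\mathcal{F}$ for every $a$, so $\mu_a$ is realizable with zero optimal risk.

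First I would translate ``small excess risk'' into ``small Hamming distance to $a$''. A direct computation gives
\begin{align*}
R_{\mu_a}(h)=\tfrac{\norm{\sigma_0-\sigma_1}_1}{2}\Bigl[(1-p)\mathds{1}_{\tilde{h}(s_0)\neq 0}+\tfrac{p}{d}\,d_H\bigl(\tilde{h}|_{\{s_1,\ldots,s_d\}},a\bigr)\Bigr],
\end{align*}
where $\tilde{h}(s_i)\in\{0,1\}$ records which of $\sigma_0,\sigma_1$ the hypothesis assigns to $s_i$. For $\varepsilon$ a small constant fraction of $\norm{\sigma_0-\sigma_1}_1$, the prefactor $1-p$ is bounded away from zero, so $R_{\mu_a}(h)\leq\varepsilon$ forces both $\tilde{h}(s_0)=0$ and $d_H(\tilde{h}|_{\{s_1,\ldots,s_d\}},a)\leq d/4$. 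Hence, with probability $\geq 1-\delta$ over the training data, the learner outputs a string $\tilde{A}$ inside the Hamming ball of radius $d/4$ around the (random) true $a$.

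Next I would run the standard mutual-information sandwich. Let $A\sim U(\{0,1\}^d)$ and let $B$ denote the $m$-fold classical--quantum training data state whose single-copy form is
\begin{align*}
\rho_a=(1-p)|s_0\rangle\langle s_0|\otimes\sigma_0+\tfrac{p}{d}\sum_{i=1}^d|s_i\rangle\langle s_i|\otimes\sigma_{a_i}.
\end{align*}
The Hamming-ball volume bound $2^{H(1/4)d}$ together with the success guarantee gives, via a Fano-type estimate, $I(A:\tilde{A})\geq(1-H(1/4))(1-\delta)d-O(1)=\Omega(d)$ for $\delta<\tfrac{1}{2}$. Data-processing together with the Holevo bound and subadditivity across i.i.d.\ copies yields $I(A:\tilde{A})\leq\chi(A:B)\leq m\,\chi(A:B_1)$. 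The block-diagonal structure of $\rho_a$ across the orthogonal inputs $|s_i\rangle$ and the purity of the label states (via Lemma \ref{LmmEigenvaluesStatisticalMixture}) make the single-copy Holevo quantity clean: $S(\rho_a)=H(\vec{p})$ is independent of $a$ while $S(\bar\rho)=H(\vec{p})+p\,S(\bar\sigma)$ with $\bar\sigma=\tfrac{\sigma_0+\sigma_1}{2}$, hence $\chi(A:B_1)=p\,S(\bar\sigma)\leq p=8\varepsilon/\norm{\sigma_0-\sigma_1}_1$. Comparing the two bounds yields $m\geq\Omega(d/\varepsilon)$.

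The main obstacle is the joint tuning of $p$ and the Hamming-radius threshold: $p$ must scale like $\varepsilon$ so that the single-copy Holevo information stays $O(\varepsilon)$, yet the accuracy constraint must still translate into a Hamming radius strictly below $d/2$, otherwise the volume factor $2^{H(\cdot)d}$ swallows the $d$ bits of entropy of $A$ and the lower bound collapses. Tracking constants through this balance, together with the requirement $p<1$, is where the hypothesis $\varepsilon<\norm{\sigma_0-\sigma_1}_1/8$ enters; and the purity assumption is precisely what makes $S(\rho_a)$ a constant function of $a$ so that the single-copy Holevo calculation closes up cleanly.
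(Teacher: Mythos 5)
Your proposal matches the paper's proof essentially step for step: the same peaked, realizable hard family supported on a shattered set of size $d+1$ with weight $1-p$ on the anchor point and $p=8\varepsilon/\norm{\sigma_0-\sigma_1}_1$, the same reduction of the accuracy guarantee to Hamming distance $\leq d/4$, and the same Fano/Holevo sandwich $\Omega(d)\leq I(A:B)\leq m\, I(A:B_1)$. Your single-copy bound $\chi(A:B_1)=p\,S\bigl(\tfrac{\sigma_0+\sigma_1}{2}\bigr)\leq p=\mathcal{O}(\varepsilon)$, obtained from the block-diagonal entropy identity, is exactly the quantity the paper evaluates by explicitly listing eigenvalues via Lemma \ref{LmmEigenvaluesStatisticalMixture}, so this is the same argument written slightly more compactly.
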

\begin{proof}
This can be proved similarly to Theorem \ref{ThmAgnosticVCDimLowerBound} . See Appendix \ref{SctProofs} for a detailed proof.
\end{proof}
\\

Thus, we have obtained a sample complexity lower bound that matches the upper bound proved in Subsection \ref{SbSctRealizableUpperBound} in the dependence on the VC-dimension, the confidence and the accuracy, but we do not make a statement about optimality w.r.t.~the dependence on $\norm{\sigma_0-\sigma_1}_1$.

\begin{remark}
As already discussed in subsection \ref{SbSctBasicQIT}, in proving the sample complexity lower bounds we resort to the Heisenberg picture, which allows us to absorb the intermediate quantum channels performed by a learner into the measurement. These lower bounds therefore even hold for quantum learning algorithms that perform coherent and adaptive measurements on the training data. In particular, the information-theoretic complexity of our learning problem does not change if we restrict the quantum learner to only performing two-outcome POVMs locally (i.e., on one subsystem only). This is maybe not too much of a surprise, since the optimal measurement for distinguishing states drawn uniformly at random from $\{\bigotimes_{i=1}^m \sigma_{x_i}\}_{x\in\{0,1\}^m}$ can, using the Holevo-Yuen-Kennedy-Lax optimality criterion \cite{Holevo.1973,Yuen.1975}, be seen to be exactly given by local Holevo-Helstrom measurements.
\end{remark}

\section{Conclusion and Outlook}
We have proposed a novel way of modifying the classical binary classification problem to obtain a quantum counterpart. The conceptual difference to the framework of quantum PAC learning as discussed in \cite{Arunachalam.2017} is that we work with maps whose outputs are themselves quantum states, not classical labels. This naturally gives rise to training data given by quantum states, which is one aspect in which our setting differs from \cite{Aaronson.2007}.\\

Using results from classical learning theory on dealing with classification noise in the training data, we exhibited learning strategies (based on the Holevo-Helstrom measurement) for binary classification with classical instances and quantum labels. The learning strategies consist of two main steps: First, classical information is extracted from the training data by performing a (localized) measurement. Second, classical learning strategies are applied. We complemented these procedures by sample complexity lower bounds thereby establishing the information-theoretic optimality of these strategies for pure label states w.r.t.~the dependence on VC-dimension, confidence and accuracy.\\

We conclude with some open questions that we leave open for further research:
\begin{itemize}
\item Can we derive sample complexity lower bounds which explicitly incorporate factors related to the hardness of distinguishing $\sigma_0$ and $\sigma_1$, e.g., in terms of $\norm{\sigma_0-\sigma_1}_1$ or $\max\lbrace \tr[E_0\sigma_1],\tr[E_1\sigma_0]\}$? Or can the corresponding factors in the upper bounds be eliminated? Could this be related to another complexity measure from classical learning theory, the ``fat-shattering dimension'' of the class $$\{\mathcal{X}\times\mathcal{E}(\IC^d) \ni (x,E)\mapsto \tr[Ef(x)]~|~f\in\mathcal{F}\}?$$
\item Our analysis is focused on the information-theoretic part of the learning problem, i.e., the sample complexity. Can we improve the computational complexity?
\item For deriving our sample complexity upper bounds, we used specific classical learning procedures applied to the post-measurement training data. In the agnostic case, we use empirical risk minimization, in the realizable case we use a combination of a minimum-disagreement approach with a subsampling procedure. In both cases, we decided for these algorithms to achieve the (essentially) optimal sample complexity characterized via the VC-dimension.\\
However, we could use other classical learning procedures for ``post-processing''. Can we identify situations in which procedures like structural risk minimization, compression schemes, or stable learning procedures yield useful sample complexity bounds?
\item We considered the case of classical instances. Can this be extended to a scenario of quantum instances with classical (or even quantum) labels? Whereas we were able to study the case of classical instances and quantum labels with methods from learning with label noise, once the instances themselves are quantum, we might have to employ ideas from learning models with restricted access to the instances such as that of ``learning with restricted focus of attention'' proposed in \cite{BenDavid.1998}.
\item Our strategy uses the Holevo-Helstrom measurement which can be understood as inducing the minimum amount of noise. However, in classical learning theory it is well known that adding noise to the training data can be helpful in preventing overfitting. In this spirit, can we justify other measurements than the Holevo-Helstrom measurement?
\item We assumed throughout our analysis that the learning algorithm has to output a hypothesis that maps into $\lbrace \sigma_0,\sigma_1\rbrace$. What if we allow for hypotheses that map into $\textrm{conv}\left(\lbrace \sigma_0,\sigma_1\rbrace\right)$ or $\mathcal{S}(\IC^d)$?
\item Finally, we assume throughout that the label states $\sigma_0$, $\sigma_1$ are known in advance. Can this assumption be removed? Here, it might be helpful that Theorem \ref{NoisySampleComplexityUpperBound} does not need explicit knowledge of the error rates $\eta_0$, $\eta_1$, but merely of an upper bound $\eta_b$ on them.
\end{itemize}

\vfill
\section*{Acknowledgements}
M.C.C.~wants to thank Michael M.~Wolf for suggesting this problem, Gael Sentís and Otfried Gühne for the opportunity to present and discuss the ideas of this paper at the University of Siegen, Srinivasan Arunachalam for his detailed feedback on an earlier draft, and Benedikt Graswald for discussions leading to Example \ref{ExmGroundStatePreparationLearningProblem}. Also, M.C.C.~thanks the anonymous reviewers at QTML $2020$ and at Springer Quantum Machine Intelligence for their suggestions.\\
Support from the TopMath Graduate Center of TUM the Graduate School at the Technische Universität München, Germany, from the TopMath Program at the Elite Network of Bavaria, and from the German Academic Scholarship Foundation (Studienstiftung des deutschen Volkes) is gratefully acknowledged.

\newpage
\setcounter{secnumdepth}{0}
\defbibheading{head}{\section{References}}
\printbibliography[heading=head]

\newpage
\appendix
\section*{Appendix}
\setcounter{secnumdepth}{2}
\section{Proofs}\label{SctProofs}
\textbf{Proof of Lemma \ref{LmmRademacherComplexityNoiseCorrectedClass}:}
Let $z=((x_i,y_i))_{i=1}^m\in (\mathcal{X}\times\{ 0,1\})^m$. If we use $\mathds{1}_{\tilde{f}(x_i)\neq y_i} = \frac{1 - (1-2\tilde{f}(x_i))(1-2y_i)}{2}$ and $\mathds{1}_{\tilde{f}(x_i)= y_i} = \frac{1 + (1-2\tilde{f}(x_i))(1-2y_i)}{2}$, then we can rewrite
\begin{align*}
\hat{\mathcal{R}}(\tilde{\mathcal{G}})
&= \mathbb{E}_{\sigma}[\sup\limits_{\tilde{f}\in\tilde{\mathcal{F}}} \frac{1}{m}\sum\limits_{i=1}^m \sigma_i \tilde{\ell}(\tilde{f}(x_i),y_i)]\\
&= \mathbb{E}_{\sigma}\left[\sup\limits_{\tilde{f}\in\tilde{\mathcal{F}}} \frac{1}{m}\sum\limits_{i=1}^m \sigma_i \frac{1}{1-\eta_0-\eta_1}\left( (1-\eta_{1\oplus y_i}) \frac{1 - (1-2\tilde{f}(x_i))(1-2y_i)}{2} - \eta_{y_i}\frac{1 + (1-2\tilde{f}(x_i))(1-2y_i)}{2} \right)\right].
\end{align*}
Next, we use that $\mathbb{E}_\sigma [\sigma_i]=0$ and that $\sigma_i$ and $(1-2y_i)\sigma_i$ have the same distribution for all $i$. With this we obtain from the above
\begin{align*}
\hat{\mathcal{R}}(\tilde{\mathcal{G}})
&= \frac{1}{1-\eta_0-\eta_1}\mathbb{E}_{\sigma}\left[\sup\limits_{\tilde{f}\in\tilde{\mathcal{F}}} \frac{1}{m}\sum\limits_{i=1}^m \sigma_i  (1-\eta_{1\oplus y_i} + \eta_{y_i})\tilde{f}(x_i)\right]\\
&= \frac{1}{2(1-\eta_0-\eta_1)}\mathbb{E}_{\sigma_2,\ldots,\sigma_m}\Big[\sup\limits_{\tilde{f},\tilde{f}'\in\tilde{\mathcal{F}}} \frac{1}{m}\underbrace{(1-\eta_{1\oplus y_1} + \eta_{y_1}) (\tilde{f}(x_1)-\tilde{f}'(x_1))}_{\leq 2\lvert \tilde{f}(x_1)-\tilde{f}'(x_1)\rvert}\\
&\hphantom{\frac{1}{2(1-\eta_0-\eta_1)}\mathbb{E}_{\sigma_2,\ldots,\sigma_m}\Big[\sup\limits_{\tilde{f},\tilde{f}'\in\tilde{\mathcal{F}}} }~+\frac{1}{m}\sum\limits_{i=2}^m \sigma_i  (1-\eta_{1\oplus y_i} + \eta_{y_i})(\tilde{f}(x_i)+\tilde{f}'(x_i))\Big]\\
&\leq \frac{1}{1-\eta_0-\eta_1}\mathbb{E}_{\sigma}\left[ \sup\limits_{\tilde{f}\in\tilde{\mathcal{F}}} \frac{2}{m}\sigma_1\tilde{f}(x_1) + \frac{1}{m}\sum\limits_{i=2}^m \sigma_i (1-\eta_{1\oplus y_i} + \eta_{y_i})\tilde{f}(x_i)\right],
\end{align*}
where the last step used that the expression is invariant w.r.t.~interchanging $\tilde{f}$ and $\tilde{f}'$, so we can drop the absolute value. Now we can iterate this reasoning for $i=2,\ldots,m$ and obtain
\begin{align*}
\hat{\mathcal{R}}(\tilde{\mathcal{G}}
\leq \frac{2}{1-\eta_0-\eta_1}\mathbb{E}_{\sigma}\left[ \sup\limits_{\tilde{f}\in\tilde{\mathcal{F}}}\frac{1}{m}\sum\limits_{i=1}^m \sigma_i \tilde{f}(x_i)\right]
= \frac{2}{1-\eta_0-\eta_1}\hat{\mathcal{R}}(\tilde{\mathcal{F}}),
\end{align*}
the desired inequality.\hfill $\blacksquare$\\

\textbf{Proof of Lemma \ref{LmmAgnosticConfidenceLowerBound}:}
As $\mathcal{F}$ is non-trivial, there exist concepts $f, g\in\mathcal{F}$ and a point $x\in\mathcal{X}$ s.t.~$f(x)=\sigma_0$ and $g(x)=\sigma_1$. Let $\lambda\in (0,1)$ (to be chosen appropriately later in the proof). Define probability distributions $\mu_{\pm}$ on $\mathcal{X}\times\mathcal{D}$ via
\begin{align*}
\mu_{\pm}(x,f(x)) = \frac{1\pm \lambda}{2},\quad \mu_{\pm}(x,g(x))=\frac{1\mp\lambda}{2}.
\end{align*}
The risk of a hypothesis $h\in\mathcal{D}^\mathcal{X}$ w.r.t.~these probability measures is given by
\begin{align*}
R_\pm (h) &= \frac{1\pm \lambda}{4}\norm{\sigma_0 - h(x)}_1 + \frac{1\mp\lambda}{4}\norm{\sigma_1 - h(x)}_1\\
&=\begin{cases}
\frac{1\pm \lambda}{4}\norm{\sigma_0 - \sigma_1}_1\quad \textrm{if } h(x) = \sigma_1\\
\frac{1\mp \lambda}{4}\norm{\sigma_0 - \sigma_1}_1\quad \textrm{if } h(x) = \sigma_0
\end{cases},
\end{align*}
in particular the optimal achievable risk is $\frac{1- \lambda}{4}\norm{\sigma_0 - \sigma_1}_1$. Note that a hypothesis which predicts the suboptimal label state for $x$ has an excess risk of
\begin{align*}
\frac{1+ \lambda}{4}\norm{\sigma_0 - \sigma_1} - \frac{1- \lambda}{4}\norm{\sigma_0 - \sigma_1}_1
= \frac{\lambda}{2}\norm{\sigma_0 - \sigma_1}_1.
\end{align*}
So if we pick $\lambda=\frac{\varepsilon}{2\norm{\sigma_0 - \sigma_1}_1}<1$, then in order to achieve an excess risk $\leq\varepsilon$ with probability $\geq 1-\delta$, the learning algorithm has to be able to distinguish between the underlying distributions $\mu_\pm$ with probability $\geq 1-\delta$.\\
As the algorithm has access to the underlying distribution only via the training data, this means that the algorithm has to be able to distinguish the corresponding training data ensembles with probability $\geq 1-\delta$. Here, we observe that the training data being drawn i.i.d. according to $\mu_\pm$ is equivalent to the learning algorithm having access to $m$ copies of the state
\begin{align*}
\rho_\pm := \mu_\pm (x,f(x)) |x\rangle\langle x|\otimes\sigma_0 + \mu_\pm (x,g(x)) |x\rangle\langle x|\otimes\sigma_1,
\end{align*}
because this mixed state simply describes the statistical mixture. The optimal success probability for distinguishing between two quantum states is a well-studied object in quantum information theory. It can be characterized by the trace distance between the two states and is given (in our case) by (see, e.g., \cite{Nielsen.2009})
\begin{align*}
p_{\textrm{opt}} = \frac{1}{2}(1+\frac{1}{2}\norm{\rho_+^{\otimes m} - \rho_-^{\otimes m}}_1).
\end{align*}
As the trace distance of tensor products is not that easy to deal with, we will instead work with the fidelity defined as
\begin{align*}
F(\rho,\sigma):= \tr[\sqrt{\rho^{\frac{1}{2}}\sigma\rho^{\frac{1}{2}}}].
\end{align*}
According to the Fuchs-van de Graaf inequalities we have
\begin{align*}
\frac{1}{2}\norm{\rho_+^{\otimes m} - \rho_-^{\otimes m}}_1 \leq \sqrt{1-F(\rho_+^{\otimes m}, \rho_-^{\otimes m})^2}
= \sqrt{1-F(\rho_+, \rho_-)^{2m}},
\end{align*}
where the last steps uses multiplicativity of the fidelity under tensor products. Now we require $p_\textrm{opt}\geq 1-\delta$ and rearrange to obtain
\begin{align*}
F(\rho_+, \rho_-)^{2m} \leq 4\delta(1-\delta)
\end{align*}
or equivalently after taking logarithms
\begin{align*}
m\geq\frac{\log(4\delta(1-\delta))}{\log(F(\rho_+, \rho_-)^2)}.
\end{align*}
By strong concavity of the fidelity, we have
\begin{align*}
F(\rho_+,\rho_-)
&\geq \sqrt{\frac{1+\lambda}{2}\frac{1-\lambda}{2}}F(|x\rangle\langle x| \otimes f(x), |x\rangle\langle x| \otimes f(x)) + \sqrt{\frac{1-\lambda}{2}\frac{1+\lambda}{2}}F(|x\rangle\langle x| \otimes g(x), |x\rangle\langle x| \otimes g(x))\\
&= \sqrt{1-\lambda^2}.
\end{align*}
This now implies
\begin{align*}
m&\geq\frac{\log(4\delta(1-\delta))}{\log(F(\rho_+, \rho_-)^2)} 
= \frac{\log\left( \frac{1}{4\delta (1-\delta)}\right)}{\log\left(\frac{1}{F(\rho_+,\rho_-)^2}\right)} 
\geq \frac{\log\left( \frac{1}{4\delta (1-\delta)}\right)}{\log\left(\frac{1}{1-\lambda^2}\right)}.
\end{align*}
Thus we obtain (after Taylor-expanding the logarithm in the denominator)
\begin{align*}
m\geq \Omega\left(\norm{\sigma_0-\sigma_1}_1^2\frac{\log\left(\frac{1}{\delta}\right)}{\varepsilon^2}\right),
\end{align*}
as desired.\hfill $\blacksquare$\\

\textbf{Proof of Lemma \ref{LmmEigenvaluesStatisticalMixture}:}
Pick an orthonormal basis $\lbrace |k\rangle\rbrace_{k=1,\ldots,n}$ of $\mathbb{C}^n$ s.t. $|\psi\rangle=|0\rangle$ and $|\phi\rangle=\cos(\varphi)|0\rangle + \sin(\varphi)|1\rangle$ for an angle $0\leq\varphi <2\pi$. Then, when restricting to the relevant subspace spanned by $|0\rangle$ and $|1\rangle$, we get
\begin{align*}
\rho|_{\textrm{span}\lbrace |0\rangle,|1\rangle\rbrace}
&=
\begin{pmatrix}
\alpha + \beta\cos^2(\varphi) & \beta\cos(\varphi)\sin(\varphi)\\
\beta\cos(\varphi)\sin(\varphi) & \beta\sin^2(\varphi)
\end{pmatrix}
=:A.
\end{align*}
We now easily see that
\begin{align*}
\det(A) = \alpha\beta\sin^2(\varphi) \overset{!}{=} \lambda_1\lambda_2\textrm{ and }
\tr[A] = \alpha + \beta \overset{!}{=} \lambda_1+\lambda_2,
\end{align*}
where $\lambda_1,\lambda_2$ are the two non-zero eigenvalues of $\rho$. We can solve the second of these two equations for $\lambda_2$ and plug this back into the first equation to obtain
\begin{align*}
\lambda_1^2 - \lambda_1 (\alpha+\beta) + \alpha\beta\sin^2(\varphi) = 0.
\end{align*}
We now solve this quadratic equation and obtain the two eigenvalues
\begin{align*}
\lambda_{1/2} 
= \frac{\alpha+\beta\pm\sqrt{\alpha^2 + \beta^2 + 2\alpha\beta(2\cos^2(\varphi)-1)}}{2}
= \frac{\alpha+\beta\pm\sqrt{(\alpha-\beta)^2 + 4\alpha\beta |\langle\psi |\phi\rangle|^2}}{2},
\end{align*}
where we used that $|\cos(\varphi)| = |\langle\psi |\phi\rangle|$.\hfill $\blacksquare$\\

\textbf{Detailed Proof of Theorem \ref{ThmAgnosticVCDimLowerBound}:}
Let $S=(s_1,\ldots,s_d)\in\mathcal{X}$ be a set shattered by $\tilde{\mathcal{F}}$, for each $a\in\lbrace 0,1\rbrace^d$ define the distribution $\mu_a$ on $\lbrace 1,\ldots,d\rbrace\times\lbrace 0,1\rbrace$ via
\begin{align*}
\mu_a(i,b) := \frac{1}{2d}\left( 1 + (-1)^{a_i + b} \frac{8\varepsilon}{\norm{\sigma_0-\sigma_1}_1}\right).
\end{align*}
Note that $\forall a\in\lbrace 0,1\rbrace^d\ \exists f_a\in\tilde{\mathcal{F}}: f_a(s_i)=a_i$ by shattering and that for each $a\in\lbrace 0,1\rbrace^d$, $f_a$ is a minimum-error concept w.r.t.~$\mu_a$ and a concept $f_{\tilde{a}}$ has additional error
\begin{align*}
d_H(a,\tilde{a})\frac{8\varepsilon}{d\norm{\sigma_0-\sigma_1}_1}\cdot\frac{\norm{\sigma_0-\sigma_1}_1}{2} 
= d_H(a,\tilde{a})\frac{4\varepsilon}{d}
\end{align*}
compared to $f_a$. Hence, in order to solve the learning problem with confidence $1-\delta$ and accuracy $\varepsilon$ the algorithm $\mathcal{A}$ has to output, with probability $\geq 1-\delta$, a hypothesis (generated from the training data arising from the underlying string) that when evaluated on $S$ yields a vector that is $\frac{d}{4}$-close to the underlying string in Hamming distance.\\

Let $A$ be a random variable distributed uniformly on $\lbrace 0,1\rbrace^d$ (corresponding to the unknown underlying string $a$). Let $B=B_1\ldots B_m$ be the training data with each example generated independently from $\mu_a$ described by the quantum ensemble
\begin{align*}
\mathcal{E}_a = \lbrace \mu_a (i,b), |s_i\rangle\langle s_i|\otimes \sigma_b\rbrace_{i=1,\ldots,d,\ b=0,1},
\end{align*}
or, equivalently, by the quantum state
\begin{align*}
\rho_a = \sum\limits_{i=1}^d |s_i\rangle\langle s_i|\otimes\left(\mu_a(i,0)\sigma_0 + \mu_a(i,1)\sigma_1\right).
\end{align*}
In particular, the composite system of underlying string and corresponding training data is described by the quantum state
\begin{align*}
\sigma_{AB} = \frac{1}{2^d}\sum\limits_{a\in\lbrace 0,1\rbrace^d} |a\rangle\langle a|\otimes \rho_a^{\otimes m}.
\end{align*} 

We follow the information-theoretic proof strategy from \cite{Arunachalam.2018}, i.e., we first show a lower bound on the mutual information $I(A:B)$ which arises from the learning requirement, then observe that $I(A:B)\leq m\cdot I(A:B_1)$ and finally upper-bound the mutual information $I(A:B_1)$.\\

First for the mutual information lower bound. Let $h(B)\in\lbrace 0,1\rbrace^d$ denote the label vector assigned to $S$ by the hypothesis produced by the learner upon input of training data $B$. Let $Z=\mathds{1}_{\lbrace R_{\mu_A} (h) - \inf\limits_{f\in\mathcal{F}}R_{\mu_A}(f)\leq\varepsilon\rbrace}$. If $Z=1$, then by the above deliberations we conclude $d_H(A,h(B))\leq\frac{d}{4}$ and thus, given $h(B)$, $A$ ranges over a set of size $\sum\limits_{i=0}^{\frac{d}{4}} \binom{n}{i}\leq 2^{H\left(\frac{1}{4}\right)d}$. Thus we get (using data processing and the definition of conditional entropy)
\begin{align*}
I(A:B) &\geq I(A:h(B)) = H(A)-H(A|h(B))\\
&\geq H(A)-H(A|h(B),Z) - H(Z)\\
&= H(A) - \underbrace{\mathbb{P}[Z=1]}_{\leq 1} \underbrace{H(A|h(B),Z=1)}_{\leq H\left(\tfrac{1}{4}\right)d} - \underbrace{\mathbb{P}[Z=0]}_{\leq\delta}\underbrace{H(A|h(B),Z=0)}_{\leq d} -\underbrace{H(Z)}_{\leq H(\delta)} \\
&\geq d - H\left(\tfrac{1}{4}\right)d - \delta d - H(\delta)\\
&= \left( 1 - H\left(\tfrac{1}{4}\right) - \delta\right) d - H(\delta),
\end{align*}
in particular $I(A:B)\geq \Omega(d)$. (Here we use our assumption on $\delta$.)\\

Now we show $I(A:B)\leq m\cdot I(A:B_1)$. We reproduce the reasoning provided in \cite{Arunachalam.2018} for completeness:
\begin{align*}
I(A:B) &= S(B) - S(B|A)\\
&= S(B) - \sum\limits_{i=1}^m S(B_i|A)\\\
&\leq \sum\limits_{i=1}^m S(B_i) - S(B_i|A)\\
&= \sum\limits_{i=1}^m I(A:B_1).
\end{align*}
Here, the first step is by definition, the second uses the product structure of the subsystem $B$, the third follows from subadditivity of the entropy and the last is again by definition.\\

And finally, we prove an upper bound on $I(A:B_1)$. To this end, we have to study the reduced state
\begin{align*}
\sigma_{AB_1} = \frac{1}{2^d}\sum\limits_{a\in\lbrace 0,1\rbrace^d} |a\rangle\langle a|\otimes \rho_a.
\end{align*}
More precisely, we have 
\begin{align*}
I(A:B_1) = S(A) + S(B_1) - S(AB_1),
\end{align*}
and thus have to study the entropies of $\sigma_{AB_1}$ as well as those of the reduced states $\sigma_A$ and $\sigma_{B_1}$. As $A\sim\textrm{Uniform}\left(\lbrace 0,1\rbrace^d\right)$, we have $S(A)=d$. Now we consider the reduced state
\begin{align*}
\sigma_{B_1} &= \frac{1}{2^d}\sum\limits_{a\in\lbrace 0,1\rbrace^d} \rho_a\\
&= \sum\limits_{i=1}^d |s_i\rangle\langle s_i|\otimes\left(\left(\frac{1}{2^d}\sum\limits_{a\in\lbrace 0,1\rbrace^d} \mu_a(i,0)\right)|\psi_0\rangle\langle\psi_0| + \left(\frac{1}{2^d}\sum\limits_{a\in\lbrace 0,1\rbrace^d}\mu_a(i,1)\right)|\psi_1\rangle\langle\psi_1|\right).
\end{align*}
Here, we have
\begin{align*}
\frac{1}{2^d}\sum\limits_{a\in\lbrace 0,1\rbrace^d} \mu_a(i,0) = \frac{1}{2d} = \frac{1}{2^d}\sum\limits_{a\in\lbrace 0,1\rbrace^d} \mu_a(i,1).
\end{align*}
By Lemma \ref{LmmEigenvaluesStatisticalMixture} we know that $\frac{1}{2d}|\psi_0\rangle\langle\psi_0| + \frac{1}{2d}|\psi_1\rangle\langle\psi_1|$ has non-zero eigenvalues $\mu_{1/2} = \frac{1}{2d}(1 \pm |\langle\psi_0|\psi_1\rangle|)$ and due to the block-diagonal structure of $\sigma_{B_1}$ we conclude that the non-zero eigenvalues of $\sigma_{B_1}$ are also $\mu_{1/2}$, each of multiplicity $d$. In particular, we have
\begin{align*}
S(\sigma_{B_1}) &= d\cdot(-\mu_1\log(\mu_1) - \lambda_2\log(\mu_2))\\
&= \log(2d) - \frac{1}{2}\left(\log(1-|\langle\psi_0|\psi_1\rangle|^2) + |\langle\psi_0|\psi_1\rangle|\log\left(\frac{1+|\langle\psi_0|\psi_1\rangle|}{1-|\langle\psi_0|\psi_1\rangle|}\right)\right).
\end{align*}
Similarly, we see that the non-zero eigenvalues of $\sigma_{AB_1}$ are 
\begin{align*}
\frac{1}{2^d}\lambda_{1/2} = \frac{1}{2^d}\cdot\frac{1}{2d}\left( 1 \pm |\langle\psi_0|\psi_1\rangle|\sqrt{1+\frac{64\varepsilon^2}{\norm{\sigma_0-\sigma_1}_1^2}\cdot\frac{1-|\langle\psi_0|\psi_1\rangle|^2}{|\langle\psi_0|\psi_1\rangle|^2}}\right),
\end{align*}
each of multiplicity $d\cdot 2^d$ and that therefore 
\begin{align*}
S(\sigma_{AB_1}) = d &+ \log(2d) - \frac{1}{2}\Bigg(\log\left(1-|\langle\psi_0|\psi_1\rangle|^2 \left(1+\frac{64\varepsilon^2}{\norm{\sigma_0-\sigma_1}_1^2}\cdot\frac{1-|\langle\psi_0|\psi_1\rangle|^2}{|\langle\psi_0|\psi_1\rangle|^2}\right)\right) \\
&+ |\langle\psi_0|\psi_1\rangle|\sqrt{1+\frac{64\varepsilon^2}{\norm{\sigma_0-\sigma_1}_1^2}\cdot\frac{1-|\langle\psi_0|\psi_1\rangle|^2}{|\langle\psi_0|\psi_1\rangle|^2}}\log\left(\frac{1+|\langle\psi_0|\psi_1\rangle|\sqrt{1+\frac{64\varepsilon^2}{\norm{\sigma_0-\sigma_1}_1^2}\cdot\frac{1-|\langle\psi_0|\psi_1\rangle|^2}{|\langle\psi_0|\psi_1\rangle|^2}}}{1-|\langle\psi_0|\psi_1\rangle|\sqrt{1+\frac{64\varepsilon^2}{\norm{\sigma_0-\sigma_1}_1^2}\cdot\frac{1-|\langle\psi_0|\psi_1\rangle|^2}{|\langle\psi_0|\psi_1\rangle|^2}}}\right)\Bigg).
\end{align*}
If we combine these expressions for the different entropies, we obtain
\begin{align*}
I(A:B_1) &= S(A) + S(B_1) - S(AB_1)\\
&= \frac{1}{2}\left(\log\left(1-|\langle\psi_0|\psi_1\rangle|^2 - \frac{64\varepsilon^2}{\norm{\sigma_0-\sigma_1}_1^2} (1-|\langle\psi_0|\psi_1\rangle|^2)\right) - \log\left(1-|\langle\psi_0|\psi_1\rangle|^2\right)\right)\\
&\hspace*{3mm} +\frac{|\langle\psi_0|\psi_1\rangle|}{2}\Bigg(\sqrt{1+\frac{64\varepsilon^2}{\norm{\sigma_0-\sigma_1}_1^2}\cdot\frac{1-|\langle\psi_0|\psi_1\rangle|^2}{|\langle\psi_0|\psi_1\rangle|^2}}\log\left(\frac{1+|\langle\psi_0|\psi_1\rangle|\sqrt{1+\frac{64\varepsilon^2}{\norm{\sigma_0-\sigma_1}_1^2}\cdot\frac{1-|\langle\psi_0|\psi_1\rangle|^2}{|\langle\psi_0|\psi_1\rangle|^2}}}{1-|\langle\psi_0|\psi_1\rangle|\sqrt{1+\frac{64\varepsilon^2}{\norm{\sigma_0-\sigma_1}_1^2}\cdot\frac{1-|\langle\psi_0|\psi_1\rangle|^2}{|\langle\psi_0|\psi_1\rangle|^2}}}\right) \\
&\hspace*{3mm} - \log\left(\frac{1+|\langle\psi_0|\psi_1\rangle|}{1-|\langle\psi_0|\psi_1\rangle|}\right)\Bigg).
\end{align*}
We now use Taylor's theorem to understand the scaling of the different terms with $\varepsilon$. First, we have (by Taylor-expanding $\log(1-|\langle\psi_0|\psi_1\rangle|^2 - x)$ around $x=0$)
\begin{align*}
\log &\left(1-|\langle\psi_0|\psi_1\rangle|^2 - \frac{64\varepsilon^2}{\norm{\sigma_0-\sigma_1}_1^2} (1-|\langle\psi_0|\psi_1\rangle|^2)\right) - \log\left(1-|\langle\psi_0|\psi_1\rangle|^2\right)\\
&= \frac{1}{1-|\langle\psi_0|\psi_1\rangle|^2}\cdot \frac{64\varepsilon^2}{\norm{\sigma_0-\sigma_1}_1^2} (1-|\langle\psi_0|\psi_1\rangle|^2) + \mathcal{O}(\varepsilon^4)\\
&= - \frac{64\varepsilon^2}{\norm{\sigma_0-\sigma_1}_1^2} + \mathcal{O}(\varepsilon^4).
\end{align*}
Moreover, using the Taylor expansions
\begin{align*}
\log\left(\frac{1+a\sqrt{1+x}}{1-a\sqrt{1+x}}\right) = \log\left(\frac{1+a}{1-a}\right) + \frac{ax}{1-a^2} + \mathcal{O}(x^2)
\end{align*}
around $x=0$ (with $a>0$) and
\begin{align*}
\sqrt{1+\frac{64\varepsilon^2}{\norm{\sigma_0-\sigma_1}_1^2}\cdot\frac{1-|\langle\psi_0|\psi_1\rangle|^2}{|\langle\psi_0|\psi_1\rangle|^2}}
&= 1 + \frac{1}{2}\cdot\frac{64\varepsilon^2}{\norm{\sigma_0-\sigma_1}_1^2}\cdot\frac{1-|\langle\psi_0|\psi_1\rangle|^2}{|\langle\psi_0|\psi_1\rangle|^2} + \mathcal{O}(\varepsilon^4)
\end{align*}
we now obtain
\begin{align*}
&\sqrt{1+\frac{64\varepsilon^2}{\norm{\sigma_0-\sigma_1}_1^2}\cdot\frac{1-|\langle\psi_0|\psi_1\rangle|^2}{|\langle\psi_0|\psi_1\rangle|^2}}\log\left(\frac{1+|\langle\psi_0|\psi_1\rangle|\sqrt{1+\frac{64\varepsilon^2}{\norm{\sigma_0-\sigma_1}_1^2}\cdot\frac{1-|\langle\psi_0|\psi_1\rangle|^2}{|\langle\psi_0|\psi_1\rangle|^2}}}{1-|\langle\psi_0|\psi_1\rangle|\sqrt{1+\frac{64\varepsilon^2}{\norm{\sigma_0-\sigma_1}_1^2}\cdot\frac{1-|\langle\psi_0|\psi_1\rangle|^2}{|\langle\psi_0|\psi_1\rangle|^2}}}\right) \\
&\hspace*{3mm} - \log\left(\frac{1+|\langle\psi_0|\psi_1\rangle|}{1-|\langle\psi_0|\psi_1\rangle|}\right)\\
= &\left(1 + \frac{1}{2}\cdot\frac{64\varepsilon^2}{\norm{\sigma_0-\sigma_1}_1^2}\cdot\frac{1-|\langle\psi_0|\psi_1\rangle|^2}{|\langle\psi_0|\psi_1\rangle|^2} + \mathcal{O}(\varepsilon^4)\right)\\
&\cdot\left(\log\left(\frac{1+|\langle\psi_0|\psi_1\rangle|}{1-|\langle\psi_0|\psi_1\rangle|}\right) + \frac{|\langle\psi_0|\psi_1\rangle|}{1-|\langle\psi_0|\psi_1\rangle|^2}\cdot \frac{64\varepsilon^2}{\norm{\sigma_0-\sigma_1}_1^2}\cdot\frac{1-|\langle\psi_0|\psi_1\rangle|^2}{|\langle\psi_0|\psi_1\rangle|^2} + \mathcal{O}(\varepsilon^4)\right)\\
&- \log\left(\frac{1+|\langle\psi_0|\psi_1\rangle|}{1-|\langle\psi_0|\psi_1\rangle|}\right)\\
= &\frac{64\varepsilon^2}{\norm{\sigma_0-\sigma_1}_1^2}\left(\frac{1}{|\langle\psi_0|\psi_1\rangle|} + \frac{1-|\langle\psi_0|\psi_1\rangle|^2}{2|\langle\psi_0|\psi_1\rangle|}\log\left(\frac{1+|\langle\psi_0|\psi_1\rangle|}{1-|\langle\psi_0|\psi_1\rangle|}\right)\right) + \mathcal{O}(\varepsilon^4).
\end{align*}
Plugging these approximations back in gives us
\begin{align*}
I(A:B_1) &= \frac{64\varepsilon^2}{\norm{\sigma_0-\sigma_1}_1^2}\cdot \frac{1-|\langle\psi_0|\psi_1\rangle|^2}{4|\langle\psi_0|\psi_1\rangle|}\log\left(\frac{1+|\langle\psi_0|\psi_1\rangle|}{1-|\langle\psi_0|\psi_1\rangle|}\right)+ \mathcal{O}(\varepsilon^4)
= \mathcal{O}(\varepsilon^2).
\end{align*}
Now combining our mutual information lower and upper bounds yields
\begin{align*}
\Omega(d) \leq I(A:B)\leq m \cdot I(A:B_1) \leq m \cdot\mathcal{O}(\varepsilon^2),
\end{align*}
which after rearranging becomes
\begin{align*}
m\geq \Omega\left(\frac{d}{\varepsilon^2}\right),
\end{align*}
as desired.\hfill $\blacksquare$\\

\textbf{Detailed Proof of Lemma \ref{LmmRealizableConfidenceLowerBound}:}
As $\mathcal{F}$ is non-trivial, there exist $f_1,f_2\in\mathcal{F}$ and $x_1,x_2\in\mathcal{X}$ s.t.~$f_1(x_1)=f_2(x_1)=\sigma_0$ and $f_1(x_2)=\sigma_0\neq\sigma_1=f_2(x_2)$. Now consider the distribution $\mu$ on $\mathcal{X}$ defined by 
\begin{align*}
\mu(x_1)=1-\lambda,\quad \mu(x_2)=\lambda,
\end{align*}
where $\lambda\in (0,1)$ is to be chosen later in the proof.\\
The risk of a hypothesis $h\in\mathcal{D}^\mathcal{X}$ w.r.t.~$\mu$ if the target concept is $f_i$ is given by
\begin{align*}
R_{\mu,f_i}(h) = \frac{1-\lambda}{2}\norm{h(x_1)-f_i(x_1)}_1 + \frac{\lambda}{2}\norm{h(x_2)-f_i(x_2)}_1,
\end{align*}
so in particular we have
\begin{align*}
R_{\mu,f_i}(f_j) = \begin{cases} 0 &\textrm{if } i=j \\ \frac{\lambda}{2}\norm{\sigma_0-\sigma_1}_1 &\textrm{if } i\neq j \end{cases}.
\end{align*}
So if we choose $\lambda=\frac{2\varepsilon}{\norm{\sigma_0-\sigma_1}_1}<1$, then the learning requirement for $\mathcal{A}$ implies that with probability $\geq 1-\delta$, $\mathcal{A}$ correctly identifies whether the target concept is $f_1$ or $f_2$.\\
As the algorithm has access to the underlying distribution only via the training data, this means that the algorithm has to be able to distinguish the corresponding training data ensembles with probability $\geq 1-\delta$. Here, we observe that the training data being drawn i.i.d.~according to $\mu_\pm$ is equivalent to the learning algorithm having access to $m$ copies of the state
\begin{align*}
\rho_i = (1-\lambda)|x_1\rangle\langle x_1|\otimes\sigma_0 + \lambda |x_2\rangle\langle x_2|\otimes f_i(x_2),\quad i=1,2.
\end{align*}
The optimal success probability for distinguishing between two quantum states is a well-studied object in quantum information theory. It can be characterized by the trace distance between the two states and is given (in our case) by (see \cite{Nielsen.2009})
\begin{align*}
p_{\textrm{opt}} = \frac{1}{2}(1+\frac{1}{2}\norm{\rho_1^{\otimes m} - \rho_2^{\otimes m}}_1).
\end{align*}
As the trace distance of tensor products is not that easy to deal with, we will instead work with the fidelity defined as $F(\rho,\sigma):= \tr[\sqrt{\rho^{\frac{1}{2}}\sigma\rho^{\frac{1}{2}}}].$ According to the Fuchs-van de Graaf inequalities (see Section $9.2.3$ in \cite{Nielsen.2009}) we have
\begin{align*}
\frac{1}{2}\norm{\rho_1^{\otimes m} - \rho_2^{\otimes m}}_1 \leq \sqrt{1-F(\rho_1^{\otimes m}, \rho_2^{\otimes m})^2}
= \sqrt{1-F(\rho_1, \rho_2)^{2m}},
\end{align*}
where the last steps uses multiplicativity of the fidelity under tensor products. Now we require $p_\textrm{opt}\geq 1-\delta$ and rearrange to obtain
\begin{align*}
F(\rho_1, \rho_2)^{2m} \leq 4\delta(1-\delta)
\end{align*}
or equivalently after taking logarithms
\begin{align*}
m\geq\frac{\log(4\delta(1-\delta))}{\log(F(\rho_1, \rho_2)^2)}.
\end{align*}
Now we use again the Fuchs-van de Graaf inequalities which tell us (after rearranging)
\begin{align*}
1- \frac{1}{2}\norm{\rho_1 - \rho_2}_1 \leq F(\rho_1, \rho_2)\leq \sqrt{1-\frac{1}{4}\norm{\rho_1 - \rho_2}_1^2}
\end{align*}
to obtain that
\begin{align*}
m&\geq\frac{\log(4\delta(1-\delta))}{\log(F(\rho_1, \rho_2)^2)} 
= \frac{\log\left( \frac{1}{4\delta (1-\delta)}\right)}{\log\left(\frac{1}{F(\rho_1,\rho_2)^2}\right)} 
\geq \frac{\log\left( \frac{1}{4\delta (1-\delta)}\right)}{\log\left(\frac{1}{(1- \frac{1}{2}\norm{\rho_1 - \rho_2}_1)^2} \right)}
\geq \frac{\log(4\delta(1-\delta))}{2\log(1- \frac{1}{2}\norm{\rho_1 - \rho_2}_1)}.
\end{align*}
It is easy to see that $\norm{\rho_1-\rho_2}_1 = \lambda \norm{\sigma_0 - \sigma_1}_1 = 2\varepsilon.$
Now Taylor expansion of the logarithm gives
\begin{align*}
m\geq \Omega\left(\frac{\log\left(\frac{1}{\delta}\right)}{\varepsilon}\right),
\end{align*}
as desired.\hfill $\blacksquare$\\

\textbf{Detailed Proof of Theorem \ref{ThmRealizableVCDimLowerBound}:}
Let $S=(s_0,\ldots,s_d)\in\mathcal{X}$ be a set shattered by $\tilde{\mathcal{F}}$, define
\begin{align*}
\mu(s_0) = 1-\lambda,\quad \mu(s_i)=\frac{\lambda}{d}\quad \forall 1\leq i\leq d,
\end{align*}
with $\lambda\in (0,1)$ to be chosen later. By shattering, $\forall a\in\lbrace 0,1\rbrace^d\ \exists f_a\in\tilde{\mathcal{F}}$ s.t.
\begin{align*}
f_a(s_0) = 0\quad\textrm{and}\quad f_a(s_i)=a_i\quad \forall 1\leq i\leq d.
\end{align*}
Observe that w.r.t.~a distribution $\mu$ and target concept $f_a$, another concept $f_b$ has error
\begin{align*}
d_H(a,b)\cdot \frac{\lambda}{d}\cdot\frac{\norm{\sigma_0-\sigma_1}_1}{2}.
\end{align*}
So if we pick $\lambda=\frac{8\varepsilon}{\norm{\sigma_0-\sigma_1}_1}$, then by the learning requirement, with probability $\geq 1-\delta$, $\mathcal{A}$ has to output a hypothesis $h$ that when evaluated on $S$ yields a label vector that is $\frac{d}{4}$-close to the true underlying string in Hamming distance.\\
Denote by $A\sim\textrm{Uniform}\left(\lbrace 0,1\rbrace^d\right)$ a random variable describing the unknown underlying string, let $B=B_1\ldots B_m$ be the corresponding quantum training data system. We want to repeat the three-step reasoning from the proof of Theorem \ref{ThmAgnosticVCDimLowerBound}. The first two steps work exactly as before. Step $3$ will be slightly different. Again we have
\begin{align*}
I(A:B_1)=S(A)+S(B_1)-S(AB_1),\quad\textrm{and}\quad S(A)=d.
\end{align*}
In this case, the relevant composite state is
\begin{align*}
\sigma_{AB_1} = \frac{1}{2^d}\sum\limits_{a\in\lbrace 0,1\rbrace^d} |a\rangle\langle a|\otimes \rho_a,
\end{align*}
where $\rho_a = \sum\limits_{j=0}^d \mu(s_j) |s_j\rangle\langle s_j|\otimes f_a(s_j) = (1-\lambda)|s_0\rangle\langle s_0|\otimes\sigma_0 + \frac{\lambda}{d}\sum\limits_{j=1}^d |s_j\rangle\langle s_j|\otimes\sigma_{a_j}$.\\
We now again use Lemma \ref{LmmEigenvaluesStatisticalMixture} to compute eigenvalues and thus entropies. (Here our assumption that $\sigma_0$ and $\sigma_1$ are pure enters the proof.) We obtain
\begin{itemize}
\item Each $\rho_a$ has non-zero eigenvalues $1-\lambda$ of multiplicity $1$ and $\frac{\lambda}{d}$ of multiplicity $d$.
\item $\sigma_{B_1}=\frac{1}{2^d}\sum\limits_{a\in\lbrace 0,1\rbrace^d}\left((1-\lambda)|s_0\rangle\langle s_0|\otimes\sigma_0 + \frac{\lambda}{d}\sum\limits_{j=1}^d |s_j\rangle\langle s_j|\otimes\sigma_{a_j}\right) = (1-\lambda)|s_0\rangle\langle s_0|\otimes\sigma_0 + \frac{\lambda}{d}\sum\limits_{j=1}^d |s_j\rangle\langle s_j|\otimes\left( \frac{1}{2}\sigma_0+\frac{1}{2}\sigma_1\right)$ has non-zero eigenvalues $1-\lambda$ of multiplicity $1$ and $\frac{\lambda}{d}\lambda_{1/2}$ of multiplicity $d$, where $\lambda_{1/2} = \frac{1\pm |\langle\psi_0 |\psi_1\rangle|}{2}$.
\item $\sigma_{AB_1}$ has non-zero eigenvalues $\frac{1}{2^d}(1-\lambda)$ of multiplicity $2^d$ and $\frac{\lambda}{d\cdot 2^d}$ of multiplicity $d\cdot 2^d$.
\end{itemize}
With this we can now compute the relevant entropies and obtain
\begin{align*}
S(B_1) &= S(\sigma_{B_1})
\\&= -(1-\lambda) \log(1-\lambda) + d\left( -\frac{\lambda}{d}\lambda_1\log\left(\frac{\lambda}{d}\lambda_1\right) - \frac{\lambda}{d}\lambda_2\log\left(\frac{\lambda}{d}\lambda_2\right)\right)\\
&= -(1-\lambda)\log(1-\lambda) - \lambda \left( \lambda_1 \log\left(\frac{\lambda}{d}\lambda_1\right) + \lambda_2\log\left(\frac{\lambda}{d}\lambda_2\right)\right),
\end{align*}
as well as
\begin{align*}
S(AB_1) &= S(\sigma_{AB_1})\\
&= 2^d \left( -\frac{1}{2^d}(1-\lambda)\log\left(\frac{1}{2^d}(1-\lambda)\right) - d\cdot\frac{\lambda}{d\cdot 2^d}\log\left(\frac{\lambda}{d\cdot 2^d}\right)\right)\\
&= -(1-\lambda)\log\left(\frac{1-\lambda}{2^d}\right) - \lambda\log\left(\frac{\lambda}{d\cdot 2^d}\right).
\end{align*}
Hence, we now have 
\begin{align*}
I(A:B_1)&=S(A)+S(B_1)-S(AB_1)\\
&= -\frac{\lambda}{2}\underbrace{\left( \log\left(\frac{1-|\langle\psi_0|\psi_1\rangle|^2}{4}\right) + |\langle\psi_0|\psi_1\rangle|\log\left(\frac{1+|\langle\psi_0|\psi_1\rangle|}{1-|\langle\psi_0|\psi_1\rangle|}\right)\right)}_{\leq 0 \textrm{ because } |\langle\psi_0|\psi_1\rangle|\in [0,1]}\\
&=\mathcal{O}(\varepsilon).
\end{align*}
Now we can finish the proof by combining steps $1$, $2$ and $3$ as before.\hfill $\blacksquare$

\section{A Physical Motivation for our Notion of Risk}\label{SctRiskMotivation}
In our definition of the risk $R_\mu$ we use the trace distance. As the latter is a well-established measure of distinguishability of quantum states, it presents itself as a natural candidate loss function. Here, we give a more explicit operational reasoning as to why we choose to use the trace distance.\\

Imagine the learning task as a competition between two parties, a learner and a teacher. We assume that both parties obey the laws of quantum physics. The teacher knows (a classical description of) the probability distribution $\mu\in\textrm{Prob}(\mathcal{X}\times\mathcal{D})$ and will provide corresponding training data to the learner during a training phase. The learner's goal is to persuade the teacher in a test phase that she has managed to learn the distribution $\mu$, which was unknown to her in advance, i.e., that she has produced a good hypothesis $h:\mathcal{X}\to\mathcal{D}$.\\
We first give an informal description of the test phase: The teacher prepares another (independent) example $(x,\rho)$ drawn from $\mu$. She then sends $x$ to the learner. The latter applies her hypothesis $h$ to prepare the quantum state $h(x)$ which she then sends back to the teacher. The teacher now uses this one copy of $h(x)$ and her knowledge of $\mu$ to evaluate whether the learner made a good prediction. As also the teacher is restricted by quantum theory, she can only do so by performing a measurement.\\

We now discuss the choice of measurement of the teacher in more detail. On the one hand, the teacher wants to maximize the probability of detecting a wrong prediction. On the other hand, she does not want to be unfair, so at the same time she tries to maximize the probability of detecting a correct prediction. In summary, the teacher wants to choose a $2$-outcome measurement $\{ E_{accept},E_{reject}\}$ that maximizes
\begin{align*}
\tr[E_{accept}\sigma_i] + \tr[E_{reject}\sigma_j],
\end{align*}
where $\sigma_i=\rho$ and $\sigma_j\in\mathcal{D}\setminus\{\rho\}$. As she knows (a classical description of) the state $\rho\in\mathcal{D}$ and that $h(x)\in\mathcal{D}$, she can achieve this by picking $\{ E_{accept},E_{reject}\}$ to be the optimal measurement for minimum error discrimination of $\mathcal{D}$ (where the states are taken with equal prior probabilities, see Theorem $3.4$ in \cite{Watrous.2018}). The measurement is basically the same independently of whether $\rho=\sigma_1$ or $\rho=\sigma_2$, only the outcome labels are interchanged.\\

Now the expected probability of the trainer rejecting the learner's prediction is
\begin{align*}
\int\limits_{X\times\mathcal{D}} \tr[E_{reject}(\rho) h(x)]~\mathrm{d}\mu(x,\rho).
\end{align*}
The optimal measurement satisfies 
\begin{align*}
\tr[E_{accept}\sigma_i] + \tr[E_{reject}\sigma_j]=\frac{1}{2}\left( 1 + \frac{1}{2}\norm{\sigma_0 - \sigma_1}_1\right).
\end{align*}
It is easy to see that under the additional assumption that $\sigma_0$ and $\sigma_1$ have the same purity, i.e., $\tr[\sigma_0^2]=\tr[\sigma_1^2]$, the rejection probabilities are symmetric, namely
\begin{align*}
\tr[E_{accept}\sigma_j] = \tr[E_{reject}\sigma_i] = \frac{1}{4}\left( 1 - \frac{1}{2}\norm{\sigma_0 - \sigma_1}_1\right)
\end{align*}
and similarly 
\begin{align*}
\tr[E_{accept}\sigma_i] = \tr[E_{reject}\sigma_j] = \frac{1}{4}\left( 1 + \frac{1}{2}\norm{\sigma_0 - \sigma_1}_1\right).
\end{align*}
With this we now obtain when comparing the achieved with the optimal expected rejection probability 
\begin{align*}
&\int\limits_{X\times\mathcal{D}} \tr[E_{reject}(\rho) h(x)]~\mathrm{d}\mu(x,\rho) - \inf\limits_{g:\mathcal{X}\to\mathcal{D}}\int\limits_{X\times\mathcal{D}} \tr[E_{reject}(\rho) g(x)]~\mathrm{d}\mu(x,\rho)\\
&= \int\limits_{X\times\mathcal{D}} \frac{\norm{\rho-h(x)}_1}{4}~\mathrm{d}\mu(x,\rho) \\
&= \frac{1}{2} R_\mu(h).
\end{align*}
So we have recovered our notion of risk, at least in the case of states of equal purity, from a more basic analysis of the test phase.\\

Note that a similar analysis could be performed also in the case of more than two quantum labels. There, the teacher's measurement would be the optimal measurement for minimum error discrimination of $\rho$ and $\frac{1}{\lvert \mathcal{D}\rvert - 1}\sum_{\sigma\in\mathcal{D}\setminus\{\rho\}} \sigma$. Unfortunately, no closed-form expressions for the corresponding success probabilities are known. We do, however, see that in this scenario, using the trace distance as loss function would be too pessimistic from the perspective of the learner. As the teacher does not know the prediction state prepared by the learner, the teacher has to solve a state discrimination problem taking into account all possible label states.

\section{The Holevo-Helstrom Strategy}\label{SctHelstrom}
The naive learning strategy based on the Holevo-Helstrom measurement is the following:

\begin{tcolorbox}[title={Holevo-Helstrom strategy}, breakable]
Given: Quantum training data $S=\{ (x_i,\rho_i)\}_{i=1}^m$\\
Output: Hypothesis $\hat{h}:\mathcal{X}\to\mathcal{D}$\\
Algorithm: 
\begin{enumerate}
\item For each $i$: Perform a Holevo-Helstrom measurement on $\rho_i$. Let $$y_i =\begin{cases} 1\ &\textrm{if } E_1~ \textrm{is accepted}\\ 0 &\textrm{if } E_1~ \textrm{is rejected}\end{cases}.$$
\item Let $\tilde{S}=\lbrace (x_i,y_i)\rbrace_{i=1}^m\in (\mathcal{X}\times\lbrace 0,1\rbrace)^m$. Then one can view $(x_i,y_i)$ as being drawn independently according to the probability measure $\nu$ on $\mathcal{X}\times\lbrace 0,1\rbrace$ which has 
\begin{align*}
&\nu_1(x) = \mu_1 (x) = \mu (x,\sigma_0) + \mu (x,\sigma_1)
\end{align*}
as the first marginal and 
\begin{align*}
\nu (y|x) = \ &\delta_{y0}\left(\mu (\sigma_1 |x) \tr[\sigma_1 E_0] + \mu (\sigma_0 |x) \tr[\sigma_0 E_0]\right)\\
		      +~ &\delta_{y1}\left(\mu (\sigma_1 |x) \tr[\sigma_1 E_1] + \mu (\sigma_0 |x) \tr[\sigma_0 E_1]\right).	
\end{align*}
as the conditional probability distribution of $y$ given $x$. 
\item Use a classical learning algorithm for binary classification to find $g\in\tilde{\mathcal{F}}:=\lbrace \tilde{f}:\mathcal{X}\to\lbrace 0,1\rbrace\ |\ \exists f\in\mathcal{F}:\ f(x)=\sigma_{\tilde{f}(x)}\ \forall x\in\mathcal{X}\rbrace$ s.t.~$\tilde{R}_\nu (g):= \mathbb{P}_{(x,y)\sim\nu} [y\neq g(x)]$ is small.
\item Define $h:\mathcal{X}\to\mathcal{D}$ via $h(x)=\sigma_{g(x)}$ and output $h$ as hypothesis.
\end{enumerate}
\end{tcolorbox}


The remainder of this section is devoted to studying the performance of this simple learning procedure. Note that we leave open for now the classical learning algorithm to be used, we first work towards characterizing the true risk $R_\mu (h)$ in terms of the intermediate classical risk $\tilde{R}_\nu (g)$.\\
In the following we will often make use of the fact that when identifying $i\leftrightarrow \sigma_i$, the probability measure $\mu$ on $\mathcal{X}\times\mathcal{D}$ gives rise to a probability measure on $\mathcal{X}\times\lbrace 0,1\rbrace$. We will abuse notation and also denote the latter measure by $\mu$, however, which measure is meant will always be clear from the context.\\

Recall that $R_\mu (h) = \frac{\lVert \sigma_0 - \sigma_1 \rVert}{2} \mathbb{P}_{(x,\rho)\sim\mu}[h(x)\neq \rho].$ We now derive a similar expression for $\tilde{R}_\nu (g)$.

\begin{lemma}\label{LmmHelstromRiskComparison}
With the notation as in the Holevo-Helstrom strategy (in particular $h(x)=\sigma_{g(x)}$) it holds that
\begin{align*}
\tilde{R}_\nu(g) = \frac{\lVert\sigma_0-\sigma_1\rVert_1}{2}\mathbb{P}_{(x,\rho)\sim\mu}[h(x)\neq\rho] + \tr[\sigma_0 E_1] + (\tr[\sigma_1 E_0]-\tr[\sigma_0 E_1])\mathbb{E}_{\mu_1}[g].
\end{align*}
\end{lemma}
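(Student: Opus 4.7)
The plan is to evaluate $\tilde{R}_\nu(g) = \mathbb{P}_{(x,y)\sim\nu}[y\neq g(x)]$ by a direct bookkeeping calculation, conditioning on $x$ and splitting by the value of $g(x)\in\{0,1\}$. Setting $\eta_0 = \tr[\sigma_0 E_1]$ and $\eta_1 = \tr[\sigma_1 E_0]$ (so $\tr[\sigma_0 E_0]=1-\eta_0$ and $\tr[\sigma_1 E_1]=1-\eta_1$), the definition of $\nu(y\mid x)$ gives, for each $x$,
\begin{align*}
\nu(y \neq g(x)\mid x) &= \mathds{1}_{g(x)=0}\bigl(\mu(\sigma_1|x)(1-\eta_1) + \mu(\sigma_0|x)\eta_0\bigr) \\
&\hphantom{{}={}} + \mathds{1}_{g(x)=1}\bigl(\mu(\sigma_1|x)\eta_1 + \mu(\sigma_0|x)(1-\eta_0)\bigr).
\end{align*}

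Next, I would abbreviate the four integrals that arise after integrating over $x$ against $\mu_1$:
\begin{align*}
A &= \int \mathds{1}_{g(x)=0}\mu(\sigma_1|x)\,\mathrm{d}\mu_1(x), & B &= \int \mathds{1}_{g(x)=0}\mu(\sigma_0|x)\,\mathrm{d}\mu_1(x), \\
C &= \int \mathds{1}_{g(x)=1}\mu(\sigma_1|x)\,\mathrm{d}\mu_1(x), & D &= \int \mathds{1}_{g(x)=1}\mu(\sigma_0|x)\,\mathrm{d}\mu_1(x).
\end{align*}
These satisfy $A+B+C+D=1$, and since $h(x)=\sigma_{g(x)}$ we have $A+D = \mathbb{P}_{(x,\rho)\sim\mu}[h(x)\neq\rho]$ and $C+D = \mathbb{E}_{\mu_1}[g]$. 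Integrating the conditional expression above and collecting terms by the factors $\eta_0, \eta_1$ produces
\begin{align*}
\tilde{R}_\nu(g) = (1-\eta_1)A + \eta_0 B + \eta_1 C + (1-\eta_0)D = (A+D) + \eta_0(B-D) + \eta_1(C-A).
\end{align*}

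Finally, I would match this to the target formula. Using $A+B+C+D=1$ to rewrite $B-D = 1 - A - C - 2D$, a one-line manipulation gives $\tilde{R}_\nu(g) = (1-\eta_0-\eta_1)(A+D) + \eta_0 + (\eta_1-\eta_0)(C+D)$. Substituting $A+D$ and $C+D$ back and invoking the Holevo-Helstrom identity $1-\eta_0-\eta_1 = \tfrac{1}{2}\|\sigma_0-\sigma_1\|_1$ (the defining property of $\{E_0,E_1\}$ as the minimum-error measurement) yields the stated expression. The argument is essentially bookkeeping; the only non-trivial input is the Helstrom identity, so I do not anticipate any real obstacle.
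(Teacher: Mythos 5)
Your proposal is correct and follows essentially the same route as the paper's proof: a direct computation of $\tilde{R}_\nu(g)$ from the definition of $\nu$, followed by the Holevo--Helstrom identity $\tr[(\sigma_0-\sigma_1)E_0]=1-\eta_0-\eta_1=\tfrac{1}{2}\lVert\sigma_0-\sigma_1\rVert_1$; the only difference is organizational, in that you collect the integrals into the four quantities $A,B,C,D$ before doing the algebra rather than manipulating the integrand directly.
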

\begin{proof}
This can be shown by direct computation using the definition of $\nu$:
\begin{alignat*}{3}
& \tilde{R}_\nu(g) &&= \int\limits_{\mathcal{X}\times\lbrace 0,1\rbrace}&&\lvert y-g(x)\rvert\mathrm{d}\nu(x,y)\\
& &&=  \int\limits_{\mathcal{X}}&&\Big( \int\limits_{\lbrace 0,1\rbrace}\lvert y-g(x)\rvert\mathrm{d}\nu(y|x)\Big)\mathrm{d}\nu_1(x)\\
& &&= \int\limits_{\mathcal{X}}\Big(&&\lvert 1-g(x)\rvert(\mu(\sigma_1|x)\tr[\sigma_1 E_1]+\mu(\sigma_0|x)\tr[\sigma_0 E_1]) + \\
& && &&\lvert g(x)\rvert(\mu(\sigma_1|x)\tr[\sigma_1 E_0]+\mu(\sigma_0|x)\tr[\sigma_0 E_0])\Big)\mathrm{d}\mu_1(x)
\end{alignat*}
Now we use the specific property of the Holevo-Helstrom measurement that $\tr[(\sigma_1-\sigma_0)E_1]=\frac{\lVert\sigma_0-\sigma_1\rVert}{2}$. Moreover, as $g(x)\in\lbrace 0,1\rbrace$, we have $\lvert 1-g(x)\rvert= 1-g(x)$ and $\lvert g(x)\rvert = g(x)$. Thus we obtain
\begin{alignat*}{3}
& \tilde{R}_\nu(g) &&= \frac{\lVert\sigma_0-\sigma_1\rVert}{2}&&\int\limits_{\mathcal{X}}\Big((1-g(x))\mu(\sigma_1|x)+g(x)\mu(\sigma_0|x)\Big)\mathrm{d}\mu_1(x) + \\
& && &&\int\limits_{\mathcal{X}}\Big((1-g(x))\tr[\sigma_0 E_1] + g(x)\tr[\sigma_1 E_0]\Big)\mathrm{d}\mu_1(x)\\
& &&= \frac{\lVert\sigma_0-\sigma_1\rVert}{2}&&\mathbb{P}_{(x,\rho)\sim\mu}[h(x)\neq\rho]+ \tr[\sigma_0 E_1]  +(\tr[\sigma_1 E_0]-\tr[\sigma_0 E_1])\mathbb{E}_{\mu_1}[g],
\end{alignat*}
where the last step uses $h(x)=\sigma_{g(x)}$.
\end{proof}
\\

This allows us to easily compare the true and the intermediate risk and obtain
\begin{align*}
\tilde{R}_\nu (g)-R_\mu(h)
&= \tr[\sigma_0 E_1](1-2\mathbb{E}_{\mu_1}[g]) + \Big(1-\frac{\lVert\sigma_0-\sigma_1\rVert}{2}\Big)\mathbb{E}_{\mu_1}[g].
\end{align*}
As $g(x)\in\lbrace 0,1\rbrace\ \forall x\in\mathcal{X}$ and in particular $0\leq\mathbb{E}_{\mu_1}[g]\leq 1$, this gives rise to the following

\begin{corollary}
With the notation as in the Holevo-Helstrom strategy it holds that
\begin{align*}
\tilde{R}_\nu (g) - \max\lbrace\tr[\sigma_0 E_1],\tr[\sigma_1 E_0]\rbrace \leq R_\mu(h)\leq \tilde{R}_\nu (g) - \min\lbrace\tr[\sigma_0 E_1],\tr[\sigma_1 E_0]\rbrace.
\end{align*}
\end{corollary}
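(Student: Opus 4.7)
The plan is to interpret the identity from Lemma \ref{LmmHelstromRiskComparison} as expressing $\tilde{R}_\nu(g) - R_\mu(h)$ as a convex combination of the two asymmetric error probabilities $\tr[\sigma_0 E_1]$ and $\tr[\sigma_1 E_0]$. Starting from Lemma \ref{LmmHelstromRiskComparison} and subtracting $R_\mu(h) = \tfrac{\lVert\sigma_0-\sigma_1\rVert_1}{2}\mathbb{P}_{(x,\rho)\sim\mu}[h(x)\neq\rho]$, I obtain
$\tilde{R}_\nu(g) - R_\mu(h) = \tr[\sigma_0 E_1] + (\tr[\sigma_1 E_0] - \tr[\sigma_0 E_1])\,\mathbb{E}_{\mu_1}[g],$
which I would then rewrite by grouping terms as $(1-p)\tr[\sigma_0 E_1] + p\,\tr[\sigma_1 E_0]$, where $p := \mathbb{E}_{\mu_1}[g]$.

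The second step is to observe that $g$ is $\{0,1\}$-valued, so $p \in [0,1]$, and the expression above is therefore a genuine convex combination of the two scalars $\tr[\sigma_0 E_1]$ and $\tr[\sigma_1 E_0]$. This immediately gives
$\min\{\tr[\sigma_0 E_1], \tr[\sigma_1 E_0]\} \leq \tilde{R}_\nu(g) - R_\mu(h) \leq \max\{\tr[\sigma_0 E_1], \tr[\sigma_1 E_0]\},$
and rearranging each of the two inequalities yields the sandwich stated in the corollary.

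There is no genuine obstacle here: the corollary is a one-line consequence of Lemma \ref{LmmHelstromRiskComparison} together with the trivial fact that a convex combination lies between its components. The only point requiring mild care is the algebraic rearrangement into convex-combination form, which one can equivalently verify by starting from the display immediately preceding the corollary and checking the endpoints $p=0$ (giving $\tr[\sigma_0 E_1]$) and $p=1$ (giving $\tr[\sigma_1 E_0]$, using the Holevo--Helstrom identity $\tr[(\sigma_1-\sigma_0)E_1] = \tfrac{\lVert\sigma_0-\sigma_1\rVert_1}{2}$) and then invoking linearity in $p$.
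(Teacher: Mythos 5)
Your proposal is correct and follows essentially the same route as the paper: the paper likewise subtracts $R_\mu(h)=\tfrac{\lVert\sigma_0-\sigma_1\rVert_1}{2}\mathbb{P}_{(x,\rho)\sim\mu}[h(x)\neq\rho]$ from the identity of Lemma \ref{LmmHelstromRiskComparison} (merely rewriting the difference via the Helstrom relation $\tr[\sigma_0E_1]+\tr[\sigma_1E_0]=1-\tfrac{1}{2}\lVert\sigma_0-\sigma_1\rVert_1$) and then uses $0\leq\mathbb{E}_{\mu_1}[g]\leq 1$ to bound it between $\min$ and $\max$ of the two error probabilities. Your reformulation as the convex combination $(1-p)\tr[\sigma_0E_1]+p\,\tr[\sigma_1E_0]$ with $p=\mathbb{E}_{\mu_1}[g]$ is exactly this argument, so there is nothing to add.
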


We can extend this to a comparison between the excess risks
\begin{align*}
R_\mu (h) - R^*_{\mu,\mathcal{F}}:=R_\mu (h) - \inf\limits_{\eta\in\mathcal{F}} R_\mu(\eta)\ \textrm{and}\ \tilde{R}_\nu (g) - \tilde{R}^*_{\nu,\tilde{\mathcal{F}}} := \tilde{R}_\nu (g) - \inf\limits_{\gamma\in\tilde{\mathcal{F}}}\tilde{R}_\nu (\gamma)
\end{align*}
which are the quantities of interest for agnostic learning scenarios.

\begin{corollary}
With the notation as in the Holevo-Helstrom strategy it holds that
\begin{align*}
\tilde{R}_\nu (g) - \tilde{R}^*_{\nu,\tilde{\mathcal{F}}} - \lvert\tr[\sigma_0 E_1]-\tr[\sigma_1 E_0]\rvert &\leq R_\mu (h)-R^*_{\mu,\mathcal{F}}\\
&\leq \tilde{R}_\nu (g) - \tilde{R}^*_{\nu,\tilde{\mathcal{F}}} + \lvert\tr[\sigma_0 E_1]-\tr[\sigma_1 E_0]\rvert
\end{align*}
\end{corollary}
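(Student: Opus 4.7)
The plan is to derive this statement directly from the previous corollary by applying it separately to the hypothesis pair $(h,g)$ returned by the strategy and to (near-)minimizers over $\mathcal{F}$ and $\tilde{\mathcal{F}}$, and then taking differences. The key point is that the identification $f\leftrightarrow \tilde{f}$ (with $f(x)=\sigma_{\tilde{f}(x)}$) is a bijection between $\mathcal{F}$ and $\tilde{\mathcal{F}}$, so nothing is lost when passing between the two classes.

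First I would note that the argument leading to the previous corollary never used any specific property of $g$ or of the derived $h=\sigma_{g}$, only the relation $h(x)=\sigma_{g(x)}$. Hence, for every $\eta\in\mathcal{F}$ with corresponding $\tilde{\eta}\in\tilde{\mathcal{F}}$, the same chain of inequalities yields
\begin{align*}
\tilde{R}_\nu(\tilde{\eta}) - \max\{\tr[\sigma_0 E_1],\tr[\sigma_1 E_0]\} \leq R_\mu(\eta) \leq \tilde{R}_\nu(\tilde{\eta}) - \min\{\tr[\sigma_0 E_1],\tr[\sigma_1 E_0]\}.
\end{align*}
Taking the infimum over $\eta\in\mathcal{F}$ on the left and the (independent) infimum over $\tilde{\eta}\in\tilde{\mathcal{F}}$ on each bounding side, and using the bijection $\mathcal{F}\leftrightarrow\tilde{\mathcal{F}}$, gives
\begin{align*}
\tilde{R}^*_{\nu,\tilde{\mathcal{F}}} - \max\{\tr[\sigma_0 E_1],\tr[\sigma_1 E_0]\} \leq R^*_{\mu,\mathcal{F}} \leq \tilde{R}^*_{\nu,\tilde{\mathcal{F}}} - \min\{\tr[\sigma_0 E_1],\tr[\sigma_1 E_0]\}.
\end{align*}

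The second step is to combine this with the corresponding two-sided bound for $R_\mu(h)$ in terms of $\tilde{R}_\nu(g)$ from the previous corollary. Subtracting the lower bound on $R^*_{\mu,\mathcal{F}}$ from the upper bound on $R_\mu(h)$ yields
\begin{align*}
R_\mu(h) - R^*_{\mu,\mathcal{F}} \leq \tilde{R}_\nu(g) - \tilde{R}^*_{\nu,\tilde{\mathcal{F}}} + \bigl(\max\{\tr[\sigma_0 E_1],\tr[\sigma_1 E_0]\} - \min\{\tr[\sigma_0 E_1],\tr[\sigma_1 E_0]\}\bigr),
\end{align*}
and subtracting the upper bound on $R^*_{\mu,\mathcal{F}}$ from the lower bound on $R_\mu(h)$ gives the analogous inequality in the opposite direction. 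Observing that $\max\{a,b\}-\min\{a,b\}=|a-b|$ for any reals $a,b$ yields exactly the claimed two-sided bound.

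There is no real obstacle here: the whole proof is a simple bookkeeping of bounds, and no new quantum-mechanical or measure-theoretic ingredient is needed beyond what is already contained in the previous corollary. The only minor technicality is, if $R^*_{\mu,\mathcal{F}}$ or $\tilde{R}^*_{\nu,\tilde{\mathcal{F}}}$ is not attained, to invoke sequences of near-minimizers and pass to the limit; since the bounds used are linear in the respective risks, this limiting step is routine.
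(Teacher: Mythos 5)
Your proposal is correct and follows exactly the route the paper intends: the paper leaves this corollary unproved because it is precisely the consequence of applying the identity of Lemma \ref{LmmHelstromRiskComparison} (and hence the preceding corollary) uniformly to every pair $f\leftrightarrow\tilde{f}$, passing to infima, and subtracting, with $\max\{a,b\}-\min\{a,b\}=\lvert a-b\rvert$. Your remark about near-minimizers is harmless but not even needed, since an inequality valid for every matched pair passes directly to the infima.
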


So we see that solving the classical learning task in step $3$ of the Holevo-Helstrom strategy does not necessarily imply success at the overall learning task if the target accuracy is $\varepsilon<\lvert\tr[\sigma_0 E_1]-\tr[\sigma_1 E_0]\rvert$. This problem is addressed by the noise-corrected Holevo-Helstrom strategy presented in Section \ref{SctUpperBounds}.

\begin{remark}
We want to shortly discuss a special case in which the connection between $R_\mu (h)$ and $\tilde{R}_\nu (g)$ takes a particularly appealing form. Namely, assume that $\sigma_0$ and $\sigma_1$ are such that the corresponding Holevo-Helstrom measurement produces equal probabilities of error, i.e., $\tr[E_0\sigma_1]=\tr[E_1\sigma_0]$. This is clearly not true in general, take, e.g., $\sigma_0=|0\rangle\langle 0|$ and $\sigma_1=\frac{1}{2}(|0\rangle\langle 0|+|1\rangle\langle 1|)$. It does, however, hold true in certain special cases, e.g., if both $\sigma_0$ and $\sigma_1$ are pure or if $\sigma_0$ and $\sigma_1$ have the same (non-trivial) purity and $\tr[E_0]=\tr[E_1]$. (The latter is, e.g., satisfied if $\sigma_0$ and $\sigma_1$ are qubit states of the same (non-zero) purity.)\\
In this simple case our previous discussion yields $R_\mu(h)=\tilde{R}_\nu (g)$, in particular, if we succeed at the classical binary classification task in step $3$, then we also succeed at the overall classification task with quantum labels, so the quantum learning task is reduced to a classical learning problem.
\end{remark}

\section{Sample Complexity of Binary Classification with Two-Sided Classification Noise}\label{SctNoisyBinaryClassification}
Here, we discuss the sample complexity of the PAC learning task of binary classification in the presence of (two-sided) classification noise in the realizable scenario. To be in congruence with the literature on this and related problems, we will use a slightly different notation than in the main body of the paper. Namely, we will consider classical input space $\mathcal{X}$ and classical target space $\lbrace 0,1\rbrace$, a concept class $\mathcal{F}\subset\lbrace 0,1\rbrace^\mathcal{X}$, a probability measure $\mu\in \textrm{Prob}(\mathcal{X})$, and noise probabilities $0\leq\eta_0,\eta_1<\frac{1}{2}$, with which labels are flipped. Moreover, we will work with the $0$-$1$-loss function and denote the corresponding risk of a hypothesis $h$ w.r.t.~a target concept $f$ by $\textrm{err}_\mu(h; f)=\mu [h(x)\neq f(x)]$. Finally, any training data sample $S$ splits the concept class $\mathcal{F}$ into so-called $S$-equivalence classes, where $f_1,f_2\in\mathcal{F}$ are equivalent if and only if $f_1(x)=f_2(x)\ \forall x\in\mathcal{X}$ s.t. $\exists y\in\lbrace 0,1\rbrace$ with $(x,y)\in S$.\\
The basic learning strategy underlying our discussion is Algorithm \ref{MinDisagreementAlgo}. It is the natural analog of searching for a consistent function in the case of noisy labels. Namely, as such a consistent function will in general not exist, it searches for a function that disagrees with the training data on as few examples as possible.

\begin{algorithm}
\caption{Minimum Disagreement Strategy $L$ (Algorithm $5.6$ in \cite{Laird.1988})}\label{MinDisagreementAlgo}
\begin{flushleft}
\hspace*{\algorithmicindent} \textbf{Input}: confidence and accuracy parameters $0<\delta,\varepsilon\leq\frac{1}{2}$, a noise threshold $0\leq\eta_0,\eta_1\leq\eta_b<\frac{1}{2}$ and noisy training data $S=\lbrace (x_i,y_i)\rbrace_{i=1}^m$ created from $\mu\in\textrm{Prob}(\mathcal{X})$ and some $f\in\mathcal{F}$ where $0$-labels are corrupted with prob. $\eta_0$ and $1$-labels are corrupted with prob. $\eta_1$, where
\begin{align*}
m\geq &\underbrace{\max\lbrace\frac{8}{\varepsilon}\log\left(\frac{6}{\delta}\right), \frac{16d}{\varepsilon}\log\left(\frac{16d}{\varepsilon}\right)\rbrace}_{=:m_1}\\
+ &\underbrace{\frac{2}{\varepsilon (1-\exp(-\frac{1}{2}(1-2\eta_b)^2))}\ln\left(\frac{1}{d}\left(\max\lbrace\frac{8}{\varepsilon}\log\left(\frac{6}{\delta}\right), \frac{16d}{\varepsilon}\log\left(\frac{16d}{\varepsilon}\right)\rbrace^d + 1\right)\right)}_{=:m_2}.
\end{align*}
\hspace*{\algorithmicindent} \textbf{Output}: a hypothesis $h\in\mathcal{F}$.
\end{flushleft}
\begin{algorithmic}[1]
\State Let $S_1$ consist of the first $m_1$ examples in $S$. Let $S_2=S\setminus S_1$.
\State Select $\mathcal{F}_1=\lbrace f_1,\ldots,f_N\rbrace$ as representatives of the $S_1$-equivalence classes induced by $S_1$, where $N\leq (m_1)^d+1$.
\State Output a hypothesis in $\mathcal{F}_1$ which minimizes the number of disagreements with $S_2$.
\end{algorithmic}
\end{algorithm}


\begin{theorem}\emph{(see Theorems $5.7$ and $5.33$ in \cite{Laird.1988})}\\
The output hypothesis $h$ of Algorithm \ref{MinDisagreementAlgo} satisfies $\textrm{err}_\mu(h; f) \leq\varepsilon.$
\end{theorem}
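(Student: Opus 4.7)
The plan is to establish correctness of Algorithm \ref{MinDisagreementAlgo} by analysing the roles of the two sub-samples $S_1$ and $S_2$ in sequence, as in the original argument of Laird. For the first phase, I would apply Sauer's lemma to bound the number of $S_1$-equivalence classes by the growth function of $\mathcal{F}$, giving $N\leq\Pi_{\mathcal{F}}(m_1)\leq(m_1)^d+1$, which justifies step 2 of the algorithm. I would then invoke the classical $\varepsilon$-net theorem applied to the symmetric-difference class $\{f\triangle f':f'\in\mathcal{F}\}$: the prescribed value of $m_1$ is exactly the standard PAC sample bound for the realizable binary classification task and thus guarantees, with probability at least $1-\delta/2$ over $S_1$, that every $f'\in\mathcal{F}$ agreeing with $f$ on all input coordinates of $S_1$ has true error at most a suitably small multiple of $\varepsilon$. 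In particular, the representative $f^{\ast}\in\mathcal{F}_1$ of the equivalence class containing $f$ itself is guaranteed to be a "good" hypothesis.

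For the second phase, I would argue that step 3 selects a hypothesis of error at most $\varepsilon$ out of $\mathcal{F}_1$. The key quantitative identity is that for any hypothesis $h$, noise rates $\eta_0,\eta_1\leq\eta_b$, and noisy label $Y$,
\begin{align*}
\Pr[h(X)\neq Y]=c+\alpha(h)(1-2\eta_0)+\beta(h)(1-2\eta_1),
\end{align*}
where $c$ is a hypothesis-independent offset and $\alpha(h),\beta(h)$ are the false-positive and false-negative masses of $h$ against $f$. Since $1-2\eta_0,1-2\eta_1\geq 1-2\eta_b>0$, this yields $\Pr[h(X)\neq Y]-c\geq(1-2\eta_b)\,\textrm{err}_\mu(h;f)$, so that any $h\in\mathcal{F}_1$ with true error exceeding $\varepsilon$ has expected noisy disagreement strictly larger than that of $f^{\ast}$. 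A multiplicative Chernoff concentration bound applied to the at most $N$ candidates in $\mathcal{F}_1$, combined with a union bound, then forces the empirical minimum-disagreement hypothesis to be good; the unusual factor $1-\exp(-\tfrac{1}{2}(1-2\eta_b)^2)$ in the expression for $m_2$ is exactly the decay rate of this Chernoff bound in terms of the noise gap $1-2\eta_b$.

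A final union bound over the two failure events (bad $S_1$-partition, failed concentration on $S_2$) combines the two phases and yields $\textrm{err}_\mu(h;f)\leq\varepsilon$ with probability at least $1-\delta$. The principal technical obstacle, and the reason for the precise form of $m_2$, is to obtain the $1/\varepsilon$ (rather than $1/\varepsilon^2$) dependence on the accuracy parameter. A direct additive Hoeffding estimate on the empirical noisy errors would yield only the weaker $1/\varepsilon^2$ scaling; the desired $1/\varepsilon$ behaviour requires exploiting that the noisy-error gap between good and bad hypotheses is itself proportional to $\varepsilon$ and then invoking the multiplicative Chernoff bound, in direct analogy with the way $\varepsilon$-net arguments improve over $\varepsilon$-sample arguments in the noise-free realizable setting.
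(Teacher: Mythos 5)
Your reconstruction follows the same route as the argument the paper actually relies on: the paper gives no proof of its own here, but cites Laird's two-phase analysis (Theorems $5.7$ and $5.33$), in which $m_1$ guarantees via a Sauer/$\varepsilon$-net argument that the $S_1$-equivalence class of the target contains only low-error functions (noise is irrelevant here since equivalence is defined on instances only), and $m_2$ guarantees via concentration plus a union bound over the at most $(m_1)^d+1$ representatives that a low-error candidate wins the disagreement count; your explanation of why a multiplicative rather than additive deviation bound is needed for the $1/\varepsilon$ scaling is also the right one.

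There is, however, one step that fails as you state it, and it is exactly the point the paper itself waves away when it says Laird's proof (for $\eta_0=\eta_1$) is ``easily generalized''. From your identity $\Pr[h(X)\neq Y]-c=\alpha(h)(1-2\eta_0)+\beta(h)(1-2\eta_1)$ you conclude that any $h\in\mathcal{F}_1$ with $\textrm{err}_\mu(h;f)>\varepsilon$ has strictly larger expected noisy disagreement than the good representative $f^\ast$. This only follows if the noisy disagreement is monotone in the true error, which holds for $\eta_0=\eta_1$ but not in general: with unequal rates the false-positive and false-negative masses are weighted differently. Concretely, take $\eta_0=0$, $\eta_1=\eta_b=0.4$, let $f^\ast$ have error $\varepsilon/2$ consisting entirely of false positives (so $\Pr[f^\ast\neq Y]-c=\varepsilon/2$) and let $h$ have error $1.5\varepsilon$ consisting entirely of false negatives (so $\Pr[h\neq Y]-c=0.3\varepsilon<\varepsilon/2$); then the bad hypothesis has the smaller expected disagreement, and your phase-1 guarantee $\textrm{err}_\mu(f^\ast;f)\leq\varepsilon/2$ (which is all the stated $m_1$ buys) is not enough to separate them. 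The correct conclusion from your identity is only $\textrm{err}_\mu(h;f)\leq\bigl(\textrm{err}_\mu(f^\ast;f)+\textrm{slack}\bigr)/(1-2\eta_b)$ for the empirical minimizer $h$, so to reach accuracy $\varepsilon$ one must either restrict to symmetric noise (Laird's actual setting) or strengthen phase 1 to accuracy of order $(1-2\eta_b)\varepsilon$, i.e., inflate $m_1$ by a factor $1/(1-2\eta_b)$ (which is harmless for the final $\mathcal{O}\bigl(\tfrac{d+\log\nicefrac{1}{\delta}}{\varepsilon(1-2\eta_b)^2}\bigr)$ bound, but is not what the algorithm as written prescribes). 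Your proof needs to make this comparison of $\Pr[h\neq Y]-c$ against an explicit upper bound on $\Pr[f^\ast\neq Y]-c$, with the phase-1 accuracy chosen accordingly, rather than assert the separation directly.
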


Laird's original proof that this algorithm solves the PAC learning problem is for the case $\eta_0=\eta_1$. It is, however, easily generalized to our case because we still assume the same noise bound on both error rates. (We only have to adapt the expression for the error rate and the corresponding Hoeffding bounds.)\\

In order to apply the reasoning by \cite{Hanneke.2016} we need to slightly reformulate the result of this algorithm s.t.~we obtain a bound on the error in terms of the sample size. When following the proof of Theorem $5.7$ in \cite{Laird.1988} we see that $m_1$ is used to ensure that there is a hypothesis which performs better than some given error threshold and $m_2$ is used to ensure that such a hypothesis is actually chosen. In particular, if we use the error bound by \cite{Blumer.1989} in terms of the sample size, we see that $m_2$ depends on $m_1$ as follows:
\begin{align*}
m_2
&= \frac{2}{1-\exp(-\frac{1}{2}(1-2\eta_b)^2)}\cdot \frac{m_1}{2}\cdot \frac{1}{d\log\left(\frac{2em_1}{d}\right)+\log\left(\frac{2}{\delta}\right)} \cdot \ln\left(\frac{1}{\delta} (m_1^d + 1)\right).
\end{align*}

\begin{remark}
Note that we cannot directly use the tighter error bound in terms of the sample complexity proved by \cite{Hanneke.2016} here because Laird's proof explicitly makes use of the strategy employed by \cite{Blumer.1989} which works via consistency with a given training sample.
\end{remark}

We can now easily bound
\begin{align*}
m= m_1 + m_2
\leq m_1 \cdot\left(1 + \frac{1}{1-\exp(-\frac{1}{2}(1-2\eta_b)^2)} \cdot\frac{1}{\log(e)}\cdot\frac{1}{1 - \frac{d\log\left(\frac{d}{2e}\right)}{d \log\left( m_1\right) + \log\left(\frac{2}{\delta}\right)}}\right).
\end{align*}

If we now further assume that $\delta>0$ is chosen s.t. $\log\left(\frac{2}{\delta}\right)>2d\log\left(\frac{d}{2e}\right)$, then we can continue upper-bounding this and obtain
\begin{align*}
m = m_1 + m_2 \leq (1 + C(\eta_b))m_1,
\end{align*}
where we defined $C(\eta_b):=\frac{2}{1-\exp(-\frac{1}{2}(1-2\eta_b)^2)}$. It is easy to check that for $0\leq \eta_b < \frac{1}{2}$, $C(\eta_b)\leq \frac{4}{(1-2\eta_b)^2}$, which well be used later on.\\
Hence, using a sample of size $m\geq 2(1+C(\eta_b))$ for the minimum disagreement strategy with $m_2=\lceil\frac{C(\eta_b)}{1 + C(\eta_b)} m\rceil$ and $m_1=m-m_2$ gives - using $\frac{m}{2(1+C(\eta_b))}\leq m_1\leq \frac{m}{1+C(\eta_b)}\leq \frac{m_2}{C(\eta_b)}$ - an error guarantee of
\begin{align}
\textrm{err}_\mu (h;f^*)
&\leq \frac{4}{m_1}\left(d\log\left(\frac{2em_1}{d}\right)+\log\left(\frac{2}{\delta}\right)\right)\\
&\leq \frac{8\cdot (1 + C(\eta_b))}{m}\left(d\log\left(\frac{2em}{d\cdot (1+C(\eta_b))}\right)+\log\left(\frac{2}{\delta}\right)\right).\label{ErrBoundBaseLearner}
\end{align}

With this suboptimal base learner we will now follow the strategy by \cite{Hanneke.2016} in order to build a better learner from it. Note that Hanneke's proof includes several steps in which the existence of a function consistent with the respective subsample is ensured. This is not necessary in our case because the Minimum Disagreement Strategy does not require a consistent function to exist.\\

We recall the algorithm for preprocessing the training data to generate subsamples as introduced in \cite{Hanneke.2016} in our Algorithm \ref{SubsampleAlgo}.

\begin{algorithm}
\caption{Subsample Generation Algorithm $\mathbb{A}(\cdot,\cdot)$ \cite{Hanneke.2016} \label{SubsampleAlgo}}
\begin{flushleft}
\hspace*{\algorithmicindent} \textbf{Input}: two finite sets $S$ and $T$.\\
\hspace*{\algorithmicindent} \textbf{Output}: a finite set $\mathbb{A}(S;T)$ of subsets of $S\cup T$.
\end{flushleft}
\begin{algorithmic}[1]
\If{$|S|\leq 3$,}
	\State Output $\lbrace S\cup T\rbrace$.
\Else
	\State Divide $S=\lbrace s_1,\ldots,s_{|S|}\rbrace$ into subsets in the following way: 
    \begin{align*}
    S_0&=\lbrace s_1,\ldots,s_{|S|-3\lfloor |S|/4 \rfloor}\rbrace,\\
    S_1&=\lbrace s_{|S|-3\lfloor |S|/4\rfloor + 1},\ldots,s_{|S|-2\lfloor |S|/4\rfloor}\rbrace,\\ 
    S_2&=\lbrace s_{|S|-2\lfloor |S|/4\rfloor + 1},\ldots,s_{|S|-\lfloor |S|/4\rfloor}\rbrace,\\
    S_3&=\lbrace s_{|S|-\lfloor |S|/4\rfloor + 1},\ldots,s_{|S|}\rbrace.
    \end{align*}
\EndIf
\State Return $\mathbb{A}(S_0;S_2\cup S_3\cup T) \cup \mathbb{A}(S_0;S_1\cup S_3\cup T)\cup \mathbb{A}(S_0;S_1\cup S_2\cup T)$.
\end{algorithmic}
\end{algorithm}


\begin{theorem}\label{NoisySampleComplexityUpperBound}
Let $\varepsilon\in (0,1)$, $\delta\in (0,2\cdot (\tfrac{2e}{d})^d)$ and $\eta_b \in (0,\tfrac{1}{2})$. Let $\mathcal{F}\subset \lbrace 0,1\rbrace^\mathcal{X}$ be a function class of VC-dimension $d$. Then $m=m(\varepsilon,\delta)=\mathcal{O}\left(\frac{1}{\varepsilon (1-2\eta_b)^2} \left( d + \log\left(\frac{1}{\delta}\right)\right)\right)$ noisy examples from a function in $\mathcal{F}$ are sufficient for binary classification in the presence of two-sided classification noise with error probabilities $0\leq\eta_0,\eta_1 <\eta_b$ with accuracy $\varepsilon$ and confidence $1-\delta$.\\
\end{theorem}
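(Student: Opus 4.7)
The plan is to combine the suboptimal base learner given by Laird's Minimum Disagreement Strategy (Algorithm \ref{MinDisagreementAlgo}) with Hanneke's subsampling-and-majority-vote construction (Algorithm \ref{SubsampleAlgo}) in order to shave off the extraneous $\log(m/d)$ factor still present in $(\ref{ErrBoundBaseLearner})$. Concretely, I would run Algorithm \ref{SubsampleAlgo} on the noisy training sample $S$ (with $T=\emptyset$) to produce a collection of $\mathcal{O}(m^{\log_4 3})$ subsamples of roughly equal size, apply the Minimum Disagreement base learner to each subsample to obtain a candidate hypothesis, and output the majority vote as the final hypothesis $\hat h$.

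First, I would record what the base learner already gives us: by the derivation leading to $(\ref{ErrBoundBaseLearner})$, together with the estimate $C(\eta_b)\leq 4/(1-2\eta_b)^2$, the hypothesis produced from a fresh i.i.d.\ sample of size $n$ satisfies $\textrm{err}_\mu(\hat h; f) \leq \frac{C}{(1-2\eta_b)^2\, n}\bigl(d\log(n/d) + \log(1/\delta)\bigr)$ with probability at least $1-\delta$, for an absolute constant $C$. The task is to turn this bound, which carries a parasitic $\log(n/d)$, into the optimal $\frac{1}{n(1-2\eta_b)^2}(d+\log(1/\delta))$ rate.

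Following Hanneke's recursive analysis, I would then exploit that if the majority-vote hypothesis $\hat h$ errs on a point $x$, then at least half of the subsample hypotheses must err on $x$. Writing $\textrm{err}_\mu(\hat h;f)$ as an expectation over $x$ and applying this observation bounds $\textrm{err}_\mu(\hat h;f)$ by a constant times the average excess error (against the true concept $f$) of a single subsample hypothesis. Iterating the recursion along the three-way split of Algorithm \ref{SubsampleAlgo} produces a geometric cancellation in which the $\log$-factor introduced at each level is absorbed by the reduction in subsample size, yielding the desired $\mathcal{O}\bigl(\tfrac{1}{\varepsilon(1-2\eta_b)^2}(d+\log(1/\delta))\bigr)$ sample bound.

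The main obstacle I anticipate is adapting Hanneke's consistency-based bookkeeping to the noisy setting. Hanneke uses the fact that a sample-consistent learner, given a subsample containing the example $(x,f(x))$, cannot err on $x$. In our regime no function need be consistent with the noisy labels, so this is replaced by the weaker assertion that the Minimum Disagreement learner has only controlled probability (quantified via $(\ref{ErrBoundBaseLearner})$) of disagreeing with the target $f$ on a fresh point. Pushing this through the recursion requires care to ensure that the failure probabilities accumulated along the tree of subsamples combine, together with the $(1-2\eta_b)^{-2}$ factor inherited from the base learner, into the claimed final bound without reintroducing logarithmic overhead; the constraint $\delta<2\cdot(2e/d)^d$ in the hypothesis is there precisely to keep the $\log(1/\delta)$ term dominant over the $d\log(d/2e)$ terms that appear in this bookkeeping.
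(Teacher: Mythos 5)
Your proposal takes essentially the same route as the paper: Laird's minimum-disagreement algorithm serves as the base learner inside Hanneke's subsample-and-majority-vote recursion, and the lack of a consistent hypothesis is handled exactly as you anticipate, by invoking the bound (\ref{ErrBoundBaseLearner}) for the base learner under the conditional distribution on the error region of the other branches' hypotheses. The only simplification in your sketch is the inner bookkeeping: the logarithmic factor is removed not by the bare ``half the voters err'' observation but by bounding the pairwise intersections $\mu[ER(h)\cap ER(h_i)]$, with the induction hypothesis controlling $\mu[ER(h_i)]$ and the conditional base-learner bound controlling the relative error --- which is precisely the Hanneke analysis you defer to, so the plan is sound.
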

\begin{proof}
This proof is analogous to the proof of Theorem $2$ in \cite{Hanneke.2016} with some minor simplifications and adaptations and is given here only for the sake of completeness.\\
Fix an $f^*\in\mathcal{F}$ and a probability measure $\mu$ over $\mathcal{X}$. Denote by $S=S_{1:m}$ the corresponding noisy training data. For any classifier $h$ denote by $ER(h)=\lbrace x\in\mathcal{X}|h(x)\neq f^*(x)\rbrace$ the set of instances on which $h$ errs.\\
Fix $c=7200$. We will show by strong induction that $\forall m'\in\IN$, $\forall\delta'\in(0,\ldots)$ and for all finite sequences $T'$ with probability $\geq 1-\delta'$ the classifier
\begin{align*}
\hat{h}_{m',T'} = \textrm{Majority}\big( L(\mathbb{A}(S_{1:m'};T'))\big)
\end{align*}
satisfies the error bound
\begin{align}\label{ErrBoundImprovedClassifier}
\textrm{err}_\mu(\hat{h}_{m',T'}, f^*)\leq \frac{cC(\eta_b)}{1+m'}\left( d + \ln\left(\frac{18}{\delta'}\right)\right).
\end{align}
As base case consider $m'\leq C(\eta_b)c\cdot\ln(18e) - 1$. In this case, for any $\delta'\in(0,1)$ and for any finite sequence $T'$, we trivially have
\begin{align*}
\textrm{err}_\mu(\hat{h}_{m',T'}, f^*)
&\leq 1\\
&\leq \frac{c\cdot C(\eta_b)}{1+m'}\left( d + \ln(18)\right)\\
&\leq \frac{c\cdot C(\eta_b)}{1+m'}\left( d + \ln\left(\frac{18}{\delta'}\right)\right),
\end{align*}
as desired.\\

For the induction step, assume that for some $m>C(\eta_b)c\cdot\ln(18e) - 1$ for all $m'\in\IN$ with $m'<m$, for all $\delta'(0,2\cdot (\tfrac{2e}{d})^d)$ and for all finite sequences $T'$ with probability $\geq 1-\delta'$, (\ref{ErrBoundImprovedClassifier}) holds.\\
Note that by our choice of $c$ we have $C(\eta_b)c\cdot\ln(18e) - 1\geq 3$. Thus $|S_{1:m}|\geq 4$ and therefore $\mathbb{A}\left(S_{1:m};T\right)$ returns in step $3$. Let $S_0,S_1,S_2,S_3$ be as in $\mathbb{A}(S;T)$. Denote $T_1=S_2\cup S_3\cup T$, $T_2=S_1\cup S_3\cup T$, $T_3=S_1\cup S_2\cup T$ and $h_i = \textrm{Majority}\left( L (\mathbb{A}(S_0;T_i))\right)$ for each $i\in\lbrace 1,2,3\rbrace$.\\
Note that $S_0=S_{1:(m-3\lfloor \frac{m}{4}\rfloor)}$. As $m\geq 4$, $1\leq m-3\lfloor \frac{m}{4}\rfloor < m$. Also, $h_i = \hat{h}_{(m-3\lfloor \frac{m}{4}\rfloor),T_i}$. So by the induction hypothesis applied under the conditional distribution given $S_1,S_2,S_3$, which are independent of $S_0$, combined with the law of total probability, for every $i\in\lbrace 1,2,3\rbrace$ there exists an event $E_i$ of probability $\geq 1-\frac{\delta}{9}$ on which
\begin{align}\label{SingleErrBoundInductionStep}
\mu[ER(h_i)]
\leq \frac{cC(\eta_b)}{1 + |S_0|} \left( d + \ln\left(\frac{9\cdot 18}{\delta}\right)\right)
\leq \frac{4cC(\eta_b)}{m}\left( d + \ln\left(\frac{9\cdot 18}{\delta}\right)\right).
\end{align}
Next, fix an $i\in\lbrace 1,2,3\rbrace$ and write $\lbrace (\tilde{X}_{i,1},\tilde{Y}_{i,1}),\ldots,(\tilde{X}_{i_,N_i},\tilde{Y}_{i,N_i})\rbrace := S_i\cap (ER(h_i)\times\mathcal{Y})$. As $h_i$ and $S_i$ are independent, $\tilde{X}_{i,1},\ldots,\tilde{X}_{i,N_i}$ are conditionally independent given $h_i$ and $N_i$. Therefore we can apply the error bound \eqref{ErrBoundBaseLearner} for our base learner $L$ under the conditional distribution given $h_i$ and $N_i$ to conclude: There exists an event $E'_i$ of probability $\geq 1-\frac{\delta}{9}$ s.t., if $N_i>0$, then the output $h$ of the base learner $L$ upon input of $S_i\cap (ER(h_i)\times\mathcal{Y})$ satisfies
\begin{align*}
\textrm{err}_{\mu(\cdot|ER(h_i))}(h,f^*)
\leq \frac{8 ( 1+ C(\eta_b))}{N_i}\left(d\log\left(\frac{2eN_i}{d( 1+ C(\eta_b))}\right) + \log\left(\frac{18}{\delta}\right)\right).
\end{align*}
In particular, on $E'_i$ (if $N_i>0$) every $h\in\bigcup\limits_{j\in\lbrace 1,2,3\rbrace\setminus\lbrace i\rbrace} L\left(\mathbb{A}(S_0;T_j)\right)$ satisfies
\begin{align}
\mu[ER(h)\cap ER(h_i)]
&= \mu[ER(h_i)]\mu[ER(h)|ER(h_i)]\\
&= \mu[ER(h_i)] \textrm{err}_{\mu(\cdot | ER(h_i))}(h,f^*)\\
&\leq \mu[ER(h_i)]\frac{8 ( 1+ C(\eta_b))}{N_i}\left(d\log\left(\frac{2eN_i}{d( 1+ C(\eta_b))}\right) + \log\left(\frac{18}{\delta}\right)\right). \label{JointErrBoundInductionStep}
\end{align}
Using Chernoff bounds we get that there exists an event $E''_i$ of probability $\geq 1-\frac{\delta}{9}$ s.t., if $\mu[ER(h_i)]\geq \frac{2 (\tfrac{10}{3})^2}{\lfloor \frac{m}{4}\rfloor} \ln\left(\frac{9}{\delta}\right)$, then $N_i\geq\frac{7}{10}\mu[ER(h_i)]\lfloor \frac{m}{4}\rfloor$. In particular, on $E''_i$ we have the implication
\begin{align*}
\mu[ER(h_i)]\geq \frac{2 (\tfrac{10}{3})^2}{\lfloor \frac{m}{4}\rfloor} \ln\left(\frac{9}{\delta}\right)\ 
\Rightarrow \
N_i>0.
\end{align*}
If we now combine this with (\ref{SingleErrBoundInductionStep}) and (\ref{JointErrBoundInductionStep}), then we see: On $E_i\cap E'_i \cap E''_i$, if $\mu[ER(h_i)]\geq \frac{2 (\tfrac{10}{3})^2}{\lfloor \frac{m}{4}\rfloor} \ln\left(\frac{9}{\delta}\right)$, then every $h\in\bigcup\limits_{j\in\lbrace 1,2,3\rbrace\setminus\lbrace i\rbrace} L\left(\mathbb{A}(S_0;T_j)\right)$ satisfies
\begin{align*}
\mu[ER(h)\cap ER(h_i)]
&\leq \frac{80\cdot C(\eta_b)}{7 \lfloor \frac{m}{4}\rfloor} \left( d\log\left(\frac{2e\cdot\tfrac{7}{10}\cdot \mu[ER(h_i)]\lfloor \frac{m}{4}\rfloor}{dC(\eta_b)}\right) + \log\left(\frac{18}{\delta}\right)\right)\\
&\leq \frac{80\cdot C(\eta_b)}{7 \lfloor \frac{m}{4}\rfloor} \left( d\log\left(\frac{\tfrac{7e}{5}\cdot c\left(d+\ln\left(\tfrac{9\cdot 18}{\delta}\right)\right)}{d}\right) + \log\left(\frac{18}{\delta}\right)\right)\\
&\leq \frac{80\cdot C(\eta_b)}{7 \lfloor \frac{m}{4}\rfloor} \left( d\log\left(\frac{2}{5}c\left(\frac{7}{2}e + \frac{7e}{d}\ln\left(\frac{18}{\delta}\right)\right)\right) + \log\left(\frac{18}{\delta}\right)\right)\\
&\leq \frac{80\cdot C(\eta_b)}{7\ln(2) \lfloor \frac{m}{4}\rfloor} \left( d\ln\left(\frac{9ec}{5}\right) + 8 \ln\left(\frac{18}{\delta}\right)\right),
\end{align*}
where the last step uses the technical Lemma $5$ from the Appendix of \cite{Hanneke.2016}. As $m>C(\eta_b)c\cdot\ln(18e) - 1>3200$, we have $\lfloor\frac{m}{4}\rfloor>\frac{m-4}{4}>\frac{799}{800}\frac{m}{4} >\frac{799}{800}\frac{3200}{3201}\frac{m+1}{4}$. We use this relaxation and compute the logarithmic factors to obtain from the above that
\begin{align*}
\mu[ER(h)\cap ER(h_i)]
\leq \frac{600 \cdot C(\eta_b)}{m+1}\left( d + \ln\left(\frac{18}{\delta}\right)\right).
\end{align*}
Moreover, if $\mu[ER(h_i)]<\frac{23}{\lfloor \frac{m}{4}\rfloor}\ln\left(\frac{9}{\delta}\right)$, then simply because $\mu$ is a probability measure, we conclude
\begin{align*}
\mu[ER(h)\cap ER(h_i)] 
\leq \mu[ER(h_i)]
< \frac{23}{\lfloor \frac{m}{4}\rfloor}\ln\left(\frac{9}{\delta}\right)
< \frac{600\cdot C(\eta_b)}{m+1}\left( d + \ln\left(\frac{18}{\delta}\right)\right).
\end{align*}
Hence, no matter what value $\mu[ER(h_i)]$ takes, on the event $E_i\cap E'_i\cap E''_i$ we have for all $h\in\bigcup\limits_{j\in\lbrace 1,2,3\rbrace\setminus\lbrace i\rbrace} L\left(\mathbb{A}(S_0;T_j)\right)$ that 
\begin{align*}
\mu[ER(h)\cap ER(h_i)]
\leq \frac{600\cdot C(\eta_b)}{m+1}\left( d + \ln\left(\frac{18}{\delta}\right)\right).
\end{align*}
Now denote $h_{\textrm{maj}} = \hat{h}_{m,T} = \textrm{Majority}(L(\mathbb{A}(S;T)))$ for $S=S_{1:m}$. By definition of the majority function, for any $x\in\mathcal{X}$ at least $\frac{1}{2}$ of the classifiers $h$ in the sequence $L(\mathbb{A}(S;T))$ satisfy $h(x)=h_{\textrm{maj}}(x)$. So by the strong form of the pigeon hole principle, there exists an $i\in\lbrace 1,2,3\rbrace$ s.t. $h_i(x)=h_{\textrm{maj}}(x)$. Also, since each $\mathbb{A}(S_0;T_j)$ contributes an equal number of entries to $\mathbb{A}(S;T)$, for each $i\in\lbrace 1,2,3\rbrace$, at least $\frac{1}{4}$ of the classifiers $h\in\bigcup\limits_{j\in\lbrace 1,2,3\rbrace\setminus\lbrace i\rbrace} L\left(\mathbb{A}(S_0;T_j)\right)$ satisfy $h(x)=h_{\textrm{maj}}(x)$.\\
In particular, if $I$ is a random variable independent of the training data and distributed uniformly on $\lbrace 1,2,3\rbrace$ and if $\tilde{h}$ is a random variable conditionally given $I$ and $S$ uniformly distributed on $\bigcup\limits_{j\in\lbrace 1,2,3\rbrace\setminus\lbrace I\rbrace} L\left(\mathbb{A}(S_0;T_j)\right)$, then for any fixed $x\in ER(h_{\textrm{maj}})$, with conditional probability $\geq\frac{1}{12}$, $h_I(x)=\tilde{h}(x)=h_{\textrm{maj}}(x)$ and thus $x\in ER(h_I)\cap ER(\tilde{h})$.\\
Hence, for a random variable $X\sim\mu$ independent of the data, of $I$ and of $\tilde{h}$ we can now conclude
\begin{align*}
\mathbb{E}[\mu[ER(h_i)]\cap ER(\tilde{h}))|S]
&= \mathbb{E}[\mathbb{P}[X\in ER(h_I)\cap ER(\tilde{h})|I,\tilde{h},S]|S]\\
&= \mathbb{E}[\mathds{1}_{ X\in ER(h_I)\cap ER(\tilde{h})}|S]\\
&= \mathbb{E}[\mathbb{P}[X\in ER(h_I)\cap ER(\tilde{h})|S,X]|S]\\
&\geq \mathbb{E}[\mathbb{P}[X\in ER(h_I)\cap ER(\tilde{h})|S,X]\mathds{1}_{X\in ER(h_{\textrm{maj}})}|S]\\
&\geq \mathbb{E}[\tfrac{1}{12}\mathds{1}_{X\in ER(h_{\textrm{maj}})}|S]\\
&\geq \frac{1}{12}\textrm{err}_\mu (h_{\textrm{maj}};f^*).
\end{align*}
So on the event $\bigcap\limits_{i\in\lbrace 1,2,3\rbrace} E_i\cap E'_i \cap E''_i$ it holds that
\begin{align*}
\textrm{err}_\mu (h_{\textrm{maj}};f^*)
&\leq 12 \mathbb{E}[\mu[ER(h_i)\cap ER(\tilde{h})]|S]\\
&\leq 12 \max\limits_{i\in\lbrace 1,2,3\rbrace} \max\limits_{j\in\lbrace 1,2,3\rbrace\setminus\lbrace i\rbrace} \max\limits_{h\in L(\mathbb{A}(S_0;T_j))} \mu[ER(h_i)\cap ER(h)]\\
&< \frac{7200 \cdot C(\eta_b)}{m+1}\left(d + \ln\left(\frac{18}{\delta}\right)\right)\\
&= \frac{c\cdot C(\eta_b)}{m+1}\left(d + \ln\left(\frac{18}{\delta}\right)\right).
\end{align*}
Since by the union bound the event $\bigcap\limits_{i\in\lbrace 1,2,3\rbrace} E_i\cap E'_i \cap E''_i$ has probability $\geq 1-\delta$, the induction step is complete.\\

It remains to use the claim just proven by induction to derive the desired sample complexity upper bound. For this, take $T=\emptyset$ and note that for $m\geq\lfloor \frac{cC(\eta)}{\varepsilon}\left(d + \ln\left(\frac{18}{\delta}\right)\right)\rfloor$ the right hand side of (\ref{ErrBoundImprovedClassifier}) is $\leq\varepsilon$. Therefore such a sample size suffices for successful learning using $\textrm{Majority}(L(\mathbb{A}(\cdot;\emptyset)))$. Now recall the discussion before the Theorem, where we observed that $C(\eta_b)\leq \frac{4}{(1-2\eta_b)^2}$, to finish the proof.
\end{proof}

\end{document}